\documentclass[a4paper,fleqn]{cas-sc}
\usepackage[numbers]{natbib}

%\documentclass[10pt, conference, letterpaper]{IEEEtran}
%\IEEEoverridecommandlockouts
\usepackage{booktabs} % For formal tables

%\setcopyright{rightsretained}

%\usepackage{geometry}                % See geometry.pdf to learn the layout options. There are lots.
%\geometry{letterpaper}                   % ... or a4paper or a5paper or ... 
%\geometry{landscape}                % Activate for for rotated page geometry
%\usepackage[parfill]{parskip}    % Activate to begin paragraphs with an empty line rather than an indent
\usepackage{graphicx}
\usepackage{amsmath}
\usepackage{amsthm}
\usepackage[noend]{algpseudocode}
\usepackage{epstopdf}
\usepackage{algorithm}
\usepackage{color}
\usepackage{subcaption}
\usepackage{multirow}
\usepackage{framed}
%\usepackage{array}
%\DeclareGraphicsRule{.tif}{png}{.png}{`convert #1 `dirname #1`/`basename #1 .tif`.png}
\renewcommand{\P}{\mathbb{P}}

\newcommand{\np}[1]{\textcolor{black}{#1}}

\usepackage{gensymb}
\usepackage{latexsym,epsfig,color,verbatim}
\usepackage{pifont}
\usepackage{slashbox}
\usepackage{bbm}
\newtheorem{define}{Definition}
\newtheorem{remark}{\bf Remark}
%\newtheorem{conjecture}{Conjecture}

%\DeclareGraphicsRule{.tif}{png}{.png}{`convert #1 `dirname #1`/`basename #1 .tif`.png}

%\newtheorem{remark}{Remark}

\newtheorem{conjecture}{Conjecture}
\newtheorem{lemma}{Lemma}
\newtheorem{theorem}{Theorem}
\newtheorem{corollary}{Corollary}
\newtheorem*{theorem*}{\bf Theorem}

\begin{document}
\shorttitle{One-dimensional Distributed Service Networks}
\shortauthors{Panigrahy et~al.}
\title{Resource Allocation in One-dimensional Distributed Service Networks with Applications}

\author[1]{Nitish K. Panigrahy}
\cormark[1]
\ead{nitish@cs.umass.edu}

\author[2]{Prithwish Basu}
\ead{prithwish.basu@raytheon.com}

\author[3]{Philippe Nain}
\ead{philippe.nain@inria.fr}

\author[1]{Don Towsley}
\ead{towsley@cs.umass.edu}

\author[4]{Ananthram Swami}
\ead{ananthram.swami.civ@mail.mil}

\author[4]{Kevin S. Chan}
\ead{kevin.s.chan.civ@mail.mil}

\author[5]{Kin K. Leung}
\ead{kin.leung@imperial.ac.uk}

\address[1]{University of Massachusetts Amherst, MA 01003, USA}
\address[2]{Raytheon BBN Technologies, Cambridge, MA 02138, USA}
\address[3]{Inria, 06902 Sophia Antipolis Cedex, France}
\address[4]{Army Research Laboratory, Adelphi, MD 20783, USA.}
\address[5]{Imperial College London, London SW72AZ, UK.}
%\address[3]{University of Massachusetts Amherst, MA 01003, USA}
%\address[4]{University of Massachusetts Amherst, MA 01003, USA}

\cortext[cor1]{Corresponding author}
\cortext[cor1]{The material in this paper was presented in part at the IEEE International Symposium on Modeling, Analysis, and Simulation of Computer and Telecommunication Systems (MASCOTS), Rennes, France in 2019 and in Workshop on MAthematical performance Modeling and Analysis (MAMA 2018).}

%
%\author{Nitish K. Panigrahy$^\dagger$, Prithwish Basu$^\P$, Philippe Nain $^\N$, Don Towsley$^\dagger$, Ananthram Swami$^\ddagger$, \\Kevin S. Chan$^\ddagger$ and Kin K. Leung$^\mathsection$\\ {\normalsize $^\dagger$ University of Massachusetts Amherst, MA, USA. Email: \{nitish, towsley\}@cs.umass.edu}
%\\{\normalsize$^\P$Raytheon BBN Technologies, Cambridge, MA 02138, USA. Email:prithwish.basu@raytheon.com}
%\\{\normalsize$^\N$ Inria, 06902 Sophia Antipolis Cedex, France. Email: philippe.nain@inria.fr}
%\\{\normalsize$^\ddagger$Army Research Laboratory, Adelphi, MD 20783, USA. Email:\{ananthram.swami, kevin.s.chan\}.civ@mail.mil}
%\\{\normalsize$^\mathsection$Imperial College London, London SW72AZ, UK. Email: kin.leung@imperial.ac.uk}}

\begin{abstract}
We consider assignment policies that allocate resources to users, where both resources and users are located on a one-dimensional line $[0,\infty)$. First, we consider unidirectional assignment policies that allocate resources only to users located to their left. We propose the Move to Right ({\it MTR}) policy, which scans from left to right assigning nearest rightmost available resource to a user, and contrast it to the Unidirectional Gale-Shapley ({\it UGS}) matching policy. While both policies among all unidirectional policies, minimize the expected distance traveled by a request ({\it request distance}), {\it MTR} is fairer. Moreover, we show that when user and resource locations are modeled by statistical point processes, and resources are allowed to satisfy more than one user, the spatial system under unidirectional policies can be mapped into bulk service queueing systems,  thus allowing the application of many queueing theory results that yield closed form expressions. As we consider a case where different resources can satisfy different numbers of users, we also generate new results for bulk service queues.  We also consider bidirectional policies where there are no directional restrictions on resource allocation and develop an algorithm for computing the optimal assignment which is more efficient than known algorithms in the literature when there are more resources than users. Numerical evaluation of performance of unidirectional and bidirectional allocation schemes yields design guidelines beneficial for resource placement. \np{Finally, we present a heuristic algorithm, which leverages the optimal dynamic programming scheme for one-dimensional inputs to obtain approximate solutions to the optimal assignment problem for the two-dimensional scenario and empirically yields request distances within a constant factor of the optimal solution.}
\end{abstract}

\begin{keywords}
Resource Allocation \sep 1-D service network \sep Queueing  Theory\sep Distributed Network\sep Dynamic Programming 
\end{keywords}

\maketitle

\section{Introduction}\label{sec:intro}
%%%!TEX PS-program = pdflatexmk
%%!TEX root = 1d_mascotss19.tex

The past few years have witnessed significant growth in the use of distributed network analytics involving agile code, data and computational  resources. In many such networked systems, for example, Internet  of  Things \cite{atzori10}, a large number of computational and storage resources are  widely distributed in the physical world. These resources are accessed by various end users/applications that are also distributed over the physical space. Assigning users or applications to resources efficiently is key to the sustained high-performance operation of the system. 

In some systems, requests are transferred over a network to a server that provides a needed resource. In other systems, servers are mobile and physically move to the user making a request.  Examples of the former type of service include accessing storage resources over a wireless network to store files and requesting computational resources to run image processing tasks; whereas an example of the latter type of service is the arrival of ride-sharing vehicles to the user's location over a road transportation network.

Not surprisingly, the spatial distribution of resources and users\footnote{We use the terms ``users'' and ``requesters'' interchangeably and same holds true for the terms ``resources'' and ``servers''.} in the network is an important factor in determining the overall performance of the service. A key measure of performance is \emph{average request distance}, that is average distance between a user and its allocated resource/server (where distance is measured on the network). This directly translates to latency incurred by a user when accessing the service, which is arguably among the most important criteria in distributed service applications. For example, in wireless networks,  signal attenuation is strongly coupled to request distance, therefore developing allocation policies to minimize request distance can help reduce energy consumption, an important concern in battery-operated wireless networks. Another important practical constraint in distributed service networks is \emph{service capacity}. For example, in network analytics applications, a networked storage device can only support a finite number of concurrent users; similarly, a computational resource can only support a finite number of concurrent processing tasks. Likewise, in physical service applications like ride-sharing, a vehicle can pick up a finite number of passengers at once.

Therefore, a primary problem in such distributed service networks is to efficiently assign each user to a suitable resource so as to minimize average request distance and ensure no resource serves more users than its capacity. If the entire system is being managed by a single administrative entity such as a ride sharing service, or a datacenter network where analytics tasks are being assigned to available CPUs, there are economic benefits in minimizing the average request distance across all (user, resource) pairs, which is tantamount to minimizing the average delay in the system.

The general version of this capacitated assignment problem can be solved by modeling it as a \emph{minimum cost flow} problem on graphs~\cite{Ahuja93} and running the \emph{network simplex algorithm}~\cite{Orlin97}. However, if the network has a low-dimensional structure and some assumptions about the spatial distributions of users and resources hold, more efficient methods can be developed.

%In a distributed analytics network, requesters generating analytic tasks and servers providing services to the tasks are distributed over a geographic region. We collectively term the requesters and the servers as ``devices''. Each analytic task may require a set of resources: computation, code and data resources to achieve computation objectives. The resources are placed on physical devices. For successful completion of each analytic task, an algorithm needs to execute the following functions \cite{destounis16}: (i) {\it Placement of computing resources:} i.e. determining which devices will perform computation for the analytic task. (ii) {\it Retrieval of data/code resources:} Retrieval may involve communication with requesters that provide data/code directly or with the data/code servers that store them. \!(iii) {\it Single/Multi-hop routing:} This involves transferring the data/code through the network to the computing devices. (iv){\it Handling limited computational capacity:} As computational devices may have limited capacity, a viable algorithm should correctly place computing resources on devices adhering to the capacity constraints. 

In this paper, we consider two one-dimensional network scenarios that motivate the study of this special case of the user-to-resource assignment problem. 

The first scenario is  ride-hailing on a one-way street where vehicles move right to left. If the vehicles of a ride-sharing company are distributed along the street at a certain time, and users equipped with smartphone ride-hailing apps request service, the system attempts to assign vehicles with spare capacity located towards the right of the users so as to minimize average ``pick up" distance. Abadi et al.~\cite{Abadi17} introduced this problem and presented a policy known as Unidirectional Gale-Shapley\footnote{We rename \emph{queue matching} defined in \cite{Abadi17}  as Unidirectional Gale-Shapley Matching to avoid overloading the term \emph{queue}.} matching ({\it UGS})  minimize average pick up distance. In this policy, all users concurrently emit rays of light toward their right and each user is matched with the vehicle that first receives the emitted ray. While the well-known Gale-Shapley matching algorithm~\cite{Gale62} matches user-resource pairs that are mutually nearest to each other, its unidirectional variant, UGS, matches a user to the nearest resource on its right. Note that, this one-dimensional network setting also applies to vehicular wireless ad-hoc networks on a one-lane roadway~\cite{Ho11,Leung94}\footnote{Furthermore, \cite{Ho11} confirms that vehicle location distribution on the streets in Central London can be closely approximated by a Poisson distribution.}, where users are in vehicles and servers are attached to fixed infrastructure such as lamp posts. Users attempt to allocate their computation tasks over the wireless network to servers located to their right so that they can retrieve the results with little effort while driving by.

In this paper, we propose another policy ``Move to Right'' policy (or {\it MTR}) which has the same ``expected distance traveled by a request'' ({\it request distance}) as UGS but has a lower variance. {\it MTR} sequentially allocates users to the geographically nearest available vehicle located to his/her right. When user and resource locations are modeled by statistical point processes the one-dimensional unidirectional space behaves similar to time and notions from queueing theory can be applied. In particular, when user and vehicle  locations are modeled by independent Poisson processes, average request distance can be characterized in closed form by considering inter-user and inter-server distances as parameters of a {\it bulk service} M/M/1 queue where the bulk service capacity denotes the maximum number of users that can be handled by a server. We equate request distance in the spatial system to the expected \emph{sojourn time} in the corresponding queuing model\footnote{Sojourn time is the sum of waiting and service times in a queue.}. This natural mapping allows us to use well-known results from queueing theory and in some cases to propose new queueing theoretic models to characterize request distances for a number of interesting situations beyond M/M/1 queues.

\np{A natural extension to our spatial framework is to consider more general communication costs associated with each resource allocation. Assuming communication cost for each allocation is a function of request distance, we provide closed form expressions for the expected communication cost for specific user-server distributions and specific server capacities.}
%Typically the cost would be a linear function of the request distance, e.g., gas mileage. 

The second scenario involves a convoy of vehicles traveling on a one-dimensional space, for example, trucks on a highway or boats on a river. Some vehicles have expensive camera sensors (image/video) but have inadequate computational storage or processing power. On the other hand, cheap storage and processing is easily available on several other vehicles. The cameras periodically take photos/videos as they move through space and want them processed / stored. In such case, bidirectional assignment schemes are more suitable. Since no directionality restrictions are imposed on the allocation algorithms, computing the optimal assignment is not as simple as in the unidirectional case.

We explore the special structure of the one-dimensional topology to develop an optimal algorithm that assigns a set of requesters $R$ to a set of resources $S$ such that the total assignment cost is minimized. This problem has been recently solved for $|R|=|S|$~\cite{Bukac18}. However, we are interested in the case when $|R| < |S|$. We propose a dynamic Programming based algorithm which solves this case with time complexity $O(|R|(|S|-|R|+1))$. Note that other assignment algorithms in literature such as the Hungarian primal-dual algorithm and Agarwal's variant~\cite{Agarwal95} have time complexities $O(|R|^3)$ and $O(|R|^{2+\epsilon})$ respectively and assume $|R|=|S|$ for general and Euclidean distance measures.

\np{We leverage the optimal dynamic programming scheme for one-dimensional inputs to obtain approximate solutions to the optimal assignment problem for the two-dimensional scenario where users and servers are located on the two-dimensional plane, $\mathbb{R}^2$. More precisely, we \emph{embed} the points denoting $R,S \subset \mathbb{R^2}$ into new locations in $\mathbb{R}$ such that the distances between a user and its nearest servers are approximately preserved. Our approximation algorithm empirically yields request distances within a constant factor of the optimal solution with $O(|R|^2)$ time complexity.}

Our contributions are summarized below:
\begin{enumerate}
\item Analysis of simple unidirectional allocation policies {\it MTR} and {\it UGS} yielding closed form expressions for mean request distance. 
\begin{itemize}
\item When inter-requester and inter-resource distances are exponentially distributed, we model unidirectional policies as a bulk service M/M/1 queue.
\item When inter-requester distances are  generally distributed but the inter-resource distances are exponentially distributed, we model the situation using an accessible batch service G/M/1 queue.
\item When inter-requester distances are exponentially distributed but inter-resource distances are generally distributed, we model the spatial system as an accessible batch service M/G/1 queue with the first batch having exceptional service time. To the best of our knowledge this system has not been studied previously in the queueing theory literature.
\item We include several generalizations of our framework. In the first place we discuss a simulation driven conjecture for evaluating request distance for general distance distributions under heavy traffic. We also investigate the heterogeneous server capacity scenario where server capacity is a random variable and to the best of our knowledge this system has not been studied previously in the queueing theory literature. We derive expressions for expected request distance when servers have infinite capacity. We include communication cost associated with each resource allocation and provide a closed form expression for expected communication cost for some specific scenarios. Finally we extend our framework to compute expected request distance for the case where each user requests two resources residing in two different set of servers, by mapping it to a two queue fork-join system.
\end{itemize}
\item A novel algorithm for optimal (bidirectional) assignment with time complexity $O(|R|(|S|-|R|+1))$. 
%(\textcolor{red}{Can we make these claims?})
\item A numerical and simulation study of different assignment policies: UGS , MTR, bi-directional heuristic allocation policies (Gale-Shapley \np{and Nearest Neighbor}) and the optimal policy.
\item \np{A heuristic based approximate solution to the optimal assignment problem for the two-dimensional scenario with an empirically observed constant factor approximation of the optimal solution.}
\end{enumerate}

The paper is organized as follows. The next section discusses related work. Section \ref{sec:model} contains technical preliminaries. We show the equivalence of UGS and MTR w.r.t expected request distance in Section~\ref{sec:queue}, and present results associated with the case when servers are Poisson distributed in Section ~\ref{sec:mm1}. In Section ~\ref{sec:mg1}, we develop formulations for expected request distance when either user or server placements are described by Poisson processes. We include some generalizations of our framework such as analysis under general distance distributions, results for heterogeneous server capacity and uncapacitated allocation in Section ~\ref{sec:gen}. The optimal bidirectional allocation strategy is presented in Section ~\ref{sec:opt}. We compare the  performance of various local allocation strategies in Section~\ref{sec:perfcomp}. \np{In Section ~\ref{sec:2dto1d}, we extend our one-dimensional framework to solve two-dimensional problem.} We conclude the paper in Section~\ref{sec:con}.

\section{Related Work}\label{sec:reltwrk}
%%%!TEX PS-program = pdflatexmk
%%!TEX root = 1d_sigmetrics19.tex

\noindent{\textit{Poisson Matching:}} Holroyd et al. \cite{Holroyd09} first studied translation invariant matchings between two $d$-dimensional Poisson processes with equal densities. Their primary focus was obtaining upper and lower bounds on expected matching distance for stable matchings. Abadi et al. \cite{Abadi17} introduced ``Unidirectional Gale-Shapley'' matching ({\it UGS}) and derived bounds on the expected matching distance for stable matchings between two one-dimensional Poisson processes with different densities. In this paper, we propose another unidirectional allocation policy: ``Move To Right'' policy ({\it MTR}) and provide explicit expressions for the expected matching distance for both MTR and UGS when either requesters or servers are distributed according to a renewal process and the according to a Poisson process.

\noindent{\textit{Exceptional Queueing Systems and Accessible Batches:}} Welch et al. \cite{Welch64} first studied an M/G/1 queue where a customer arriving when the server is idle has a different service time than the others. Bulk service M/G/1 queues has been studied in \cite{bailey54}. Authors in \cite{Goswami11} analyzed a bulk service G/M/1 queue with accessible or non-accessible batches where an accessible batch is considered to be a batch in service allowing subsequent arrivals, while the service is on. In this work, we model the spatial system using an accessible batch service queue with the first batch having exceptional service time. To the best of our knowledge this system has not been studied previously in queueing theory literature.

\noindent{\textit{Euclidean Bipartite Matching:}}
The optimal user-server assignment problem can be modeled as a minimum-weight matching on a weighted bipartite graph where weights on edges are given by the Euclidean distances between the corresponding vertices \cite{Mezard88}. Well-known polynomial time solutions exist for this problem, such as the modified Hungarian algorithm proposed by Agarwal et al. \cite{Agarwal95} with a running time of $O(|R|^{2+\epsilon})$, where $|R|$ is the total number of users. In the case of an equal number of users and servers, the optimal user-server assignment on a real line is known \cite{Bukac18}. In this paper, we consider the case when there are fewer users than servers.

\section{Technical Preliminaries}\label{sec:model}
%%%!TEX PS-program = pdflatexmk
%%!TEX root = 1d_sigmetrics19.tex

Consider a set of users $R$ and a set of servers $S$. Each user makes a request that can be satisfied by any server. Assume that each server $j\in S$ has capacity $c_j \in \mathbb{Z}^+$ corresponding to the maximum number of requests that it can process. Suppose users and servers are located on a line $\mathcal{L}$. Formally, let $r : R \to \mathcal{L}$ and $s : S \to \mathcal{L}$ be the location functions for users and servers, respectively, such that a distance $d_{\mathcal{L}}(r,s)$ is well defined for all pairs $(r,s)\in R\times S$. Initially we assume that all servers have equal capacities i.e. $c_j = c\;\forall j \in S.$ Later in Section \ref{sec:het} we extend our analysis to a case in which server capacities are integer random variables.

\subsection{User and server spatial distributions}
Let $0\le r_1 \le r_2 \le \cdots$ represent user locations and $0\le s_1 \le s_2 \le \cdots$ be the server locations. Let  $X_j = s_j - s_{j-1}, j\ge 1, s_0 = 0,$ denote the inter-server distances and  $Y_i = r_i - r_{i-1}, i\ge 1, r_0 = 0,$ the inter-user distances. We assume $\{X_j\}_{j\ge1}$ to be a renewal process with cumulative distribution function (cdf) 
\begin{align}
\mathbb{P}(X_{j} \leq x) = F_X(x) .
\end{align}
We also assume $\{Y_i\}_{i\ge1}$  to be a renewal process with cdf $F_Y(x)$, i.e.,
\begin{align}
\mathbb{P}(Y_{i} \leq x) = F_Y(x).
\end{align}
We denote $\alpha_X = 1/\mu$ and $\sigma_X^2$ to be the mean and variance associated with $F_X$. Similarly let  $\alpha_Y = 1/\lambda$ and $\sigma_Y^2$ be the mean and variance associated with $F_Y$. We let $\rho = \lambda/\mu$ and assume that $\rho < c$. Denote by $F_X^{*}(s) = \int_0^\infty e^{-sx}dF_X(x)$ and $F_Y^{*}(s)$ the Laplace-Stieltjes transform ({\it LST}) of $F_X$ and $F_Y$ with $s\ge 0.$

In our paper, we consider various inter-server and inter-user distance distributions, including exponential, deterministic, uniform and hyperexponential.

\subsection{Allocation policies}
One of our goals is to analyze the performance of various request allocation policies using expected request distance as a performance metric. We define various allocation policies as follows.
\begin{itemize}
\item{\bf Unidirectional Gale-Shapley ({\it UGS}):}
In UGS, each user simultaneously emits a ray to their right. Once the ray hits an unallocated server $s$, the user is allocated to $s$.
\item\noindent{\bf Move To Right ({\it MTR}):}
In MTR, starting from the left, each user is allocated sequentially to the nearest available server to its right.
\np{\item{\bf Nearest Neighbor ({\it NN})  \cite{Stuart10}:} %\footnote{We rename \emph{Nearest neighbor matching} defined in \cite{Stuart10}  as Bidirectional Move To Right Matching for ease of understanding.}
In this matching, starting from the left,  each user is allocated sequentially to the nearest available server. This policy can be viewed as the bidirectional version of MTR policy.}
\item\noindent{\bf Gale-Shapley ({\it GS}) \cite{Gale62}:}
In this matching, each user selects the nearest server and each server selects its nearest user. Remove reciprocating pairs, and continue.
\item\noindent{\bf Optimal Matching:}
This matching minimizes average request distance among all feasible allocation policies.
\end{itemize}

\section{Unidirectional Allocation Policies}\label{sec:queue}
%%%!TEX PS-program = pdflatexmk
%%!TEX root = 1d_sigmetrics19.tex
\begin{figure}[]
\centering
\begin{minipage}{.35\textwidth}
\centering
\includegraphics[width=\linewidth]{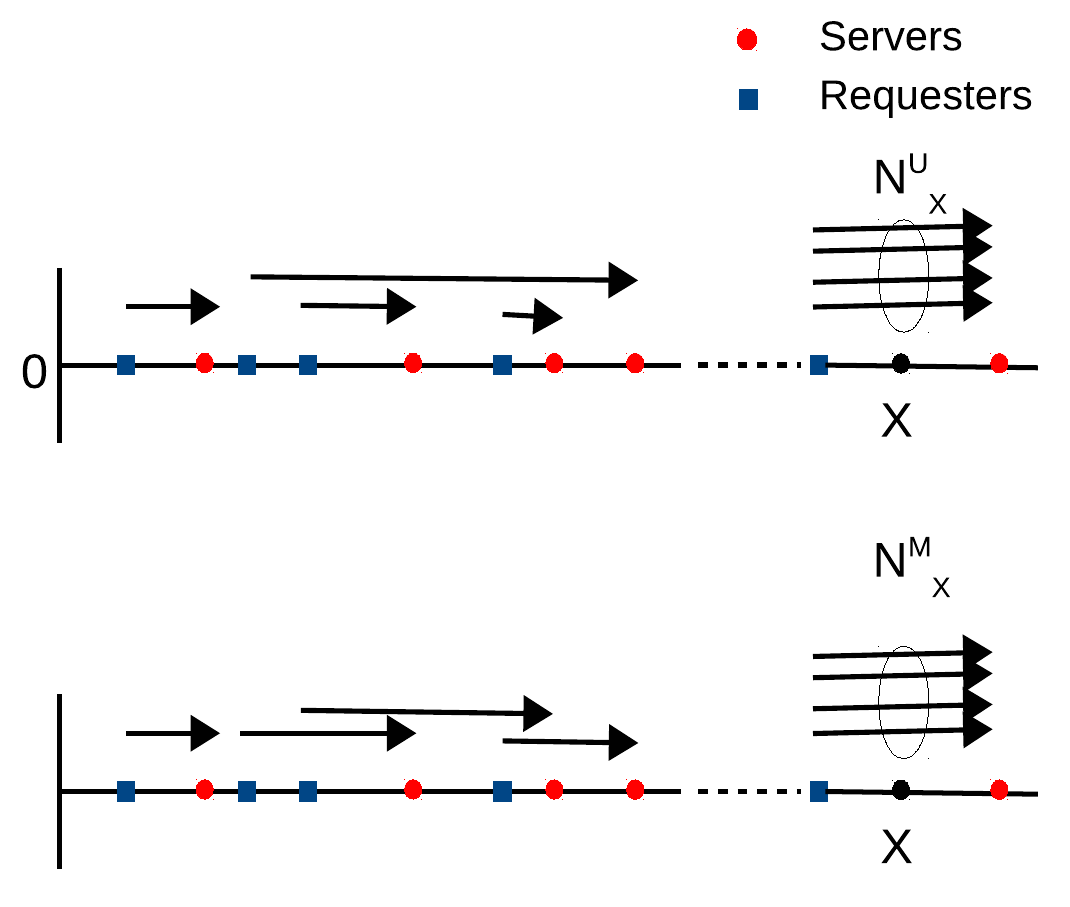}
\end{minipage}
\caption{Allocation of users to servers on the one-dimensional network. Top: UGS, Bottom: MTR allocation policy.}
\label{1d-grid-poisson}
\end{figure} 

In this Section, we establish the equivalence of UGS and MTR w.r.t number of requests that traverse a point and  expected request distance. Define $N_{x}^{P}$ and  $D_i^P$ to be random variables for the number of requests that traverse point $x \in {\mathcal L}$ and distance between user $i$ and its allocated server under policy $P$, respectively. Thus $N_{x}^{U}$ and  $N_{x}^{M}$ denote the number of requests that traverse point $x \in {\mathcal L}$ under UGS and MTR, respectively, as shown in Figure \ref{1d-grid-poisson}. Consider the following definition of busy cycle in a service network.

%Let $D_i^U$ and $D_i^M, i \in R$ be the random variables for distances between user $i$ and its allocated server under UGS and MTR respectively. 
\begin{define}
A busy cycle for a policy P is an interval $I = [a,b] \subset {\mathcal L}$ such that $\exists\; i,j $ with $r_i = a, s_j = b$ for which $N_{x}^{P} > 0, \forall x \in I$ and $N_{x}^{P} = 0$ for $x = a-\epsilon$ and $x = b+\epsilon$  with $\epsilon$ being an infinitesimal positive value.
\end{define}
We have the following theorem.
\begin{theorem}\label{lm:queue_mtr}
$N_{x}^{U} = N_{x}^{M}, x\ge 0.$
\end{theorem}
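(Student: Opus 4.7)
My plan is to reinterpret the unidirectional matching as a left-to-right sweep of a queueing system, in which user positions play the role of arrival epochs and server positions play the role of bulk-service epochs each with capacity $c$. Under this mapping, $N_x^P$ coincides with the queue length at ``time'' $x$, and I will show that both UGS and MTR produce the same deterministic queue-length process, which depends only on the positions $\{r_i\}$, $\{s_j\}$ and not on the assignment policy.

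First, I observe that for any unidirectional policy $P$ the arc $(r_i, s_{P(i)})$ traverses $x$ precisely when $r_i \leq x < s_{P(i)}$. Writing $U_x := |\{i : r_i \leq x\}|$ and $f_P(x)$ for the number of users in $[0,x]$ matched to servers in $[0,x]$, this gives $N_x^P = U_x - f_P(x)$; by unidirectionality $f_P(x)$ is also the total occupied capacity of the servers lying in $[0,x]$. Now let $Q(x)$ denote the deterministic quantity obtained by sweeping left-to-right, incrementing by $1$ at each $r_i$ and decrementing by $\min(Q, c)$ at each $s_j$. It will suffice to show that each of the two policies is \emph{work-conserving}, meaning that whenever the sweep passes a server $s_j \leq x$ that still has residual capacity at termination, no user in $[0, s_j]$ is matched to a server beyond $s_j$. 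This would force $f_P(x) = U_x - Q(x)$ for both $P \in \{U, M\}$ and finish the proof.

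For MTR, work-conservation is immediate from its greedy left-to-right definition: a user $i$ can be matched past $s_j$ only if $s_j$ had already been fully allocated when user $i$ was processed, so no server in $[0, x]$ can waste capacity while a waiting user from $[0, x]$ remains. For UGS, I would argue by contradiction: if $s_j \leq x$ retained residual capacity at termination while some user $i$ with $r_i < s_j$ were matched to a server beyond $s_j$, then user $i$'s ray passed through $s_j$ at distance $s_j - r_i$ while still alive, and the UGS rule would have matched $(i, j)$. The main obstacle is to make this step precise for UGS, which is described as a concurrent ``ray-firing'' process rather than as a sequential sweep; my remedy is to reformulate UGS equivalently as processing candidate pairs $(i, j)$ with $r_i < s_j$ in increasing order of $s_j - r_i$, which makes the work-conservation check transparent. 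Combining both cases yields $f_U(x) = f_M(x) = U_x - Q(x)$, and hence $N_x^U = N_x^M$ for every $x \geq 0$.
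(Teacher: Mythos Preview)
Your proposal is correct and rests on the same underlying observation as the paper: for any unidirectional, work-conserving policy the process $N_x$ is determined solely by the point configurations $\{r_i\}$ and $\{s_j\}$, so two such policies must give the same $N_x$. The packaging, however, differs. The paper argues via \emph{busy cycles}: it asserts that UGS and MTR share the same busy cycles, and that inside a busy cycle starting at $a$ one has $N_x = L_{x,R} - L_{x,S}$, where $L_{x,R}$ and $L_{x,S}$ count users and servers in $[a,x]$. Both of those assertions tacitly rely on work-conservation (identical busy cycles requires that neither policy skip an available server; the formula $N_x = L_{x,R} - L_{x,S}$ requires that every server in the busy cycle up to $x$ be fully occupied), but the paper does not spell this out. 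Your route makes exactly this hidden step explicit: you build the policy-free queue process $Q(x)$, reduce the claim to $f_P(x) = U_x - Q(x)$, isolate work-conservation as the property that forces $m_j = \min(Q(s_j^-),c)$ server by server, and then verify work-conservation separately for MTR (trivially, by greediness) and for UGS (via the equivalent reformulation as processing pairs $(i,j)$ in increasing $s_j - r_i$). This buys you a proof that transparently handles arbitrary $c$ and does not lean on the unproved claim that the two policies share busy cycles; conversely, the paper's busy-cycle language is shorter and connects more directly to the queueing interpretation used downstream.
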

\begin{proof}
Due to the unidirectional nature of matching, both UGS and MTR have the same set of busy cycles. Denote ${\mathcal I}$ as the set of all busy cycles in the service network. In the case when $x \in {\mathcal L}\setminus \bigcup\limits_{I\in{\mathcal I}} I$ we already have $N_{x}^{U} = N_{x}^{M} = 0.$ Let us now consider a busy cycle $I^U = [a^U,b^U]$ under UGS policy. Let $x\in I^U.$ Let $L^U_{x,R} = |\{r_i|a^U\le r_i\le x\}|$ and $L^U_{x,S} = |\{s_j|a^U\le s_j\le x\}|.$ $N_{x}^{U} = L_{x,R}^{U} - L_{x,S}^{U}.$ Similarly define $L_{x,R}^{M}$ and $L_{x,S}^{M}$ for MTR policy. Clearly $N_{x}^{M} = L_{x,R}^{M} - L_{x,S}^{M}.$ As both policies have the same set of busy cycles we have $L_{x,R}^{U} = L_{x,R}^{M}$ and $L_{x,S}^{U} = L_{x,S}^{M}.$ Thus we get
\begin{align}
N_{x}^{U} = N_{x}^{M},\; x \in \mathbb{R}^{+},
\end{align}
\end{proof}
\begin{corollary}
$\mathbb{E}[D^U] = \mathbb{E}[D^M]$ i.e. the expected request distances are the same for both UGS and MTR under steady state.
\end{corollary}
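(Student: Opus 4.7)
The plan is to pass from the pointwise identity $N_x^U = N_x^M$ to an identity between total distances traveled by all requests, and then invoke a stationarity/ergodicity argument to conclude at the level of expectations.

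The key observation is that because the allocation is unidirectional (every server is to the right of the user it serves), the distance traveled by a single request from $r_i$ to its assigned server $s_{\sigma(i)}$ admits the integral representation
\begin{equation*}
s_{\sigma(i)} - r_i \;=\; \int_0^\infty \bone\{r_i \le x \le s_{\sigma(i)}\}\, dx .
\end{equation*}
Summing over all requests in any busy cycle $I = [a,b] \in \mathcal{I}$ and interchanging the sum and the integral yields
\begin{equation*}
\sum_{i : r_i \in I} D_i^{P} \;=\; \int_{I} N_x^{P}\, dx,
\end{equation*}
since the integrand on the right counts exactly those request intervals that cover the point $x$. Theorem~\ref{lm:queue_mtr} gives $N_x^{U} = N_x^{M}$ pointwise and, as noted in its proof, the two policies share the same collection of busy cycles $\mathcal{I}$. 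Hence, busy cycle by busy cycle, both the total distance traveled and the number of requests served are identical under UGS and MTR.

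To conclude at the level of steady-state expectations, I would consider the interval $[0,L]$ and write
\begin{equation*}
\frac{1}{|\{i : r_i \le L\}|}\sum_{i : r_i \le L} D_i^{P}
\;=\;
\frac{\int_0^L N_x^{P}\, dx \,+\, O(1)}{|\{i : r_i \le L\}|},
\end{equation*}
where the $O(1)$ correction absorbs the (at most one) busy cycle straddling $L$. Since the numerator on the right is pathwise identical for $P \in \{U, M\}$, the empirical averages coincide for every $L$. Taking $L \to \infty$ and invoking the ergodicity of the underlying renewal processes $\{X_j\}$ and $\{Y_i\}$ (which guarantees that these empirical averages converge to the Palm expectation $\mathbb{E}[D^{P}]$ of the request distance of a typical user), we obtain $\mathbb{E}[D^U] = \mathbb{E}[D^M]$.

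The main delicate point in this plan is the last step: justifying that the Palm expectation $\mathbb{E}[D^P]$ of a typical user actually equals the long-run empirical average of $D_i^P$. Under the standing assumption that $\{X_j\}$ and $\{Y_i\}$ are independent renewal processes with $\rho < c$, busy cycles form a regenerative structure, so the renewal reward theorem applies and delivers the required identification cleanly; the pointwise queue-length equality from Theorem~\ref{lm:queue_mtr} then transfers directly to an equality of expected request distances.
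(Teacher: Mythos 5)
Your proposal is correct and follows essentially the same route as the paper: the paper's proof simply observes that $N_x^U$ and $N_x^M$ converge to the same steady-state random variable and then invokes Little's law, which is precisely the conservation identity $\sum_i D_i^P = \int N_x^P\,dx$ that you derive and justify from first principles via busy cycles and the renewal-reward theorem. In other words, your argument is a self-contained sample-path proof of the Little's-law step that the paper cites in one line.
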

\begin{proof}
Under steady state both $N_{x}^{U}$ and $N_{x}^{M}$  converge to a random variable. Applying Little's law we have  $\mathbb{E}[D^U] = \mathbb{E}[D^M].$
\end{proof}
\begin{remark}
Note that Theorem \ref{lm:queue_mtr} applies to any inter-server or inter-user distance distribution. It also applies to the case where servers have capacity $c>1.$
\end{remark}
\begin{remark}\label{remark:variance}
Although MTR and UGS are equivalent w.r.t. the expected request distance, MTR tends to be fairer, i.e., has low variance\footnote{\np{It is well known in queueing theory that among all service disciplines  the variance of the waiting time is minimized under FCFS policy for Poisson arrivals and exponential service times \cite{Kingman62}. In Section \ref{sec:mm1} we show that MTR maps to a temporal FCFS queue.}} w.r.t. request distance.
\end{remark}

\section {Unidirectional Poisson Matching}\label{sec:mm1}
%%%!TEX PS-program = pdflatexmk
%%!TEX root = 1d_sigmetrics19.tex

\begin{table*}[]
\center
\begin{tabular}{|c| c| c| c|}
\hline
 {\bf Distribution}&{\bf Parameters}&$\pmb{F_X(x)}$&$\pmb{\mathcal{B}(x)}$\\
\hline
Exponential&$\mu$: rate&$1-e^{-\mu x}$&$\frac{1}{\lambda}\left[1-e^{-\lambda x}\right]-\frac{1}{\lambda+\mu}\left[1-e^{-(\lambda+\mu) x}\right]$\\
&&&\\
\hline
Uniform&$b:$ maximum value&$x/b,\quad 0 \leq x \leq b$&$\frac{1}{\lambda^2b}\left[1-e^{-\lambda b}\right]-\frac{e^{-\lambda x}}{\lambda}$\\
\hline
Deterministic&$d_0:$ constant&$1, \quad x \ge  d_0$&$\frac{e^{-\lambda d_0}-e^{-\lambda x}}{\lambda}$\\
\hline
Hyper&$l$: order&$1-\sum\limits_{j=1}^{l}p_{j}e^{-\mu_{j}x}$&$\frac{1}{\lambda}\left[1-e^{-\lambda x}\right]-\sum\limits_{j=1}^{l}\frac{p_j}{\lambda+\mu_{j}}\left[1-e^{-(\lambda+\mu_{j}) x}\right]$\\
-exponential&$p_{j}:$ phase probability &&\\
&$\mu_{j}:$ phase rate&&\\
%\hline
%Generalized&$k:$ shape, $\sigma$: scale&$1 -(1+\frac{kx}{\sigma})^{-\frac{1}{k}}$&\\
%Pareto&$\theta(=0):$ location&&\\
\hline
%Weibull&$k:$ shape, $\theta$: scale&$1-e^{-(x/\theta)^{k}}$&$\frac{1}{\lambda}\left[1-e^{-\lambda x}\right]- \int_0^x e^{-(x/\theta)^{k}} e^{-\lambda x} dx$\\
%&&&&&distributed: Dual + fixed point\\
%\hline
%&&&&&SDR: for QCQP\\
\end{tabular}
%\vspace{0.25cm}
\caption{Properties of specific inter-server distance distributions.}
\label{tbltraff}
%\vspace{-0.2in}
\end{table*}

In this section, we characterize request distance statistics under unidirectional policies when both users and servers are distributed according to two independent Poisson processes. We first analyze MTR as follows.
%In this scenario, we consider each user requesting computational resources to a particular server.\\
%\subsection{Request distance and sojourn time equivalence}
%\begin{table}[tbp]
%%\vspace{-0.05in}
%\center
%\begin{tabular}{c c c c c}
%\hline
%$\lambda$&$\mu$&$c$&$\overline{d}_{exp}$&$\overline{d}_{th}$\\
%\hline
%%$1$&$2$&$1$&$0.5$&$0.9954$&$1$\\
%%$1$&$5$&$1$&$0.2$&$0.2488$&$0.25$\\
%%$1$&$2$&$2$&$0.366$&$0.5775$&$0.5772$\\
%%$1$&$0.8$&$2$&$0.7247$&$2.6106$&$2.633$\\
%
%$1$&$2$&$1$&$0.99$&$1$\\
%$0.9$&$1$&$1$&$9.98$&$10$\\
%$0.9$&$1$&$2$&$1.377$&$1.38$\\
%\hline
%
%\hline
%\end{tabular}
%\caption{Experimental results for request distance under UGS matching where $\overline{d}_{exp}$ and $\overline{d}_{th}$ are the experimental and theoretical request distance respectively.}
%\label{tb1}
%\end{table}
\subsection{MTR}

Under this allocation policy, the service network can be modeled as a bulk service M/M/1 queue. A bulk service M/M/1 queue provides service to a  group of $c$ or fewer customers. The server serves a bulk of at most $c$ customers whenever it becomes free. Also customers can join an existing service if there is room which is an example of accessible batch. In Section \ref{sec:mg1} we describe the notion of accessible batches in greater detail. The service time for the group is exponentially distributed and customer arrivals are described by a Poisson process. The distance between two consecutive users in the service network can be thought of as inter-arrival time between customers in the bulk service M/M/1 queue. The distance between two consecutive servers maps to a bulk service time.

Having established an analogy between the service network and the bulk service M/M/1 queue, we now define the state space for the service network. Consider the definition of $N_x$ as the number of requests\footnote{We drop the superscript $(M)$ for brevity.} that traverse point $X \in L$ under MTR. In steady state, $N_x$ converges to a random variable $N$ provided $\lambda < c\mu$. Let $\pi_k$ denote $\texttt{Pr}[N = k]$ with $k\ge 0$.

%The state space diagram for such a system is shown in Figure \ref{spsd}. 
%We have the following transition matrix.
%{\footnotesize
%\begin{align}
% \bordermatrix{~ & 0 &1&\ldots&c&c+1&\ldots\cr
%    0&-\lambda&  \lambda \cr
%    1&\mu  &  -(\mu+\lambda) & \lambda \cr
% %   2&\mu&0&-(\mu+\lambda)&\lambda& &\cr
%    \vdots& \vdots&&\ddots& \ddots\cr
%    c&\mu&0&\ldots&-(\mu+\lambda)&\lambda&\cr
%    c+1&0&\mu&0&&-(\mu+\lambda)&\lambda&\cr
%    %c+2&0&0&\mu&&&&\ddots&\ddots\cr
%    \vdots&&&&&\ddots&\ddots\cr
%}
%\end{align}
%}
%Thus we have the following balance equations similar to that of a bulk service M/M/1 queue\cite{kleinrock76}
%\begin{align}
%(\lambda+\mu)\pi_k &= \mu\pi_{k+c} + \lambda\pi_{k-1},\; k\ge1,\nonumber\\
%\lambda\pi_0 &= \mu(\pi_1+\pi_2+\ldots+\pi_c).
%\end{align}
%%\cite{ravindran08}
%
%By taking the $z$-transform and following the procedure in \cite{kleinrock76}, we obtain the steady state probability vector $\pi = [\pi_0,\pi_1,\ldots].$ By applying Little's formula, we obtain the following expression for the request distance
Following the procedure in Section 4.2.1 of \cite{Gross98}, we obtain the steady state probability vector $\pi = [\pi_i, i\ge0].$ 
In the service network, request distance corresponds to the sojourn time in the bulk service M/M/1 queue. By applying Little's formula, we obtain the following expression for the expected request distance
\begin{align}
\mathbb{E}[D] = \frac{r_0}{\lambda(1-r_0)}, \label{eq:c2}
\end{align}
\noindent where $r_0$ is the only root in the interval $(0,1)$ of the following equation (with $r$ as the variable)
\begin{align}\label{eq:polymm1}
\mu r^{c+1} - (\lambda+\mu)r + \lambda = 0.
\end{align}

\subsubsection{When server capacity: $c =1$}
When $c=1,$ $r_0 = \rho$ is a solution of \eqref{eq:polymm1}. Thus we can evaluate the expected request distance as 
\begin{align}
\mathbb{E}[D] = \frac{\rho}{\lambda(1-\rho)} = \frac{1}{\mu-\lambda}. \label{eq:c12}
\end{align}
Note that, when server capacity is one, the service network can be modeled as an M/M/1 queue. In such a case,   \eqref{eq:c12} is the mean sojourn time for an M/M/1 queue.
%When $\mu$ is high, the model maps to an M/M/c queue. In such a case, the expected distance traveled by each request is given by:
%\begin{align}
%S = \frac{C(c,\rho)}{c\mu-\lambda} + \frac{1}{\mu} 
%\end{align}
%\noindent where $C(c,\rho)$ is  the Erlang's C formula with $\rho = \lambda/c\mu$ denoted as
%\begin{align}
%C(c,\rho) = \frac{1}{1+(1-\rho)(\frac{c!}{(c\rho)^c})\sum_{k=0}^{c-1} \frac{(c\rho)^k}{k!}}  

%We compare the experimental and theoretical values of request distance for a set of $|R| = |S| = 10^6$ devices in Table \ref{tb1}. The experimental results match with the theoretical results.

\subsection{UGS}
When both users and servers are Poisson distributed and servers have unit capacity, the request distance in UGS has the same distribution as the busy cycle in the corresponding Last-Come-First-Served Preemptive-Resume ({\it LCFS-PR}) queue having the density function \cite{Abadi17}
\begin{align}
f_{D^U}(x) = \frac{1}{x\sqrt{\rho}}e^{(\lambda+\mu)x}I_1(2x\sqrt{\lambda\mu}),\; x > 0,
\end{align}
\noindent where $\rho = \lambda/\mu$ and $I_1$ is the modified Bessel function of the first kind. Thus the expected request distance is equivalent to the average busy cycle duration in a LCFS-PR queue given by $1/(\mu-\lambda)$ \cite{Abadi17}.

When servers  have capacities $c>1$ it is difficult to characterize the expected request distance explicitly. However, by Theorem \ref{lm:queue_mtr}, the expected request distance under UGS is the same as that of MTR given by \eqref{eq:c2}.

\section{Unidirectional General Matching}\label{sec:mg1}
%%%!TEX PS-program = pdflatexmk
%%!TEX root = 1d_sigmetrics19.tex

\begin{figure}[]
\centering
\includegraphics[width=0.4\linewidth]{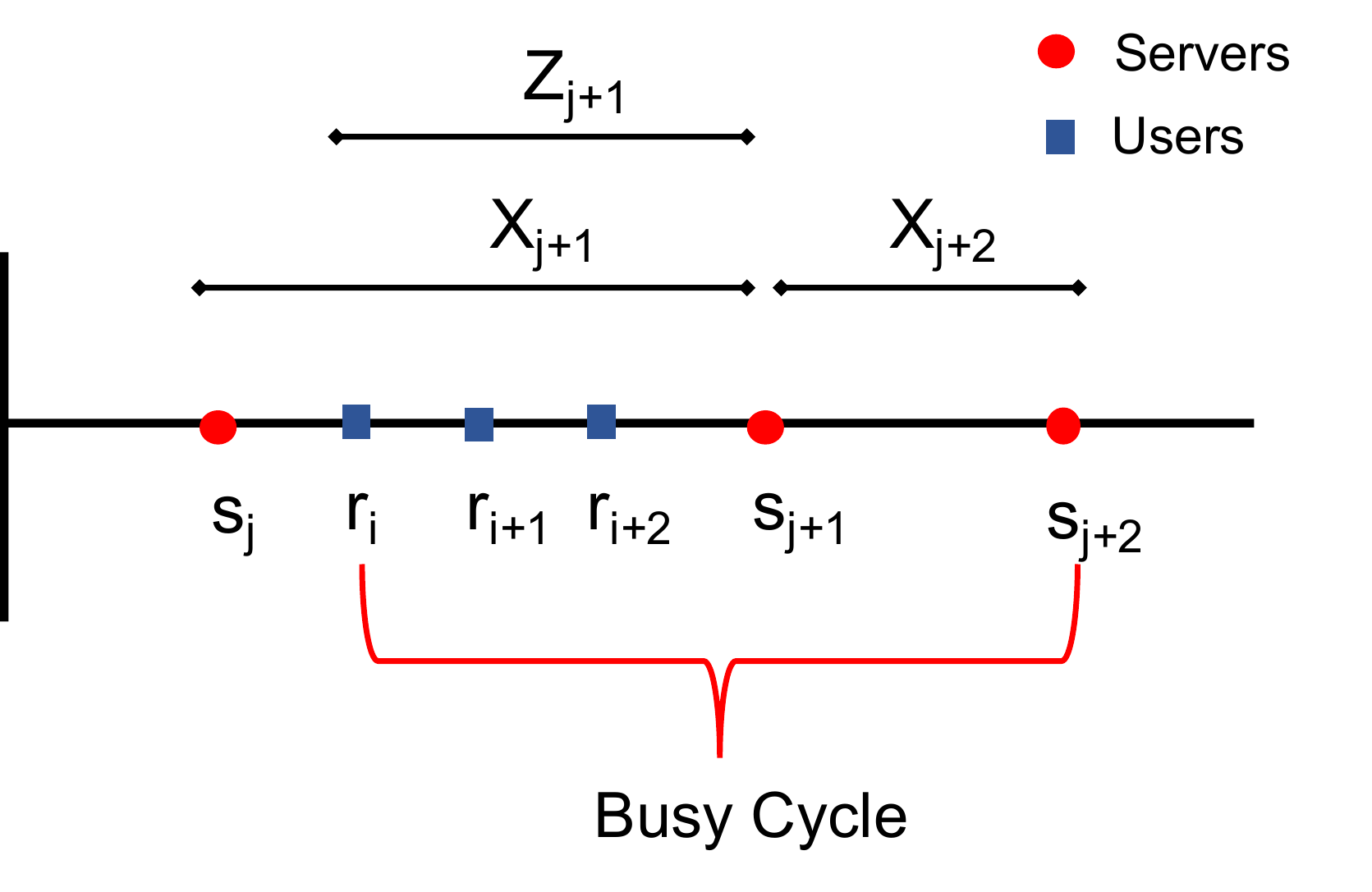}
\vspace{-0.1in}
\caption{Allocation of users to servers under MTR policy.}
\label{1d-grid-nonpoisson}
\vspace{-0.1in}
\end{figure} 
We now derive expressions for the expected request distance when either users or servers are distributed according to a Poisson process and the other by renewal process. 
\subsection{Notion of exceptional service and accessible batches}
We discuss the notion of exceptional service and accessible batches applicable to our service network as follows. Consider a service network with $c=2$  as shown in Figure \ref{1d-grid-nonpoisson}. Consider a user $r_i$. Let $s_j$ be the server immediately to the left of $r_i.$ We assume all users prior to  $r_i$ have already been allocated to servers $\{s_k, 1\le k\le j\}$.  MTR allocates both $r_i$ and $r_{i+1}$ to $s_{j+1}$ and allocates $r_{i+2}$ to $s_{j+2}.$ We denote $[r_i,s_{j+2}]$ as a busy cycle of the service network. We have the following queueing theory analogy.

User $r_i$ can be thought of as the first customer in a queueing system that initiates a busy period while $r_{i+1}$ sees the system busy when it arrives. Because only $r_i$ is in service at the arrival of $r_{i+1}$, $r_{i+1}$ enters service with $r_i$ and the two customers form a batch of size 2. and depart at time $s_{j+1}$.  This is an example of an {\it accessible batch} \cite{Goswami11}. An accessible batch admits subsequent arrivals, while the service is on, until the server capacity $c$ is reached.

The service time for the batch, $r_i,r_{i+1}$, is described by the random variable $Z_{j+1}$ which is different or {\it exceptional} when compared to service times of successive batches such as the one consisting of $r_{i+2}$. The service time for the second batch is $X_{j+2}.$ Note that, $Z_{j+1}$ only depends on $X_{j+2}$ and $Y_{i+2}.$ Thus  when either $X_{j+2}$ or $Y_{i+2}$ is described by a Poisson process and the other by renewal process, $Z_{j+1}$ converges to a random variable $Z$ under steady state conditions. Denote $F_Z(x)$ and $f_Z(x)$ as the distribution and density functions for the random variable $Z$. Thus the service network can be mapped to an  exceptional service with accessible batches queueing ({\it ESABQ}) model.  We formally define ESABQ as follows.\\

\noindent{\bf ESABQ:} {\it Consider a queueing system where customers are served in batches of maximum size $c$. A customer entering the queue and finding fewer than $c$ customers in the system joins the current batch and enters service at once, otherwise it joins a queue. After a batch departs leaving $k$ customers in the buffer, $\min(c,k)$ customers form a batch and enter service immediately. There are two different service times cdfs, $F_Z(x)$ (exceptional batch) with mean $\alpha_Z=1/\mu_Z$ and $F_X(x)$ (ordinary batch) with mean $\alpha_X=1/\mu$. A batch is exceptional if its oldest customer entered an empty system, otherwise it is a regular batch. When the service time expires, all customers in the server depart at once, regardless of the nature of the batch (exceptional or regular).}

\subsubsection{Evaluation of the distribution function: $F_Z(x)$}\label{eq:sub_ge}
In this Section, we compute explicit expressions for the distribution function $F_Z(x)$ applicable to our service network.\\

\noindent{\bf When $\pmb{F_X(x)\sim} \texttt{Expo}\pmb{(\mu)}$:} In this case, we invoke the memoryless property of the exponential distribution $F_X$. Thus the exceptional distribution, $F_Z$, is
\begin{align}
F_Z(x) = F_X(x) = 1-e^{-\mu x}, x\ge0.
\end{align}

\noindent{\bf When $\pmb{F_Y(x)\sim} \texttt{Expo}\pmb{(\lambda)}$:}
Using the memoryless property of $F_Y$, $F_Z$ can be computed as  
{\small
\begin{align}
F_Z(x) &= \text{Pr}(X-Y<x|Y<X)= \text{Pr}(X-Y<x|X-Y >0) = \frac{\text{Pr}(X-Y<x) - \text{Pr}(X-Y<0)}{1 - \text{Pr}(X-Y<0)}\nonumber\\
&= \frac{D_{XY}(x) - D_{XY}(0)}{1-D_{XY}(0)},\;x\ge0,\label{eq:first_busy}
\end{align}
}
\noindent where $D_{XY}(x)$ is the distribution of the random variable $X-Y$ (also known as difference distribution). $D_{XY}(x)$ can be expressed as
{\small
\begin{align}
D_{XY}(x) &= \text{Pr}(X-Y\le x) = \int_0^\infty\text{Pr}(X-y\le x)\text{Pr}(Y=y) dy = \int_0^{\infty} F_X(x+y)\lambda e^{-\lambda y} dy = \int_x^{\infty} F_X(z)\lambda e^{-\lambda (z-x)} dz\nonumber\\ 
&= \lambda e^{\lambda x}\left[\int_0^{\infty} F_X(z) e^{-\lambda z} dz - \int_0^{x} F_X(z) e^{-\lambda z} dz \right] = \lambda e^{\lambda x}\left[\mathcal{A}(F_X)-\mathcal{B}(x)\right],\label{eq:diff_comm}
\end{align}
}

\noindent where $\mathcal{A}$ is the Laplace Transform operator on the function $F_X$ and  $\mathcal{B}(x)$ is denoted by
$$\mathcal{B}(x) = \int_0^{x} F_X(z) e^{-\lambda z} dz$$ 
Clearly $\mathcal{B}(0) =0.$ Thus combining \eqref{eq:first_busy} and \eqref{eq:diff_comm} yields

\begin{align}
F_Z(x) &= \frac{\lambda e^{\lambda x}\left[\mathcal{A}(F_X)-\mathcal{B}(x)\right] - \lambda\mathcal{A}(F_X)}{1-\lambda\mathcal{A}(F_X)},\label{eq:ge_final}\\
f_Z(x) &= \frac{\lambda^2 e^{\lambda x}\left[\mathcal{A}(F_X)-\mathcal{B}(x)\right]-\lambda F_X(x)}{1-\lambda\mathcal{A}(F_X)},\\
\alpha_Z &= \int_{0}^{\infty}xf_Z(x)dx,\;\sigma_Z^2 = \left[\int_{0}^{\infty}x^2f_Z(x)dx\right] - \alpha_Z^2.
\end{align}

Expressions for $\mathcal{B}(x)$ are presented in Table \ref{tbltraff}. We can evaluate $\mathcal{A}(F_X)$ by setting $\mathcal{A}(F_X) = B(\infty).$ Detailed derivations are relegated to Appendix \ref{app-Ge-distribs}.

\subsection{General requests and Poisson distributed servers ({\it GRPS})}\label{sub:grps1}
From our discussion in Section \ref{eq:sub_ge}, it is clear that when servers are distributed according to a Poisson process, the exceptional service time distribution equals the regular batch service time distribution. In such a case we have the following queueing model. \\
%We revisit some of the notations and results presented in \cite{Goswami11} as follows.\\

\noindent{\it Under GRPS, inter-arrival times and batch service times are, respectively, arbitrarily and exponentially distributed. Before initiating a service, a server finds the system in any of the following conditions. (\romannumeral 1) $1\le n\le c-1$ and (\romannumeral 2) $n\ge  c.$  Here $n$ is the number of customers in the waiting buffer. For case (\romannumeral 1) the server provides service to all $n$ customers and admits subsequent arrivals until $c$ is reached. For case (\romannumeral 2) the server takes $c$ customers with no admission for subsequent customers arriving within its service time.}\\

In such a case ESABQ can directly be modeled as a special case of a renewal input bulk service queue with accessible and non-accessible batches proposed in \cite{Goswami11}  with parameter values $a=1$ and $d = b = c.$ Let $N_s$ and $N_q$ denote random variables for numbers of customers in the system and in the waiting buffer respectively for ESABQ under GRPS. We borrow the following definitions from \cite{Goswami11}.
\begin{align}\label{eq:probs}
P_{n,0} &= \Pr[N_s = n]; 0\le n\le c-1,\;P_{n,1} = \Pr[N_q = n]; n\ge0.
\end{align}
Using results from \cite{Goswami11} we obtain the following expressions for equilibrium queue length probabilities.
\begin{align}\label{eq:probs2}
P_{0,1} &= \frac{C}{\mu}\bigg[\frac{r_0^{c-1}-r_0^c}{1-r_0^c}+\frac{1}{r_0}-1\bigg], \; P_{n,1} = \frac{Cr_0^{n-1}(1-r_0)}{\mu(1-r_0^c)}; n\ge1,
\end{align}
where $0<r_0<1$ is the real root of the equation $r=F^*_Y(\mu-\mu r^c)$  and $C$ is the normalization constant\footnote{The normalization constant $C$ derived in \cite{Goswami11} is incorrect. The correct constant for our case is given in \eqref{eq:probs_C}.} given by
{\footnotesize
\begin{align}\label{eq:probs_C}
C = \lambda\bigg[\frac{1-\omega^{c}}{1-\omega} + \frac{1}{1-r_0} - \frac{\omega(r_0-F^*_Y(\mu))}{r_0^c(1-r_0\omega)}\bigg(\frac{1-r_0^c}{1-r_0}-r_0^{c-1}\frac{1-w^c}{1-w}\bigg)\bigg]^{-1},
\end{align}
}
\noindent with $\omega = 1/F^*_Y(\mu)$. We then derive the expected queue length as
\begin{align}\label{eq:qlen}
\mathbb{E}[N_q] &= \sum\limits_{n=0}^{\infty}nP_{n,1} = \sum\limits_{n=1}^{\infty}n\frac{Cr_0^{n-1}(1-r_0)}{\mu(1-r_0^c)} = \frac{C(1-r_0)}{\mu(1-r_0^c)}\sum\limits_{n=1}^{\infty}nr_0^{n-1} = \frac{C}{\mu(1-r_0^c)(1-r_0)}.
\end{align}

Applying Little's law and considering the analogy between our service network and ESABQ we obtain the following expression for the expected request distance.
\begin{align}
\mathbb{E}[D] = \frac{C}{\lambda\mu(1-r_0^c)(1-r_0)} + \frac{1}{\mu}.
\end{align} 
\subsection{Poisson distributed requests and general distributed servers ({\it PRGS})}\label{sub:prgs1}
%%%!TEX PS-program = pdflatexmk
%%!TEX root = 1d_sigmetrics19.tex

As discussed in Section \ref{eq:sub_ge}, if servers are placed on a $1$-d line according to a renewal process with requests being Poisson distributed, the service time distribution for the first batch in a busy period differs from those of subsequent batches. Below we derive expressions for queue length distribution and expected request distance for ESABQ under PRGS.

\subsubsection{Queue length distribution}\label{sub:ql_prgs}
We use a supplementary variable technique to derive the queue length distribution for ESABQ under PRGS as follows.

Let $L(t)$  be the number of customers at time $t\geq 0$, $R(t)$ the residual service time at time $t\geq 0$ (with $R(t)=0$ if $L(t)=0$), and $I(t)$ the type of service at time $t\geq 0$ with $I(t)=1$ (resp. $I(t)=2$) if exceptional (resp. ordinary) service time.

Let us write the Chapman-Kolmorogov equations for the Markov chain $\{(L(t), R(t), I(t)),\, t\geq 0\}$.

For $t\geq 0$, $n\geq 1$, $x>0$, $i=1,2$ define 
\begin{eqnarray*}
p_t(n,x;i) = \P(L(t)=n, R(t)<x, I(t)=i) \quad \hbox{and} \quad p_t(0) = \P(L(t)=0). 
\end{eqnarray*}
Also, define for $x>0$, $i=1,2$,
\[
p(n,x;i)=\lim_{t\to\infty} p_t(n,x;i) \quad \hbox{and} \quad  p(0)=\lim_{t\to\infty}p_t(0).
\]
By analogy with the analysis for the M/G/1 queue we get 
\[
\frac{\partial}{\partial t} p_t(0)=-\lambda p_t(0)+ \sum_{k=1}^c \frac{\partial }{\partial x} p_t(k,0;1) + \sum_{k=1}^c \frac{\partial}{\partial x} p_t(k,0;2),
\]
so that, by letting $t\to\infty$,
\begin{equation}
\lambda p(0)=\sum_{k=1}^c \left(\frac{\partial }{\partial x} p(k,0;1) +\frac{\partial}{\partial x} p(k,0;2)\right).
\label{eq:p0}
\end{equation}

With further simplification (See Appendix \ref{sub:ck}), for $n\geq 1, x>0$ we get
\begin{align}
&\frac{\partial}{\partial x} g(n,x)-\lambda g(n,x)- \frac{\partial}{\partial x} g(n,0)+
\lambda g(n-1,x){\bf 1}(n\geq 2) +\lambda p(0)F_Z(x){\bf 1}(n=1)+ F_X(x) \frac{\partial}{\partial x} g(n+c,0) = 0,
\label{eq:21}
\end{align}
\noindent where $g(n,x)=p(n,x;1)+p(n,x;2)$ for $n\geq 1$, $x>0$. Introduce 
\[
G(z,s):=\sum_{n\geq 1}z^n \int_0^\infty e^{-sx}  g(n,x) dx \quad \forall |z|\leq 1, \,s\geq 0. 
\]
Denote by $F_{Z}^*(s)=\int_0^\infty e^{-sx} dF_{Z}(x)$ the LST of $F_{Z}$ for $s\geq 0$. Note that
\[
\int_0^\infty e^{-s x} F_{Z\hbox{\tiny{or}}X}(x)dx=\frac{F^*_{Z\hbox{\tiny{or}}X}(s)}{s}, \quad \forall s>0.
\]
Multiplying both sides of (\ref{eq:21}) by $z^n e^{-sx}$, integrating over $x\in [0,\infty)$ and summing over all $n\geq 1$, yields
\begin{align}
s\left(\lambda(1-z)-s\right)G(z,s) = &\lambda zp(0)F^*_Z(s)- \sum_{n\geq 1}z^n \frac{\partial}{\partial x} g(n,0)+F^*_X(s)\sum_{n\geq 1} z^n \frac{\partial}{\partial x} g(n+c,0))
\label{eq:22}
\end{align}
where $\lambda p(0)= \sum_{k=1}^c \frac{\partial}{\partial x}g(k,0)$ from (\ref{eq:p0}). We have
\begin{align}
\frac{1}{z^c}\sum_{n\geq 1} z^{n+c} \frac{\partial}{\partial x} g(n+c,0)) %&= \frac{1}{z^c}\sum_{n\geq c+1} z^{n} \frac{\partial}{\partial x} g(n,0) \nonumber\\ 
= \frac{1}{z^c}\sum_{n \geq 1} z^n\frac{\partial}{\partial x} g(n,0)-\frac{1}{z^c}H(z)
\end{align}
where $H(z)=\sum_{k=1}^c  z^k a_k$ with $a_k:=\frac{\partial}{\partial x} g(k,0),$ for $k=1,\ldots,c$. Introducing the above into (\ref{eq:22}) gives
{\small
\begin{align}
s\left(\lambda(1-z)-s\right)G(z,s)=&\left(\frac{F^*_X(s)}{z^c}-1\right)\Psi(z) -F^*_X(s)\frac{H(z)}{z^c}+\lambda zp(0)F^*_Z(s)\label{eq:23}
\end{align}
}

\noindent where $\Psi(z):=\sum_{n\geq 1}z^n \frac{\partial}{\partial x}g(n,0)$.
Since $G(z,s)$ is well-defined for $|z|\leq 1$ and $s\geq 0$, the r.h.s. of (\ref{eq:23}) must vanish when $s=\lambda(1-z)$.   
This gives the relation
\[
\Psi(z) = \frac{z^c}{z^c- F^*_X(\theta(z))} \left[  -F^*_X(\theta(z))\frac{H(z)}{z^c}+\lambda zp(0)F^*_Z(\theta(z))\right]
\]
with $\theta(z)=\lambda(1-z)$ and $|z|\leq 1$. Introducing the above in (\ref{eq:23}) gives
\begin{align}
s\left(\lambda(1-z)-s\right)G(z,s)= -F^*_X(s)\frac{H(z)}{z^c}+\lambda zp(0)F^*_Z(s) +\frac{F^*_X(s)-z^c}{z^c-F^*_X(\theta(z))}\left[\lambda zp(0)F^*_Z(\theta(z)) -F^*_X(\theta(z))\frac{H(z)}{z^c}\right].
\label{eq:24}
\end{align}
Let $N(z)$ be the $z$-transform of the stationary number of customers in the system. \np{Integrating by part, we get for $n\geq 1$,
\[
s\int_0^\infty e^{-sx} g(n,x)dx= \int_0^\infty e^{-sx}dg(n,x),
\]
so that 
\begin{equation}
\label{Nz}
\lim_{s\to\infty}s\int_0^\infty e^{-sx} g(n,x)dx = \lim_{s\to 0}  \int_0^\infty e^{-sx}dg(n,x)= \int_0^\infty dg(n,x)=g(n,\infty),
\end{equation}
where the interchange between the limit and the integral sign is justified by the bounded convergence theorem. Therefore, 
\begin{eqnarray}
N(z)&=&\sum_{n\geq 1} z^n g(n,\infty)+p(0)\nonumber\\
&=& \sum_{n\geq 1} z^n \lim_{s\to\infty}s\int_0^\infty e^{-sx} g(n,x)dx\quad \hbox{from } (\ref{Nz})\nonumber \\
&=&\lim_{s\to 0} s G(z,s)+p(0), \label{Nz-2}
\end{eqnarray}
where the  interchange  between the summation over $n$ and the integral sign is again justified by the bounded convergence theorem.}
Letting now $s\to 0$ in (\ref{eq:24}) and using (\ref{Nz-2}), gives
\begin{align}
\theta(z)N(z)= \frac{1-z^c}{z^c -F^*_X(\theta(z))}&\left[-F^*_X(\theta(z))\frac{H(z)}{z^c}+\lambda zp(0)F^*_Z(\theta(z))\right]- \frac{H(z)}{z^c}+\lambda  p(0). \label{eq:25} 
\end{align}

By noting that $\lambda p(0)=\sum_{k=1}^c a_k$ (cf. (\ref{eq:p0})), Eq. (\ref{eq:25}) can be rewritten as
%\begin{framed}

\begin{align}
N(z)=\frac{1}{\theta(z)}&\bigg(\frac{z(1-z^c)}{z^c-F^*_X(\theta(z))} \sum_{k=1}^c a_k\left[F^*_Z(\theta(z))-z^{k-c-1}F^*_X(\theta(z))\right] +\sum_{k=1}^c a_k (1-z^{k-c})\bigg).
%&=&\frac{1}{\theta(z)}\left[- \frac{1}{z^c} \sum_{k=1}^c z^k a_k + \sum_{k=1}^c a_k +\frac{1-z^c}{z^c-F^*_X(\theta(z))}\left[z F^*_Z(\theta(z)) \sum_{k=1}^c a_k -\frac{F^*_X(\theta(z))}{z^c}\sum_{k=1}^c z^k a_k\right]\right]
\label{eq:30a}
\end{align}

The r.h.s. of (\ref{eq:30a}) contains $c$ unknown constants $a_1,\ldots, a_c$ yet to be determined. Define $A(z)=F^*_X(\theta(z))$. It can be shown that $z^c-A(z)$ has $c-1$ zeros inside and one on the unit circle, $|z|= 1$ (See Appendix \ref{app-mg1-rouche}). Denote by $\xi_1,\ldots,\xi_q$ the $1\leq q\leq c$ distinct zeros of $z^c -A(z)$ in $\{|z|\leq 1\}$, with multiplicity $n_1,\ldots,n_q$, respectively, with $n_1+\cdots+n_q=c$. Hence, 
\[
z^c- F_X^*(k(z))= \gamma \prod_{i=1}^q (z-\xi_i)^{n_i}.
\]
Since $z^c -A(z)$ vanishes when $z=1$ and that $\frac{d}{dz}(z^c -A(z))|_{z=1}=c-\rho>0$, we conclude that $z^c -A(z)$ has one zero of multiplicity one at $z=1$.

Without loss of generality assume that $\xi_q=1$ and let us now focus on the zeros $\xi_1,\ldots,\xi_{q-1}$. When $z=\xi_i$, $i=1,\ldots,q-1$, 
the term $F^*_Z(\theta(z))-z^{k-c-1}F^*_X(\theta(z))$ in (\ref{eq:30a}) must have a zero of multiplicity (at least) $n_i$ since $N(\xi_i)$ is well defined. This gives $c-1$ linear equations to
be satisfied by $\xi_1,\ldots, \xi_q$. In the particular case where all zeros have multiplicity one (see Appendix \ref{sub:rem1}), namely $q=c$, these $c-1$ equations are
\begin{equation}
\label{eq:1-c-1}
\sum_{k=1}^c a_k\left[F^*_Z(\theta(\xi_i))- \xi_i^{k-c-1}F^*_X(\theta(\xi_i))\right]=0, \,\, i=1,\ldots, c-1.
\end{equation}
With $U(z):=F^*_Z(\theta(z))/F^*_X(\theta(z))$ (\ref{eq:1-c-1}) is equivalent to
\begin{equation}
\label{eq:30}
\sum_{k=1}^c a_k\left[U(\xi_i)- \xi_i^{k-c-1})\right]=0, \,\, i=1,\ldots, c-1,
\end{equation}
since $F^*_X(\theta(\xi_i))\not=0$ for $i=1,\ldots,c-1$ ($F^*_X(\theta(\xi_i))=0$ implies that $\xi_i$=0 which contradicts that $\xi_i$ a zero of $z^c-F^*_X(\theta(z))$
since $F^*_X(\theta(0))= F^*_X(\lambda)>0$).
Eq. (\ref{eq:30a}) can be rewritten as
{\footnotesize
\begin{equation}
N(z)
=\frac{ \sum_{k=1}^c a_k\left[z^c-z^k+ z(1-z^c) F^*_Z(\theta(z))- (1-z^k)F^*_X(\theta(z))\right]}{\theta(z) (z^c-F^*_X(\theta(z))}.
\label{value:Nz}
\end{equation}
}
A $c$-th equation is provided by the normalizing condition $N(z)=1$. Since the numerator and denominator in (\ref{value:Nz}) have a zero of order $2$ at $z=1$, 
differentiating twice the numerator and the denominator w.r.t $z$ and letting $z=1$ gives
%\[
%1=\frac{\sum_{k=1}^c a_k (c(1+\rho_z)-\rho k)}{\lambda (c-\rho)}
%\]
\begin{equation}
\label{eq:c}
\sum_{k=1}^c a_k (c(1+\rho_z)-\rho k) = \lambda (c-\rho),
\end{equation}

\noindent where $\rho_z = \lambda\alpha_Z.$ We consider few special cases of the model in Appendix \ref{app:verify} and verify with the expressions of queue length distribution available in the literature.

%If the system of $c$ linear equations and $c$ unknowns identified above has a non-zero determinant then this system as a unique solution $a_1,\ldots,a_c$.
%\pn{One needs to show that this determinant is non-zero.}\\

%
%
%\paragraph{Expected queue-length}

%{\bf Nota}: The term $(\lambda/2)\sum_{k=1}^c a_k [\cdots ]$ in (\ref{eq:50}) is the coefficient of $(z-1)^4/4!$ in the series expansion  at $z=1$ of the numerator in (\ref{value:Nz}).
%The constant term and the coefficients of $(z-1)$,  $(z-1)^2$  and $(z-1)^3$ in this series expansion are all equal to zero. 
%
%Also, $\theta(z)^2(z^c -F^*_X(\theta(z)))^2=\lambda^2(c-\rho)^2 (z-1)^4 +o((z-1))^4$, which explains the presence of the term $\lambda^2(c-\rho)^2$ in the denominator in 
%(\ref{eq:50}).

\subsubsection{Expected request distance}
From \eqref{value:Nz} the expected queue length is
{\footnotesize
\begin{align}
&\overline{N}=\frac{d}{dz} N(z)\Big|_{z=1}\nonumber\\
&= \frac{1}{2\lambda(c-\rho)^2}\sum_{k=1}^c a_k\Biggl[
\lambda^2 \sigma^{(2)}_Zc(c-\rho)+ \lambda^2 \sigma^{(2)}_X c(1+\rho_z-k)+(ck(c-k)+k(k-1)\rho-c(c-1))\rho+2c^2 \rho_z -c(c+1)\rho_z\rho\Biggr],
\label{eq:mean-QL}
%&=\frac{1}{\theta(z)^2 (z^c-F^*_X(\theta(z)))^2}\times \sum_{k=1}^c a_k\Biggl[\Biggl(cz^{c-1}-kz^{k-1}+(1-(c+1)z^c)F^*_Z(\theta(z))\\
%&- \lambda z(1-z^c)(F^*_Z)'(\theta(z))+kz^{k-1}F^*_X(\theta(z))+\lambda(1-z^k) (F^*_X)'(\theta(z))\Biggr) \theta(z)(z^c-F^*_X(\theta(z)))\\
%&- \Biggl(-\lambda(z^c -F^*_X(\theta(z)))+\lambda(1-z)(c z^{c-1}+\lambda (F^*_X)'(\theta(z))\Biggr)\\
%&\times \Biggl(z^c - z^k +z(1-z^c)F^*_Z(\theta(z))-(1-z^k)F^*_X(\theta(z))\Biggr)\Biggr] \Bigg|_{z=1}.
\end{align}
}

\noindent where $\sigma^{(2)}_Z$ and $\sigma^{(2)}_X$ are the second order moments of distributions $F_Z$ and $F_X$ respectively. Again by applying Little's law and considering the analogy between our service network with ESABQ we get the following expression for the expected request distance.
\begin{align}
\mathbb{E}[D] = \overline{N}/\lambda.
\end{align} 

\section{Discussion of Unidirectional Allocation Policies}\label{sec:gen}
%%%!TEX PS-program = pdflatexmk
%%!TEX root = 1d_sigmetrics19.tex

In this section we describe generalizations of models and results for unidirectional allocation policies. We first consider the case when inter-user and inter-server distances both have general distributions.

\begin{figure}%[htbp]
\centering
%\hspace{-0.5cm}
\includegraphics[width=0.4\textwidth]{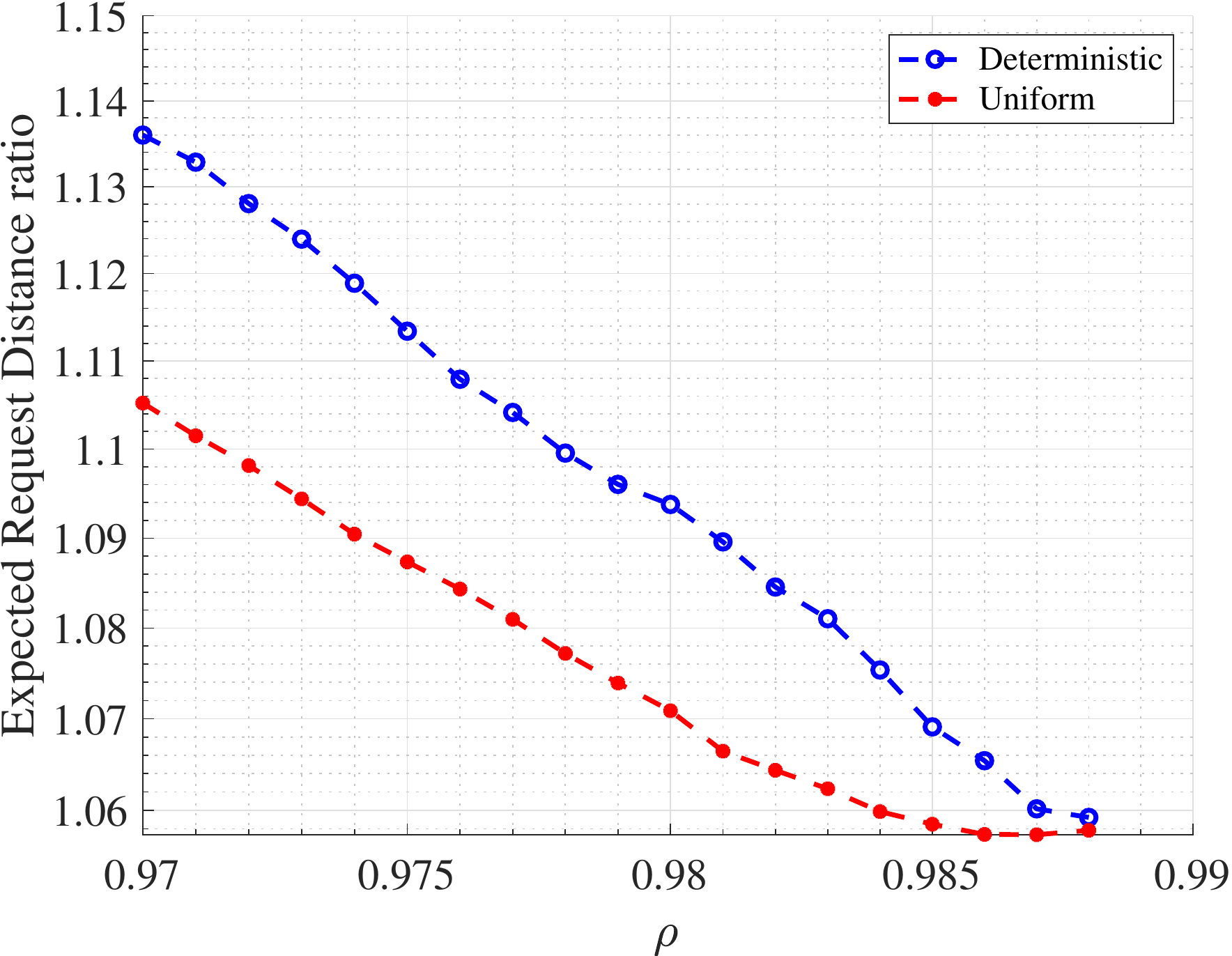}
%\vspace{-0.05in}
\caption{The plot shows the ratio $\mathbb{E}[D]/\overline{D}_{s}$ for deterministic and uniform inter-server distance distributions.}%and generalized Pareto distribution (c), (d).}
\label{1d-grid-temp-vs-spat-deterministic}
\vspace{-0.15in}
\end{figure}
\subsection{Heavy traffic limit for general request and server spatial distributions}
%\begin{figure}[htbp]
%\centering
%\includegraphics[width=1\linewidth]{}
%\vspace{-0.1in}
%\caption{Average request distance comparison for Spatial G/G/1 and Temporal G/G/1 system under uniform inter-server and inter-requester distance distribution.}
%\label{1d-grid-temp-vs-spat-uniform-gg1}
%\end{figure} 
Consider the case when the inter-user and inter-server distances each are described by general distributions. We assume server capacity, $c=1$. As $\rho \to 1$, we conjecture that the behavior of MTR approaches that of the G/G/1 queue. One argument in favor of our conjecture is the following. As $\rho \to 1$, the busy cycle duration tends to infinity. Consequently, the impact of the exceptional service for the first customer of the busy period on all other customers diminishes to zero as there is an unbounded increasing number of customers served in the busy period.

It is known that in heavy traffic waiting times in a G/G/1  queue  are exponential distributed and the mean sojourn time is given by $\alpha_X+[(\sigma_X^2+\sigma_Y^2)/2\alpha_Y(1-\rho)]$  \cite{Gross98}. We expect the expected request distance to exhibit similar behavior. Thus we have the following conjecture.

\begin{conjecture}
At heavy traffic i.e. as $\rho \to 1$, the expected request distance for the G/G/1 spatial system with $c=1$ is given by
\begin{align}
\mathbb{E}[D] = \alpha_X + \frac{\sigma_X^2+\sigma_Y^2}{2\alpha_Y(1-\rho)}.\label{eq:temp_g_g_1}
\end{align}
\end{conjecture}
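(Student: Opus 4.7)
The plan is to leverage the ESABQ mapping of Section \ref{sec:mg1} specialized to $c=1$, which reduces the spatial system to a G/G/1 queue with \emph{exceptional first service}: the first customer of each busy period has service time distributed as $Z$ (the conditioned overshoot $X-Y \mid X>Y$ from Section \ref{eq:sub_ge}), while all subsequent customers have service time distributed as $X$. By Theorem \ref{lm:queue_mtr} and the FCFS interpretation of MTR noted in Remark \ref{remark:variance}, expected request distance coincides with the mean sojourn time in this queue, so it suffices to evaluate that sojourn time as $\rho \to 1$.

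First I would recall Kingman's heavy-traffic theorem for the ordinary G/G/1 queue: if $\widetilde{W}$ denotes its stationary waiting time, then $(1-\rho)\widetilde{W}$ converges in distribution to an exponential random variable with mean $(\sigma_X^2+\sigma_Y^2)/(2\alpha_Y)$, and under a moment condition of order $2+\varepsilon$ on $X$ and $Y$ this convergence extends to convergence of means, giving $\mathbb{E}[\widetilde{W}] \sim (\sigma_X^2+\sigma_Y^2)/(2\alpha_Y(1-\rho))$. Adding the mean service time $\alpha_X$ yields exactly the conjectured expression (\ref{eq:temp_g_g_1}), so the entire task reduces to showing the exceptional-service modification is absorbed in the $o(1/(1-\rho))$ correction.

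Next I would quantify the influence of the exceptional service by a renewal-reward argument: in steady state, the fraction of customers whose service time is $Z$ rather than $X$ equals $1/\mathbb{E}[B_C]$, where $B_C$ is the number of customers served per busy cycle, and $\mathbb{E}[B_C] = 1/(1-\rho)$ by the usual work-conservation identity for any work-conserving G/G/1 discipline. Hence the per-customer average contribution of exceptional services is $O(1-\rho)$, which is negligible compared with the $\Theta(1/(1-\rho))$ Kingman term. I would make this rigorous by a coupling that sandwiches the exceptional queue between two ordinary G/G/1 queues with service distributions $X$ and (a suitable mixture involving) $Z$, invoking monotonicity of mean waiting time in stochastic order of service times (Loynes' lemma); the gap between the two bounds is then controlled by the vanishing exceptional fraction.

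The main obstacle will be the uniform-integrability step required to promote the distributional heavy-traffic limit to a limit of means, and in particular verifying that $\mathbb{E}[Z^2]$ stays bounded uniformly as $\rho \uparrow 1$. From the explicit representation in (\ref{eq:ge_final}), $Z$ is the overshoot of $X$ over an independent $Y$ conditioned on $X>Y$, so $\mathbb{E}[Z^2] \leq \mathbb{E}[X^2]/\mathbb{P}(X>Y)$; since $\mathbb{P}(X>Y)$ is bounded away from $0$ as $\rho \uparrow 1$ (both $F_X$ and $F_Y$ have fixed shape and $\rho$ modulates only the mean ratio), finite $\mathbb{E}[X^2]$ suffices. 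A subtler point is that the Kingman limit was originally proved for i.i.d.\ service times, whereas the exceptional-service sequence is only stationary and one-dependent across busy cycles; this can be handled by appealing to the regenerative structure at busy-cycle boundaries and applying a renewal-reward version of Kingman's bound (as in Whitt's refinement), which is the step I would expect to require the most care.
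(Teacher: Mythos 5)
First, note that the paper does not actually prove this statement: it is deliberately left as a conjecture, supported only by the informal argument that the exceptional first service of each busy period is washed out as $\rho \to 1$ (because the number of customers served per busy period diverges) and by the simulation evidence in Figure \ref{1d-grid-temp-vs-spat-deterministic}. Your plan is essentially a formalization of that same heuristic --- Kingman's heavy-traffic limit for the ordinary G/G/1 queue plus an argument that the exceptional-service perturbation contributes only $o(1/(1-\rho))$ --- so in spirit you are following the authors' intended route rather than a different one.

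That said, there are concrete gaps that keep this from being a proof. The most serious is your identification of the exceptional service: the representation $Z \stackrel{d}{=} (X-Y \mid X>Y)$ from Section \ref{eq:sub_ge}, and hence the bound $\mathbb{E}[Z^2]\le \mathbb{E}[X^2]/\mathbb{P}(X>Y)$ you use for uniform integrability, are valid only under PRGS, i.e.\ when inter-user distances are exponential; the paper itself observes that $Z_{j+1}$ converges to a well-defined random variable only when one of the two renewal processes is Poisson. In the genuinely G/G/1 spatial system the first ``service'' of a busy period is the residual life of the server renewal process at a busy-period-initiation epoch; these residuals are neither i.i.d.\ nor independent of the arrival process, and by the inspection paradox controlling their second moment generally requires a third moment of $X$. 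So the exceptional-service queue you propose to sandwich is not even well specified in the general case. Second, the sandwich itself is not constructed: $Z$ is in general neither stochastically larger nor smaller than a fresh $X$, so Loynes-type monotonicity does not immediately yield two ordinary G/G/1 bounds with matching heavy-traffic limits; one would need, say, a lower bound with zeroed-out first services together with a separate estimate that this modification changes the mean wait by $o(1/(1-\rho))$. Third, $\mathbb{E}[B_C]=1/(1-\rho)$ is an M/G/1 identity resting on PASTA; for general arrivals the customer-average fraction of exceptional services is the probability that an \emph{arrival} finds the system empty, and your quantitative $O(1-\rho)$ claim for it needs justification. None of these is obviously fatal, but each is precisely where a proof would have to do real work --- which is presumably why the authors left the statement as a conjecture verified only numerically.
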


Denote by $\overline{D}_{s}$ the average request distance as obtained from simulation. We plot the ratio $\mathbb{E}[D]/\overline{D}_{s}$ across various inter-request and inter-server distance distributions in Figure \ref{1d-grid-temp-vs-spat-deterministic}. It is evident that as $\rho \to 1,$ the ratio $\mathbb{E}[D]/\overline{D}_{s}$ converges to $1$ across different inter-server distance distributions. 

\subsection{Heterogeneous server capacities under PRGS}\label{sec:het}
We now proceed to analyze a setting where server capacity is a random variable. Assume server capacity $\mathcal C$ takes values from $\{1,2,\ldots,c\}$ with distribution $\texttt{Pr}(\mathcal C=j) = p_j, \forall j\in\{1,2,\ldots,c\}$, s.t. $\sum_{j=1}^c p_j= 1$ and $p_c > 0.$ We also assume the stability condition $\rho<\overline{\mathcal C}$ where $\overline{\mathcal C}$ is the average server capacity. Denote $H$ as the random variable associated with number of requests that traverse through a point just after a server location\footnote{An analysis for the distribution of number of requests that traverse through any random location would involve the notions of exceptional service and accessible batches.}.
\subsubsection{Distribution of $H$}
Let $V$ denote the number of new requests generated during a service period with $k_v = \texttt{Pr}(V=v), \forall v\ge0.$ According to the law of total probability, it holds that
\begin{align}
k_v = \int\limits_0^\infty \texttt{Pr}(V=v|X=\nu)f_X(\nu)= \frac{1}{v!}\int\limits_0^\infty e^{-\lambda\nu}(\lambda\nu)^vdF_X(\nu).
\end{align} 
\noindent Then the corresponding generating function $K(z)$ is denoted by
\begin{align}
K(z) = \sum\limits_{v=0}^{\infty}k_vz^v = F_X^{*}(\lambda(1-z)).
\end{align} 
We now consider an embedded Markov chain generated by $H$. Denote the corresponding transition matrix as $M.$ Then we have
{\footnotesize
\begin{equation}\label{eq:utility}
    M_{m,l}=
\begin{cases}
     \sum\limits_{i=0}^{c-m}k_iP_{i+m},& 0\le m\le c, l = 0;\\
     \sum\limits_{i=0}^{c}k_{i+l-m}p_i,& 0\le m\le l, l \ne 0;\\
     \sum\limits_{i=m-l}^{c}k_{i+l-m}p_i,& l+1\le m\le c+l, l \ne 0;\\
     0, & o.w.,
\end{cases}
\end{equation}
}	

\noindent where $P_{i} = \sum_{j=i}^{c}p_j$ and $p_0 = 0.$ Let $\pi = [\pi_j, j\ge0]$ and $N(z) = \sum_{j\ge0}\pi_jz^j$  denote the steady state distribution and its $z$-transform respectively. $\pi$ is obtained out by solving
\begin{align}
\pi_l = \sum\limits_{m=0}^{\infty}\pi_mM_{m,l}, l=0,1,\ldots.
\end{align}
Thus we have for $l\in\mathbb{N},$
\begin{align}
\pi_0 &= \sum\limits_{m=0}^{c}\pi_m\sum\limits_{i=0}^{c-m}k_iP_{i+m}, \;\pi_l = \sum\limits_{m=0}^{l}\pi_m\sum\limits_{i=0}^{c}k_{i+l-m}p_i + \sum\limits_{m=l+1}^{c+l}\pi_m\sum\limits_{i=m-l}^{c}k_{i+l-m}p_i.
\end{align} 
Multiplying by $z^l$ and summing over $l$ gives
{\footnotesize
\begin{align}
N(z) &= E_{\pi} + v_1(z) + v_2(z)\label{eq:pi_var_c}\\
E_{\pi} &= \pi_0\sum\limits_{i=0}^{c-1}k_iP_{i+1}+\sum\limits_{m=1}^{c-1}\pi_m\sum\limits_{i=m}^{c-1}k_{i-m}P_{i+1}\\
v_1(z) &= \sum\limits_{l=0}^\infty z^l \sum\limits_{m=0}^{l}\pi_m\sum\limits_{i=0}^{c}k_{i+l-m}p_i \label{eq:lhs_var_c}\\
v_2(z) &= \sum\limits_{l=0}^\infty z^l \sum\limits_{m=l+1}^{c+l}\pi_m\sum\limits_{i=m-l}^{c}k_{i+l-m}p_i.\label{eq:rhs_var_c}
\end{align} 
}

The expressions for $v_1(z)$ and $v_2(z)$ can be further simplified (see Appendix \ref{app-het}) to
{\footnotesize
\begin{align}
v_1(z) &= N(z)\bigg\{\sum\limits_{i=0}^{c}p_iz^{-i}\bigg[K(z)-\sum\limits_{j=0}^{i}k_jz^j\bigg]+\sum\limits_{i=0}^{c}k_iz^i\bigg\}\label{eq:l_var_c}\\
v_2(z) &= \bigg[\sum\limits_{m=0}^{c}z^{-m}\sum\limits_{i=m}^{c}k_{i-m}p_i\bigg\{N(z)-\sum\limits_{j=0}^{m-1}\pi_jz^j\bigg\}\bigg] - N(z)\sum\limits_{i=0}^{c}k_iz^i.\label{eq:r_var_c}
\end{align} 
}

Combining \eqref{eq:pi_var_c}, \eqref{eq:l_var_c} and \eqref{eq:r_var_c} yields
{\footnotesize
\begin{align}
N(z) = E_{\pi} &+ N(z)\bigg\{K(z)\sum\limits_{i=0}^{c}p_iz^{-i}\bigg\}-\sum\limits_{j=0}^{c-1}\pi_j\sum\limits_{m=1}^{c-j}z^{-m}\sum\limits_{i=m+j}^{c}k_{i-(m+j)}p_i.
\end{align} 
}
Thus we obtain
{\footnotesize
\begin{align}\label{eq:het_l}
N(z) = \frac{E_{\pi} -\sum\limits_{j=0}^{c-1}\pi_j\sum\limits_{m=1}^{c-j}z^{-m}\sum\limits_{i=m+j}^{c}k_{i-(m+j)}p_i}{1-K(z)\sum\limits_{i=0}^{c}p_iz^{-i}}.
\end{align} 
}
Multipying numerator and denominator by $z^c$ yields
{\footnotesize
\begin{align}\label{eq:het_l}
N(z) = \frac{z^cE_{\pi} -\sum\limits_{j=0}^{c-1}\pi_j\sum\limits_{m=1}^{c-j}z^{c-m}\sum\limits_{i=m+j}^{c}k_{i-(m+j)}p_i}{z^c-K(z)\sum\limits_{i=0}^{c}p_{c-i}z^{i}}.
\end{align} 
}
To determine $N(z)$, we need to obtain the probabilities $\pi_i,0\le i\le c-1.$ It can be shown that the denominator of  \eqref{eq:het_l} has $c-1$ zeros inside and one on the unit circle, $|z|= 1$ (See Appendix \ref{app-het-rouche}). As $N(z)$ is analytic within and on the unit circle, the numerator must vanish at these zeros, giving rise to $c$ equations in $c$ unknowns.

Let $\xi_q: 1\leq q\leq c$ be the zeros of $z^c-K(z)\sum_{i=0}^{c}p_{c-i}z^{i}$ in $\{|z|\leq 1\}$. W.l.o.g let $\xi_c = 1.$ We have the following $c-1$ equations.
{\footnotesize
\begin{align}
E_{\pi} -\sum\limits_{j=0}^{c-1}\pi_j\sum\limits_{m=1}^{c-j}\xi_q^{-m}\sum\limits_{i=m+j}^{c}k_{i-(m+j)}p_i=0, \,\, i=1,\ldots, c-1,
\end{align}
}

A $c$-th equation is provided by the normalizing condition $ \lim_{z\to 1}$ $N(z)=1$. In the particular case where all zeros have multiplicity one, it can be shown that these $c$ equations are linearly independent\footnote{For all cases evaluated across uniform, deterministic and hyperexponential distributions we found the set of $c$ equations to be linearly independent.}. Once the parameters $\{\pi_i, 0\le i \le c-1\}$ are known, $\mathbb{E}[H]$ can be expressed as
\begin{align}
\mathbb{E}[H] = \overline{H} = \lim_{z\to 1}N^{\prime}(z).
\end{align}
\subsubsection{Expected Request Distance}
To evaluate the expected request distance we adopt arguments from \cite{bailey54}. Consider any interval of length $\nu$ between two consecutive servers. There are on average  $\overline{H}$ requests at the beginning of the interval , each of which must travel $\nu$ distance. New users are spread randomly over the interval and there are on an average $\lambda\nu$ new users. The request made by each new user must travel on average $\nu/2.$ Thus we have
\begin{align}\label{eq:erd_var_c}
\mathbb{E}[D] &= \frac{1}{\rho}\int_0^\infty(\overline{H}\nu+\frac{1}{2}\lambda\nu^2)dF_X(\nu) = \frac{1}{\rho}\bigg[\frac{\overline{H}}{\mu}+\frac{\lambda}{2}\bigg(\sigma_X^2+\frac{1}{\mu^2}\bigg)\bigg].
\end{align}

%\begin{remark}
%Note that, Equation \eqref{eq:erd_var_c} can be generalized to a scenario when the server capacity is a random variable on $\{c_1,c_2,\ldots,c_n\}$ by setting $p_j = 0 \;\forall j\in\{1,\ldots,c_n\}\setminus\{c_1,\ldots,c_n\}$.
%\end{remark}

\subsection{Uncapacitated request allocation}\label{sub:uncap2}
An interesting special case of the unidirectional general matching is the uncapacitated scenario. Consider the case where servers do not have any capacity constraints, i.e. $c=\infty.$ In such a case, all users are assigned to the nearest server to their right.\\

\noindent{\bf GRPS:} When $c\to\infty$ and given $0<r_0<1,$  $r_0 = F^*_Y(\mu-\mu r_0^c) = F^*_Y(\mu).$ Setting $\omega = 1/F^*_Y(\mu) = 1/r_0$ in \eqref{eq:probs_C} and simplifying yields
\begin{align}\label{eq:probs_C2}
C \to 0, \;\texttt{as}\; c\to\infty,\implies \mathbb{E}[D] \to \frac{1}{\mu} \;\texttt{as}\; c\to\infty.
\end{align}
\noindent{\bf PRGS:} Under PRGS, when $c\to\infty$ there exists no request allocated to a server other than the nearest server to its right. Again using Bailey's method as in \cite{bailey54} and setting $\overline{H} = 0$ in \eqref{eq:erd_var_c} we get
\begin{align}\label{eq:probs_prgs}
\mathbb{E}[D] \to \frac{\mu}{2}\bigg(\sigma_X^2+\frac{1}{\mu^2}\bigg) \;\texttt{as}\; c\to\infty.
\end{align}

\subsection{Cost models}\label{sub:cost_model}
\np{
Consider the following generalization of the service network. We define cost of an allocation as the communication cost associated with an allocated request-server pair. Consider communication cost as a function ${\mathcal T}$ of the request distances. Then the expected communication cost across the service network is given as
\begin{align}
\overline{T} = E[\texttt{cost}] = \int\limits_{d=0}^{\infty}{\mathcal T}(d) dW(d),
\end{align}
\noindent where $W$ is the request distance distribution. One such cost model widely used in wireless ad hoc networks is \cite{Doshi02}
\begin{align}
{\mathcal T}(d) = t_0d^{\beta},
\end{align}
}
\noindent \np{where $\beta$ is the path loss exponent typically $2\le \beta\le 4$ and $t_0$ is a constant. Below we derive  the expected communication cost for the scenario when $c=1.$}
\subsubsection{GRPS with $c=1$} \np{In this case the service network directly maps to a temporal G/M/1 queue. Thus $W$ can be expressed as the sojourn time distribution of the corresponding G/M/1 queue. Hence $W\sim\text{Expo}(\mu(1-r_0))$ with $r_0$ as defined in Section \ref{sub:grps1}. We have
\begin{align}
\overline{T} &= \int\limits_{d=0}^{\infty}t_0d^\beta dW(d)= \frac{t_0}{\mu^\beta(1-r_0)^\beta}\Gamma(\beta+1),
\end{align}
}
\noindent \np{where $\Gamma(x) = \int_{0}^{\infty}y^{x-1} e^{-y}dy$ is the gamma function.}
\subsubsection{PRGS with $c=1$} \np{In this case, the service network can be modeled as a temporal M/G/1 queue with first customer having exceptional service \cite{Welch64}. Denote $W^*(s)$ as the LS transform of  $W.$ Using results from \cite{Welch64}
\begin{align}
W^*(s) = \frac{(1-\rho)\left\{\lambda\bigg[F_Z^*(s)-F_X^*(s)\bigg]-sF_Z^*(s)\right\}}{(1-\rho+\rho_Z)\bigg[\lambda-s-\lambda F_X^*(s)\bigg]}.
\end{align}
When $\beta$ is an integer,
\begin{align}
\overline{T} &= t_0(-1)^{\beta}\frac{d^{(\beta)}}{ds}W^*(s)|_{s=0},
\end{align}
}
\np{
\subsection{Extension to two resources}\label{sec:exttwo}
\begin{figure}[]
\centering
%\hspace{-0.5cm}
\begin{minipage}{0.35\textwidth}
\includegraphics[width=0.9\textwidth]{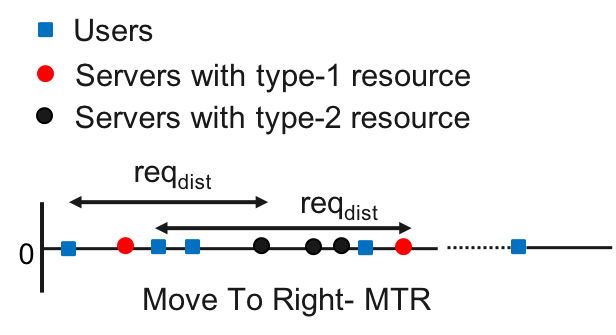}
\subcaption{}
\end{minipage}%\hfill
\begin{minipage}{0.35\textwidth}
\includegraphics[width=0.9\textwidth]{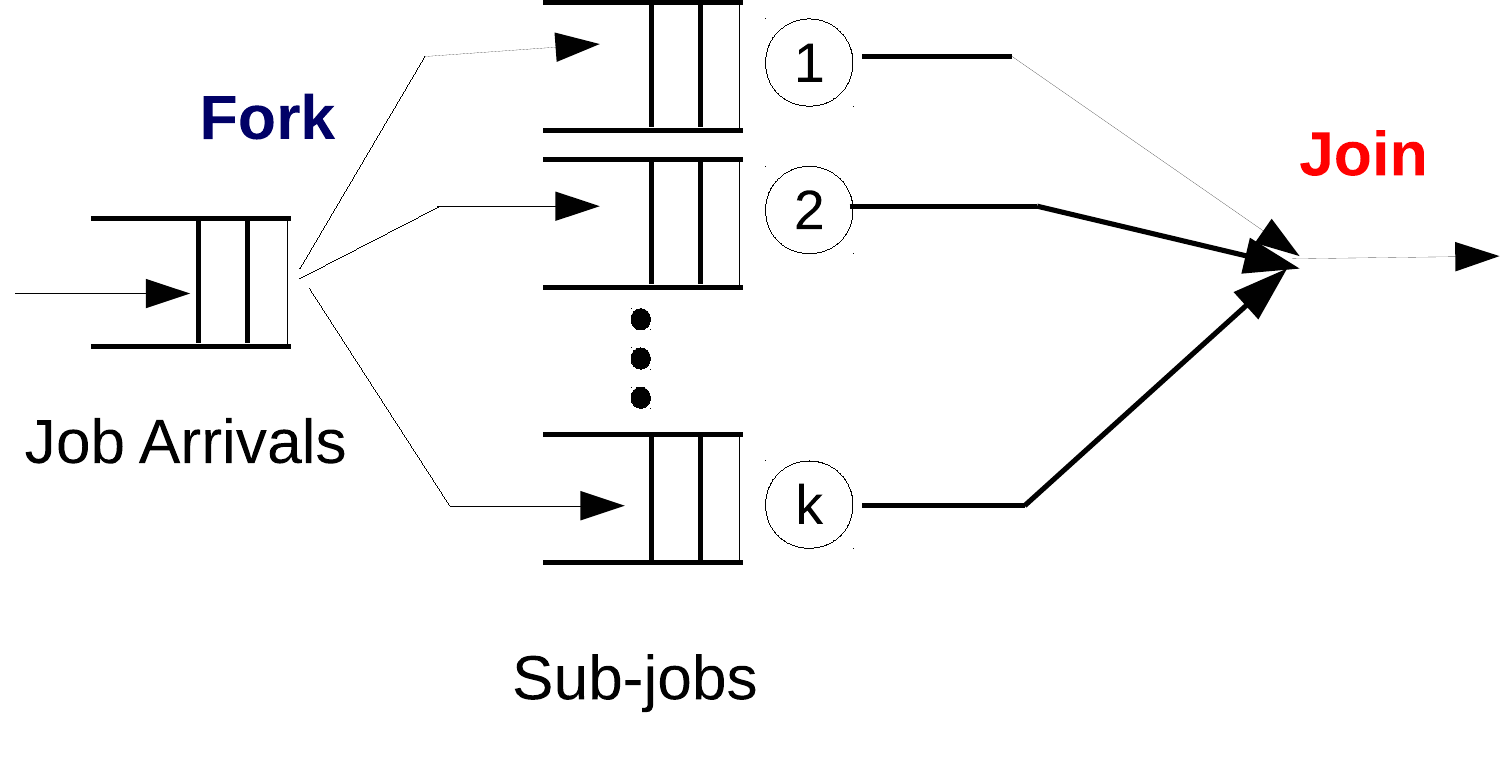}
\subcaption{}
\end{minipage}
\caption{Two resource scenario with $c=1$ (a) Depiction of request distances and (b) Mapping to Fork-join queues.}
\label{fig:two_r}
\vspace{-0.15in}
\end{figure}
Now consider the following scenario where each user requests two resources which reside on different servers as shown in Figure \ref{fig:two_r}(a). Let the corresponding servers be distributed according to a Poisson process with densities $\mu_1$ and $\mu_2$. Let the users be distributed according to a Poisson process. The service network, in this case, can be modeled as a fork-join queueing system as shown in Figure \ref{fig:two_r}(b) \cite{nelson88}. In such a queue, each incoming job is split into two sub-jobs each of which is served on one of the two servers. After service, each sub-job waits until the other sub-job has been processed. They then merge and leave the system. In the service network as well, each request forks two sub-requests one for each resource type. A request is said to be completed only if it has retrieved both the resources, thus mapping it to a fork-join queue. We define the {\it overall request distance} to be the maximum value among the request distances across all resource types and denote it as the random variable $D_{max}$.
\subsubsection{Identical service rates ($\pmb{\mu_1=\mu_2 = \mu}$ and $\pmb{c = 1}$)}
The approximated expected request distance for this scenario is obtained from the expression for the expected sojourn time of a fork join queue with homogeneous servers as \cite{nelson88}:
\begin{align}
\mathbb{E}[D_{max}] = \frac{12\mu-\lambda}{8\mu(\mu-\lambda)}, \label{eq:two-res}
\end{align}
\noindent Note that, the corresponding expected request distance in case of single resource is given by Equation \eqref{eq:c12} $\mathbb{E}[D] = 1/(\mu-\lambda)$. Clearly, 
\begin{align}
\mathbb{E}[D_{max}] = \frac{12\mu-\lambda}{8\mu(\mu-\lambda)} = \left[1.5- 0.125 \rho\right] \frac{1}{\mu-\lambda} > \frac{1}{\mu-\lambda} =  \mathbb{E}[D],
\label{eq:two-res-comp1}
\end{align}
\noindent Thus we have $\mathbb{E}[D_{max}] > \mathbb{E}[D].$
}

\section{Bidirectional Allocation Policies}\label{sec:opt}
%%%!TEX PS-program = pdflatexmk
%%!TEX root = 1d_sigmetrics19.tex

Both UGS and MTR minimize expected request distance among all unidirectional policies. In this section we formulate the bi-directional allocation policy that minimizes expected request distance. Let $\eta: R \to S$ be any mapping of users to servers. Our objective is to find a mapping $\eta^*: R \to S$, that satisfies
\begin{eqnarray} 
\nonumber& & \eta^* = \arg\min_{\eta} \sum_{i\in R} d_{\mathcal{L}} (r_i,s_{\eta(i)})\\
& s.t. & \sum_{i\in R} \pmb{1}_{\eta(i)=j} \leq c, \forall j\in S \label{eq:assg}
\end{eqnarray}
W.l.o.g, let $r_1\le r_2\le\cdots \le r_i\le\cdots \le r_{|R|}$ be locations of requests and $s_1\le s_2\le\cdots \le s_i\le\cdots\le s_{|S|}$ be locations of servers. We first focus on the case when $c=1$. We consider the following two scenarios.\\

%\begin{figure}[]
%\centering
%\includegraphics[width=0.6\linewidth]{}
%%\vspace{-0.1in}
%\caption{An optimal assignment to request allocation problem and formation of groups.}
%\label{opt-grp}
%\vspace{-0.1in}
%\end{figure} 

\noindent {\bf Case 1: \pmb{$|R|=|S|$} }\\
When $|R|=|S|$, an optimal allocation strategy is given by the following theorem \cite{Bukac18}.
\begin{theorem}
When $|R|=|S|$, an optimal assignment is obtained by the policy: $\eta^*(i) = i, \;\forall i \in \{1,\cdots,|R|\}$ i.e. allocating the $i^{th}$ request to the $i^{th}$ server and the average request distance is given by
\begin{align}
\mathbb{E}[D] = \frac{1}{|R|}\sum\limits_{i=1}^{|R|}|s(i)-r(i)|. \label{eq:c1}
\end{align}
\label{th:equal_r_s}
\end{theorem}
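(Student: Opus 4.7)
The plan is to establish the theorem via an exchange argument. The central ingredient is the quadrangle (Monge) inequality for the metric $d(x,y)=|x-y|$ on the real line:

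\begin{lemma*}[Key inequality]
For any real numbers $a\le b$ and $c\le d$,
$|a-c|+|b-d| \;\le\; |a-d|+|b-c|.$
\end{lemma*}

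First I would prove this lemma by case analysis on the interleaving of $\{a,b\}$ with $\{c,d\}$. There are essentially five qualitatively distinct orderings (e.g.\ $a\le b\le c\le d$; $a\le c\le b\le d$; $a\le c\le d\le b$; $c\le a\le b\le d$; $c\le a\le d\le b$; the remaining cases are symmetric), and in each case dropping the absolute values yields an explicit nonnegative difference equal to twice the length of the overlap of the intervals $[a,b]$ and $[c,d]$. This is routine but the cleanest form of the inequality to invoke later.

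Next, I would prove the theorem by induction on $n:=|R|=|S|$. The base case $n=1$ is immediate. For the inductive step, I would show that there exists an optimal assignment with $\eta^*(1)=1$, so that the leftmost request is matched to the leftmost server; removing them reduces the problem to an $(n-1)$-point instance on $\{r_2,\ldots,r_n\}$ and $\{s_2,\ldots,s_n\}$ to which the induction hypothesis applies. To obtain the claim $\eta^*(1)=1$, take any optimal $\eta$ (one exists since there are finitely many permutations, using $c=1$) and suppose $\eta(1)=k>1$. Let $i>1$ be the unique index with $\eta(i)=1$. Apply the key inequality with $a=r_1$, $b=r_i$ (so $a\le b$), $c=s_1$, $d=s_k$ (so $c\le d$): the pair of distances $|r_1-s_k|+|r_i-s_1|$ can only weakly decrease under the swap to $|r_1-s_1|+|r_i-s_k|$, while all other pairs in the assignment are unchanged. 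Hence the swapped assignment is also optimal and has $\eta(1)=1$.

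The only real subtlety is bookkeeping: one must verify that the swap above affects only the two pairs in question and preserves feasibility (trivial here because $c=1$ means the assignment is a permutation, and transposing two values of a permutation produces another permutation). The quadrangle inequality itself is the heart of the argument; everything else is a clean induction. Finally, summing $|r_i-s_{\eta^*(i)}|=|r_i-s_i|$ over $i=1,\ldots,n$ and dividing by $n$ yields the stated average request distance, completing the proof.
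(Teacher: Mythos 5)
Your proposal is correct, and its core — the exchange/uncrossing step justified by the quadrangle inequality $|a-c|+|b-d|\le|a-d|+|b-c|$ for $a\le b$, $c\le d$, verified by case analysis on the interleaving — is exactly the argument the paper relies on: the paper itself only cites an external reference for this theorem, but proves its no-crossing lemma (Lemma \ref{lem:nocross}) by the same six-case ordering analysis, from which the theorem follows since the identity is the only non-crossing bijection when $|R|=|S|$. Your induction on the leftmost pair is a clean, self-contained packaging of that same idea, so no issues.
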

%\begin{proof}
%\end{proof}

\noindent {\bf Case 2: \pmb{$|R|<|S|$} }
This is the case where there are fewer requesters than servers. In this case, a Dynamic Programming ({\it DP}) based  algorithm (Algorithm \ref{algo:opt-dps}) obtains the optimal assignment. 

Let $C[i,j]$ denote the optimal cost (i.e., sum of distances) of assigning the first $i$ requests (counting from the left) located at $r_1\leq r_2\leq \ldots\leq r_i$ to the first $j$ servers (also counting from the left) located at  $s_1\leq s_2\leq \ldots\leq s_j$. If $j==i$, the optimal assignment is trivial due to Theorem \ref{th:equal_r_s} and $C[i,i]$ is computed easily for all $i \leq |R|$ by summing pairwise distances $d[1,1],d[2,2],\ldots,d[i,i]$ (Lines 6--7).
For the base case, $i=1, j>1$, only the first user needs to be assigned to its nearest server (Lines 9--16).
  For the general dynamic programming step, consider $j > i$. Then $C[i,j]$ can be expressed in terms of the costs of two subproblems, i.e., $C[i-1,j-1]$ and $C[i,j-1]$ (Lines 19--24). In the optimal solution, two cases are possible: either request $i$ is assigned to server $j$, or the latter is left unallocated. The former case occurs if the first $i-1$ requests are assigned to the first $j-1$ servers at cost $C[i-1,j-1]$, and the latter case occurs when the first $i$ requests are assigned to the first $j-1$ servers at cost $C[i,j-1]$. This is a consequence of the no-crossing lemma (Lemma \ref{lem:nocross}). The optimal $C[i,j]$ is chosen depending on these two costs and the current distance $d[i,j]$.

\begin{lemma}\label{lem:nocross}
In an optimal solution, $\eta^*,$ to the problem of matching users at $r_1\leq r_2\leq \ldots\leq r_{|R|}$ to servers at $s_1\leq s_2\leq \ldots\leq s_{|S|}$, where $|S| \geq |R|$, there do not exist indices $i,j$ such that $\eta^*(i)>\eta^*(i')$ when $i' > i$.
\begin{proof}
See Appendix \ref{app-lemma}.
%It can be observed that if such a 4-tuple $(i,j,i',j')$ exists, the cost can be reduced by assigning $i$ to $j'$ and $i'$ to $j$, hence we arrive at a contradiction. To show this, consider the six possible cases of relative ordering between $r_i, r_{i'}, s_j, s_{j'}$ which obey $r_i < r_{i'}$ and $s_j > s_{j'}$. We give a pictorial proof in Figure \ref{fig:nocross}\footnote{For ease of exposition, the requesters and servers are shown to be located along two separate horizontal lines, although they are located on the same real-line.}. It is easy to see that in each of the cases, the request distance of the \emph{uncrossed} assignment is either smaller or remains unchanged.
\end{proof}
\end{lemma}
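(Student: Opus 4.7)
The plan is a classical exchange (uncrossing) argument. Assume for contradiction that an optimal assignment $\eta^*$ contains a crossing pair, namely indices $i < i'$ with $\eta^*(i) > \eta^*(i')$. Set $a := r_i$, $b := r_{i'}$, $d := s_{\eta^*(i)}$ and $c := s_{\eta^*(i')}$, so by the sorting conventions $a \le b$ and $c \le d$. Construct a candidate assignment $\eta$ which agrees with $\eta^*$ on every user except $i$ and $i'$, and on those we swap: $\eta(i) = \eta^*(i')$ and $\eta(i') = \eta^*(i)$. Since the swap only permutes which of the two already-used servers is allocated to which of the two users, the unit-capacity constraint (\ref{eq:assg}) with $c = 1$ remains satisfied and $\eta$ is a legal assignment.

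The total cost changes only through the two affected pairs, by
\begin{equation*}
\Delta \;=\; \bigl(|a-c| + |b-d|\bigr) - \bigl(|a-d| + |b-c|\bigr).
\end{equation*}
The key inequality I would establish is the one-dimensional rearrangement fact: for any reals with $a \le b$ and $c \le d$,
\begin{equation*}
|a-c| + |b-d| \;\le\; |a-d| + |b-c|,
\end{equation*}
which is verified by a short case analysis on the six possible interleavings of $\{a,b\}$ with $\{c,d\}$ on the line. (In each case the difference simplifies either to $0$, when both servers are on the same side of both users, or to a strictly positive quantity such as $2(b-c)$, $2(d-a)$, or $2(d-c)$, when the intervals $[a,b]$ and $[c,d]$ overlap.) Hence $\Delta \le 0$.

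In every case where $\Delta < 0$, the new assignment $\eta$ has strictly smaller cost than $\eta^*$, contradicting optimality. This delivers the lemma.

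The main obstacle is the degenerate tie case where $\Delta = 0$: this happens precisely when the two swapped servers both lie (weakly) to one side of both swapped users, in which case the swap produces an equally good assignment. I would handle this by iterating: define the inversion count $\mathrm{inv}(\eta) = \#\{(i,i') : i < i', \eta(i) > \eta(i')\}$, note that any swap of a crossing pair as above strictly decreases $\mathrm{inv}$, and deduce that after finitely many swaps the optimal cost is attained by a crossing-free assignment. If the lemma is interpreted as ``there exists an optimal $\eta^*$ without crossings,'' this iteration finishes the proof; if interpreted strictly, then under the mild genericity assumption that user and server locations yield only strict inequalities in the case analysis (which holds almost surely under the renewal models of Section~\ref{sec:model}), every optimal solution is automatically crossing-free.
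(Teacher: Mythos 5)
Your proposal is correct and takes essentially the same approach as the paper: an exchange (uncrossing) argument with a case analysis over the six interleavings of the two users and two servers, showing the uncrossed cost is never larger. In fact you are more careful than the paper's own proof, which asserts a contradiction with optimality while simultaneously admitting the cost may ``remain unchanged''; your inversion-count iteration (or the reinterpretation of the lemma as existence of a crossing-free optimum, which is all the dynamic program actually needs) correctly closes that $\Delta = 0$ gap.
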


%%%%%%%%
%\begin{figure}[t!]
%\centering
%\includegraphics[width=0.8\columnwidth]{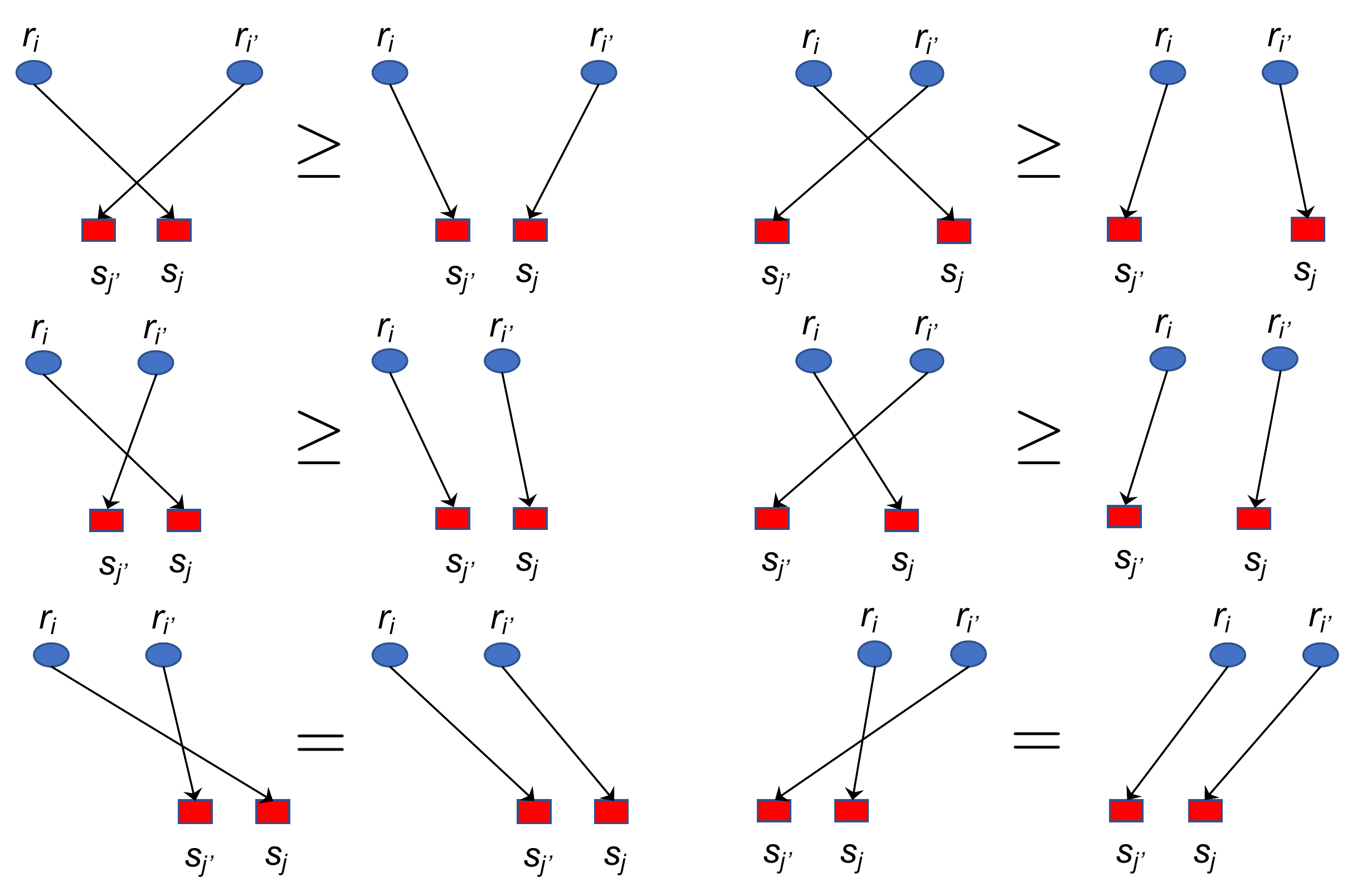}
%\caption{\label{fig:nocross}Uncrossing an assignment either reduces request distance or keeps it unchanged.}
%\end{figure}
%%%%%%%%

The dynamic programming algorithm fills cells in an $|R|\times |S|$ matrix $C$ whose origin is in the north-west corner. The lower triangular portion of this matrix is invalid since $|R| \leq |S|$. The base cases populate the diagonal and the northernmost row, and in the general DP step, the value of a cell depends on the previously computed values in the cells located to its immediate west and diagonally north-west. As an optimization, for a fixed $i$, the $j$-th loop index needs to run only from $i+1$ through $i+|S|-|R|$ (Lines 11 and 18) instead of from $i+1$ through $|S|$. This is because the first request has to be assigned to a server $s_j$ with $j \leq |S|-|R|+1$ so that the rest of the $|R|-1$ requests have a chance of being placed on unique servers\footnote{Note that in this exposition, we consider server capacity $c=1$. If $c>1$, we simply add $c$ servers at each prescribed server location, and requests will still be placed on unique servers.}. The optimal average request distance is given by $C[|R|,|S|]$.

The time complexity of the main DP step is $O(|R| \times (|S|-|R|+1))$. Note that this assumes that the pairwise distance matrix $d$ of dimension $|R|\times |S|$ has been precomputed. The optimization applied above can be similarly applied to this computation and hence the overall time complexity of Algorithm \ref{algo:opt-dps} is $O(|R| \times (|S|-|R|+1))$. Therefore, if $|S|=O(|R|)$, the worst case time complexity is quadratic in $|R|$. However, if $|S|-|R|$ grows only sub-linearly with $|R|$, the time complexity is sub-quadratic in $|R|$.

Note that retrieving the optimal assignment requires more book-keeping. An $|R|\times |S|$ matrix $A$ stores key intermediate steps in the assignment as the DP algorithm progresses (Lines 8, 16, 21, 24). The optimal assignment vector $\pi$ can be retrieved from matrix $A$ using procedure {\sc ReadOptAssignment}.

%%%%%%%%
\begin{algorithm}[ht]
\begin{algorithmic}[1]
\State \textbf{Input}: $r_1\le \cdots \le r_{|R|}$; \;$s_1\le\cdots\le s_{|S|}$
\State \textbf{Output}: The optimal assignment $\pi$

\Procedure{OptDP}{$r,s$}
	\State $d_{|R|\times|S|} = \Call{ComputePairwiseDistances}{r,s}$
	\State $C = \{ \infty \}_{|R|\times|S|}$
	\For {$i = 1,\cdots,|R|$}
		\State $C[i,i] = \Call{TrivialAssignment}{i,d}$
	\EndFor
	\State $A[|R|,|R|] = |R|$
	\State $nearest = 0$
	\State $nearestcost = C[1,1]$
	\For {$j = 2, \cdots, |S|-|R|+1$}
		\If {$d[1,j] < nearestcost$}
			\State $nearestcost = d[1,j]$
			\State $nearest = j$
		\EndIf
		\State $C[1,j] = nearestcost$
		\State $A[1,j] = nearest$
	\EndFor
	\For {$i = 2,\cdots,|R|$}
		\For {$j = i+1,\cdots,i+|S|-|R|$}
			\If {$C[i,j-1] < d[i,j]+C[i-1,j-1]$}
				\State $C[i,j] = C[i,j-1]$
				\State $A[i,j] = A[i,j-1]$
			\Else
				\State $C[i,j] = d[i,j]+C[i-1,j-1]$
				\State $A[i,j] = j$
			\EndIf
		\EndFor
	\EndFor
	\State \Return $\Call{ReadOptAssignment}{A}$
\EndProcedure

\Procedure{TrivialAssignment}{$n,d$}
  \State $Cost = 0$
  \For{$i = 1,\cdots,n$}
    \State $Cost = Cost + d[i,i]$
  \EndFor
  \State \Return $Cost$
\EndProcedure

\Procedure{ReadOptAssignment}{$A$}
  \State ${|R|,|S|} = \Call{Dimensions}{A}$
  \State $s = |S|$
  \For{$i = |R|, \cdots, 1$}
    \State $\pi[i] = A[i,s]$
    \State $s = A[i,s]-1$
  \EndFor
  \State \Return $\pi$
\EndProcedure

\end{algorithmic}
\caption{Optimal Assignment by Dynamic Programming}
\label{algo:opt-dps}
\vspace{-0.1cm}
\end{algorithm}
\begin{figure}[]
\centering
\includegraphics[width=1.0\columnwidth]{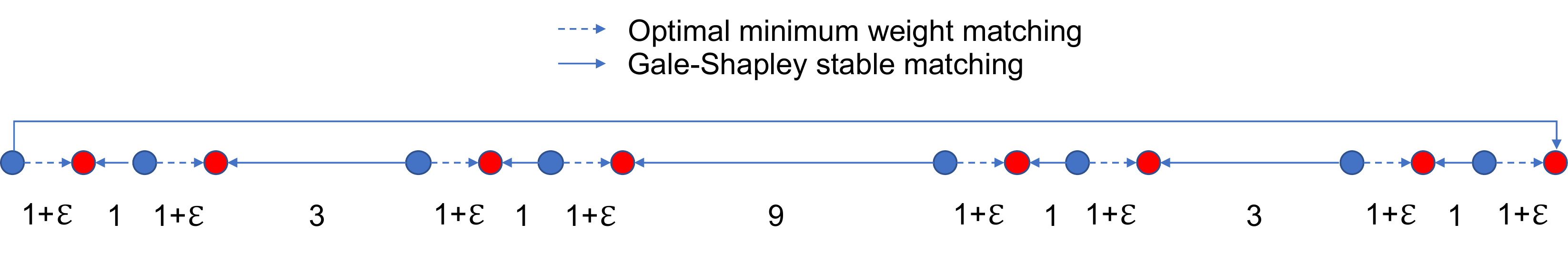}
\caption{\label{fig:opt-vs-gs}Worst case scenario for Gale-Shapley.}
\end{figure}
%%%%%%%%

Another bidirectional assignment scheme is the Gale-Shapley algorithm~\cite{Gale62}, which produces stable assignments, though in the worst case it can yield an assignment that is $O(|R|^{\ln{3/2}})\approx O(|R|^{0.58})$ times costlier than the optimal assignment yielded by Algorithm \ref{algo:opt-dps}, where $|R|$ is the number of users~\cite{RT81}. The worst case scenario is illustrated in Figure \ref{fig:opt-vs-gs}, with $|R| = 2^{t-1}$, where $t$ is the number of clusters of users and servers; and the largest distance between adjacent points is $3^{t-2}$. However at low/moderate loads for the cases evaluated in Section \ref{sec:perfcomp}, we find its performance to be not much worse than optimal.

%\section{Other Heuristics}\label{sec:heuristics}
%\input{07-heuristics}

\section{Numerical Experiments}\label{sec:perfcomp}
%%%!TEX PS-program = pdflatexmk
%%!TEX root = 1d_sigmetrics19.tex

In this section, we examine the effect of various system parameters on expected request distance under MTR policy. We also compare the performance of various greedy allocation strategies along with the unidirectional policies to the optimal strategy. 
%Below we define two of the greedy allocation strategies in a bidirectional system.\\
\subsection{Experimental setup}
In our experiments, we consider a mean requester rate $\lambda \in (0,1).$ We consider various inter-server distance distributions with density one. In particular, (i) for exponential distributions, the density is set to $\mu=1$; (ii) for deterministic distributions, we assign parameter $d_0 = 1.$ (iii) for second order hyper-exponential distribution ($H_2$), denote $p_1$ and $p_2$ as the phase probabilities. Let $\mu_1$ and $\mu_2$ be corresponding phase rates. We assume $p_1/\mu_1  = p_2/\mu_2$. We express $H_2$ parameters in terms of the squared coefficient of variation, $c_v^2$, and mean inter-server distance, $\alpha_X$, i.e. we set $p_1 = (1/2)\big(1+\sqrt{(c_v^2-1)/(c_v^2+1)}\big), p_2 = 1-p_1, \mu_1 = 2p_1/\alpha_X$ and $\mu_2 = 2p_2/\alpha_X.$ Unless specified, for $H_2$ we take $c_v^2 = 4$ with $c=2.$ Also if not specified, users are distributed according to a Poisson process and servers a according to a renewal process.

We consider a collection of $10^5$ users and $10^5$ servers, i.e. $|R|=|S| = 10^5.$ We assign users to servers according to MTR. Let $R_M\subseteq R$ be the set of users allocated under MTR. Clearly $|R_M| \le |R|.$ We then run optimal and other greedy policies on the set $R_M$ and $S.$ For each of the  experiments, the expected request distance for the corresponding policy is averaged over $50$ trials. 
%We solve the system of linear equations in \eqref{eq:30} using a Matlab routine {\it linsolve}. To solve fixed point equations we use a Matlab routine {\it fzero}.
\subsection{Sensitivity analysis}
\begin{figure}[]
\centering
\begin{minipage}{.32\textwidth}
\centering
\includegraphics[width=1\linewidth]{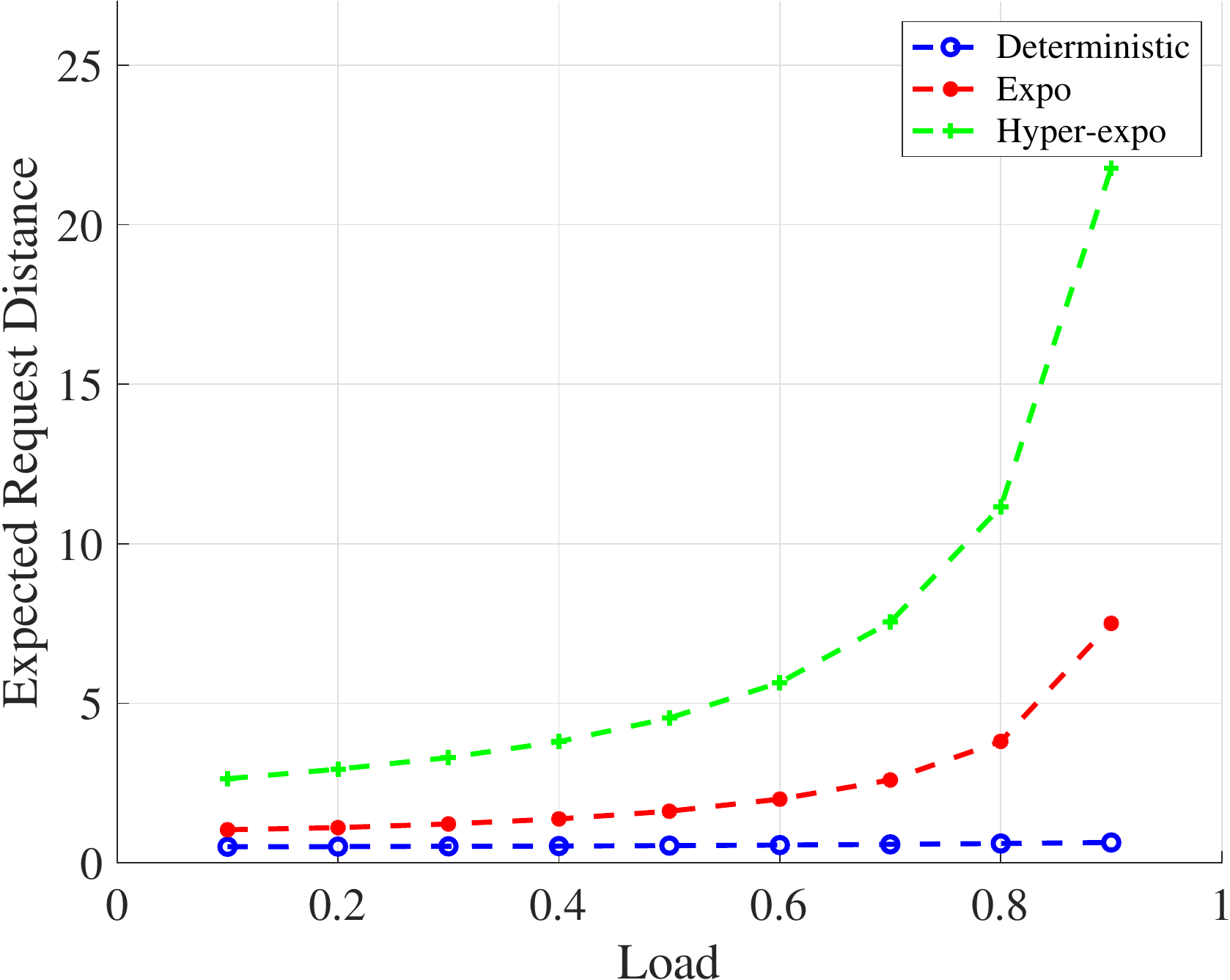}
\subcaption{}
\end{minipage}\hfill
\begin{minipage}{.32\textwidth}
\centering
\includegraphics[width=1\linewidth]{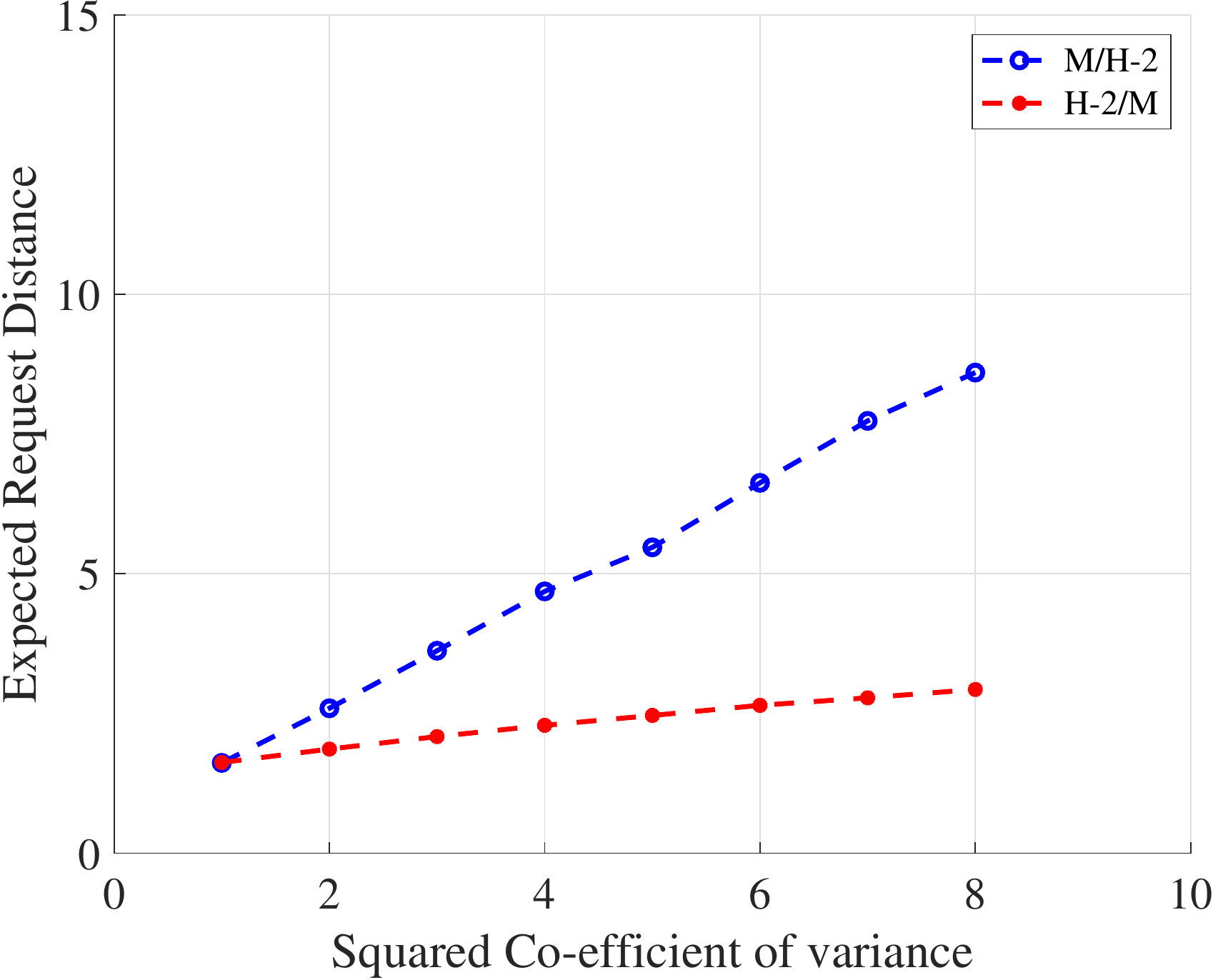}
\subcaption{}
\end{minipage}\hfill
\begin{minipage}{.32\textwidth}
\centering
\includegraphics[width=1\linewidth]{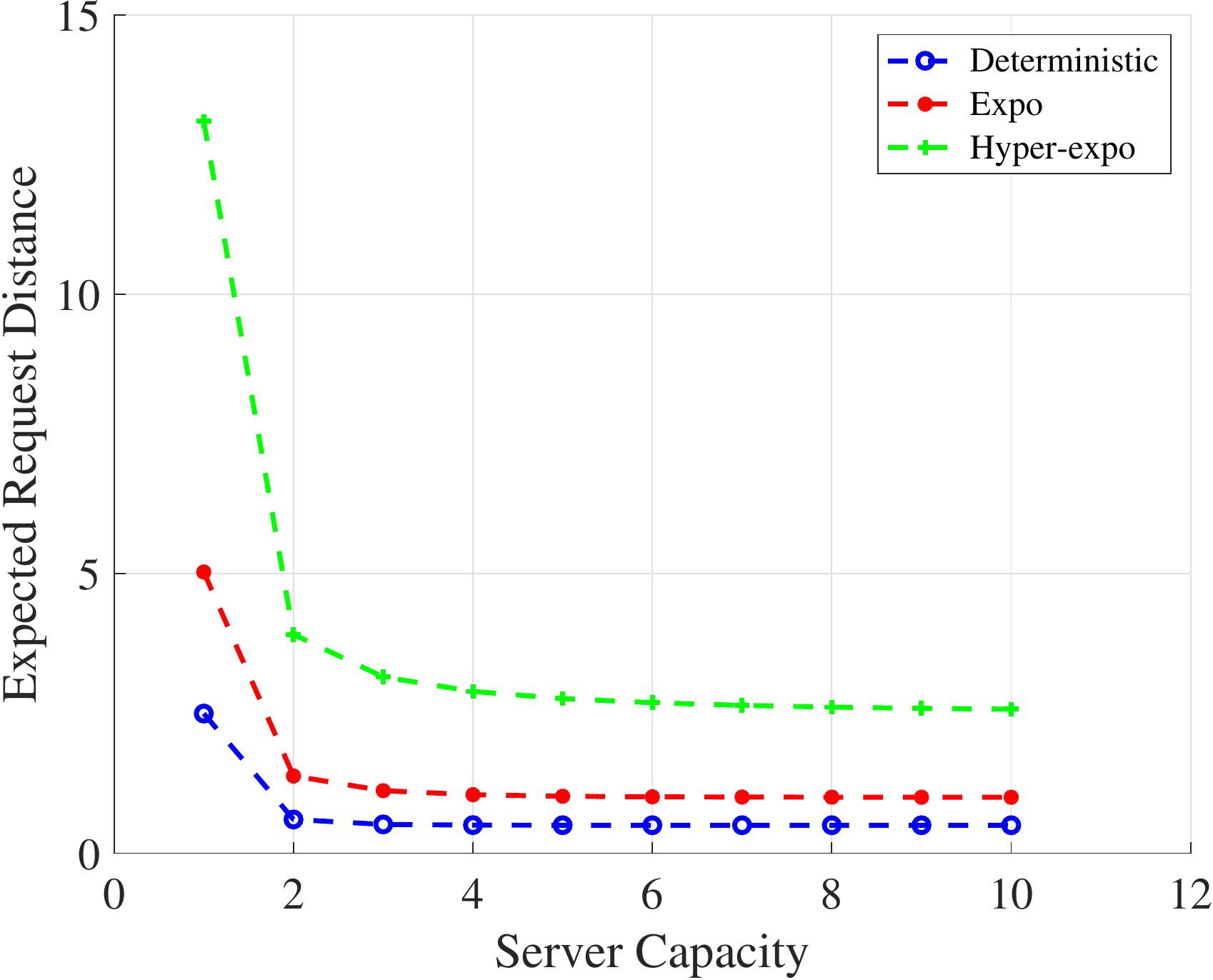}
\subcaption{}
\end{minipage}
\begin{minipage}{0.32\textwidth}
\includegraphics[width=1\textwidth]{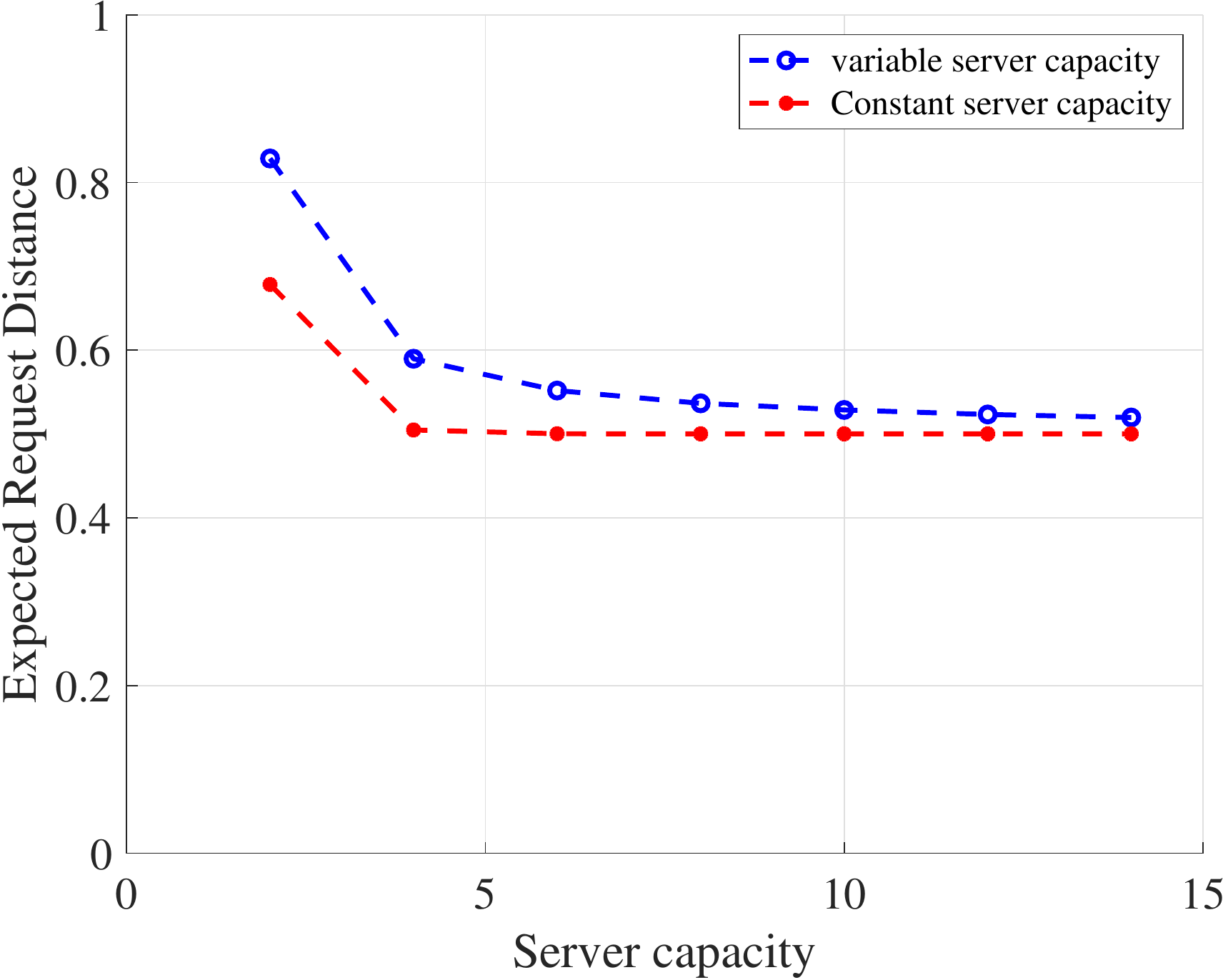}
\subcaption{}
\end{minipage}
\begin{minipage}{0.32\textwidth}
\includegraphics[width=1\textwidth]{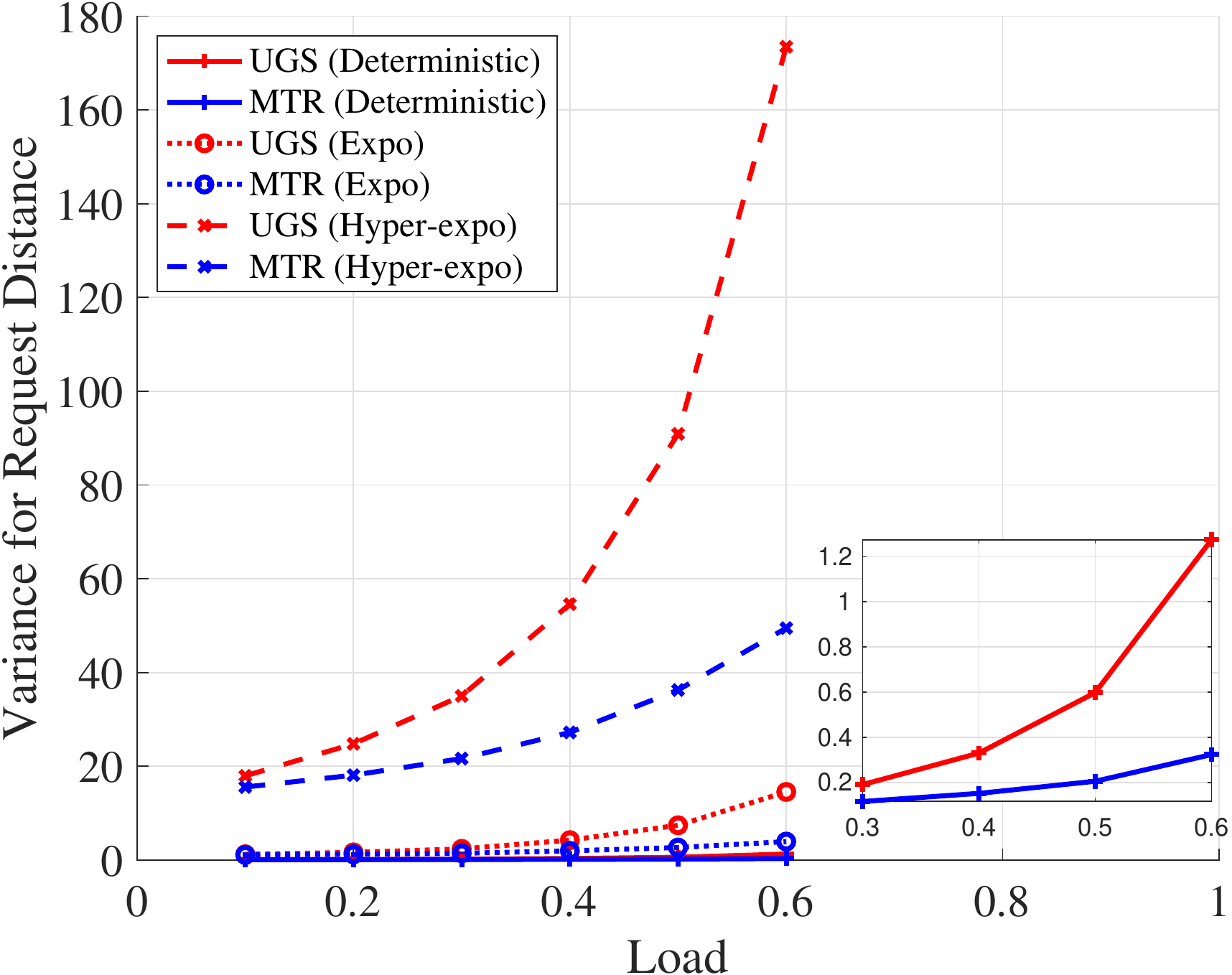}
\subcaption{}
\end{minipage}
\begin{minipage}{0.32\textwidth}
\includegraphics[width=1\textwidth]{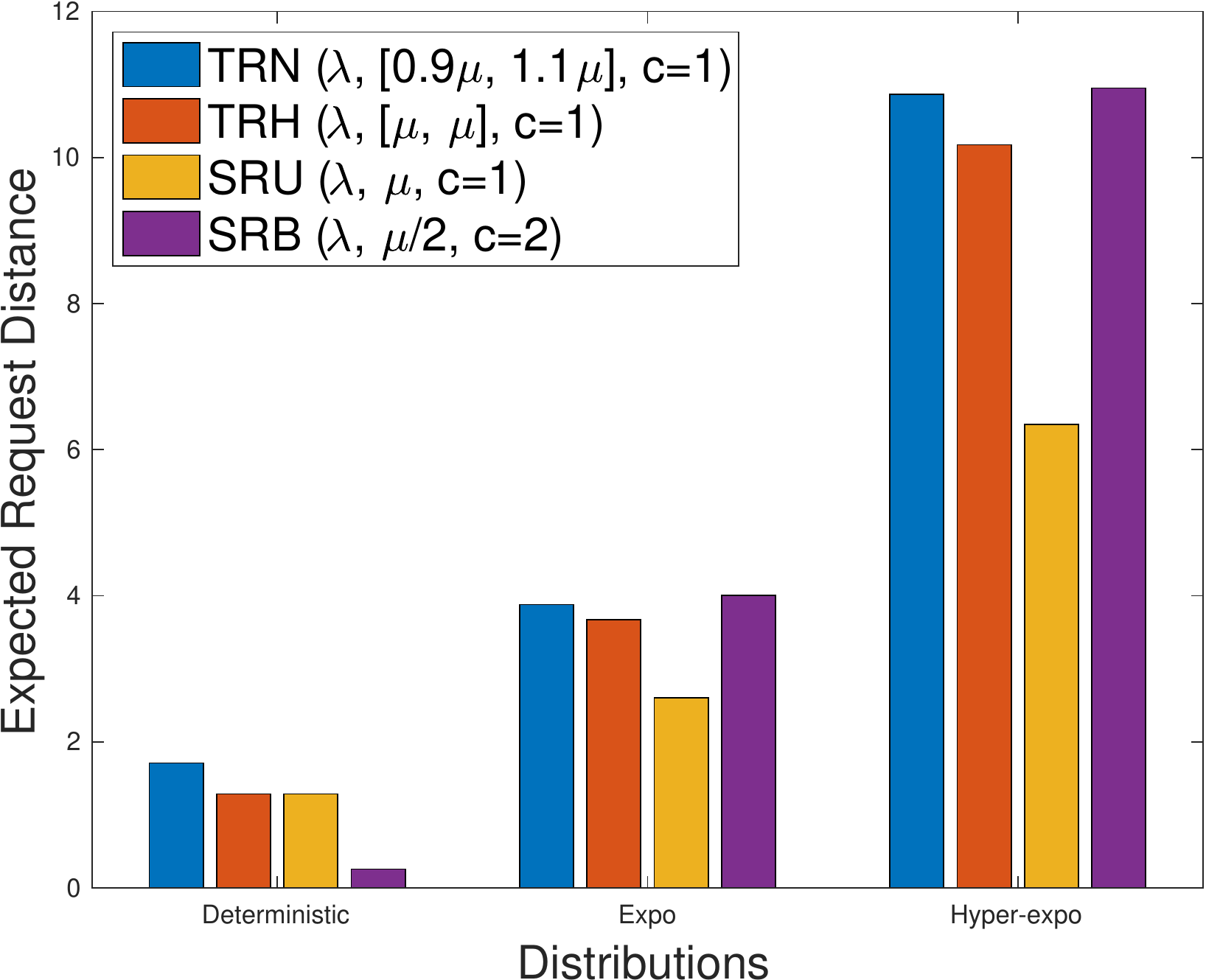}
\subcaption{}
\end{minipage}%\hfill
\caption{Sensitivity analysis of MTR/UGS policy. (a) Effect of load on expected request distance with $\pmb{c=2}$. (b) Effect of squared coefficient of variation on expected request distance with $\pmb{\lambda = \mu =1}$ and $\pmb{c=2}$. (c) Effect of server capacity on expected request distance with $\pmb{\rho = 0.8}$. Effect of variability in server capacity on expected request distance for (d) Deterministic distribution with $\pmb{\rho = 0.8}$ \np{(e) Effect of load on variance of request distance with $\pmb{c=2}$ across MTR and UGS. (f) Comparison of expected request distance under Two Resource Non-homogeneous (TRN), Two Resource Homogeneous (TRH), Single Resource Unit-service (SRU)  and Single Resource Bulk-service (SRB) scenario across various server distributions with $\pmb{\lambda = 0.6, \mu = 1}$.}}
\label{fig:sensitivity}
\end{figure} 
\subsubsection{Expected request distance vs. load}
We first study the effect of load ($= \lambda/c\mu$) on $\mathbb{E}[D]$ as shown in Figure \ref{fig:sensitivity}(a). Clearly $\mathbb{E}[D]$ increases as a function of load. Note that $H_2$ distribution exhibits the largest expected request distance and the deterministic distribution, the smallest because the servers are evenly spaced. While for $H_2,$  $c_v^2$ is larger than for the exponential distribution. Consequently servers are clustered, which increases $\mathbb{E}[D].$
\subsubsection{Expected request distance vs. squared co-efficient of variation}
We now examine how $c_v^2$ affects $\mathbb{E}[D]$ when $\rho$ is fixed. We compare  two systems: a general request  with Poisson distributed servers ($H_2$/M) and a  Poisson request with general distributed servers (M/$H_2$) where the general distribution is a $H_2$ distribution with the same set of parameters, i.e. we fix $\lambda = \mu =1$ with $c=2$. The results are shown in Figure \ref{fig:sensitivity}(b). Note that, when $c_v^2=1$ $H_2$ is an exponential distribution and both $H_2$/M and  M/$H_2$ are identical M/M/1 systems. As discussed in the previous graph, performance of both systems decreases with increase in $c_v^2$ due to increase in the variability of user and server placements. However, from Figure \ref{fig:sensitivity}(b) it is clear that performance is more sensitive to  server placement as compared to the corresponding user placement.
\subsubsection{Expected request distance vs. server capacity}
We now focus on how server capacity affects $\mathbb{E}[D]$ as shown in Figure \ref{fig:sensitivity}(c). We fix $\rho = 0.8$. With an increase in $c$, while keeping $\rho$ fixed, $\mathbb{E}[D]$ decreases. This is because queuing delay decreases. Note that $\mathbb{E}[D]$ gradually converge to a value with increase in server capacity. Theoretically, this can be explained by our discussion on uncapacitated allocation in Section \ref{sub:uncap2}. As $c\to\infty$ the contribution of queuing delay to $\mathbb{E}[D]$ vanishes and  $\mathbb{E}[D]$ becomes insensitive to $c.$
\subsubsection{Expected request distance vs. capacity moments}
We investigate the heterogeneous capacity scenario as discussed in Section \ref{sec:het}. Consider the plot shown in Figure \ref{fig:sensitivity}(d). We fix $\rho = 0.8$. For the variable server capacity curve we choose a value for server capacity for each server uniformly at random from the set $\{1,2,\ldots,2c\}.$ For the constant server capacity curve we deterministically assign server capacity $c$ to each server. We observe better performance for constant server capacity curve at lower values of $c$ under Deterministic distribution. Variability in constant server case is zero, thus explaining its better performance. Both the curves exhibit similar performance under $H_2$ distribution as well.
\np{
\subsubsection{Variance vs. load}
We now study the effect of load on variance of request distance as shown in Figure \ref{fig:sensitivity}(e). Clearly variance increases  as a function of load. Also note that UGS has a higher variance as compared to MTR across all values of load and across various inter-server distance distributions. Provable results exist (from queueing theory) that among all service disciplines  the variance of the request distance (or sojourn time in queueing terminology) is minimized under MTR (a FCFS based policy) for Poisson request arrivals and exponential inter-server distances (or service times) \cite{Kingman62}. However, these results do not generalize to other inter-server distance distributions in an exceptional service accessible batch queueing discipline. Our simulation based results in Figure \ref{fig:sensitivity}(e) thus bolster our observation in Remark \ref{remark:variance} mentioned in Section \ref{sec:queue}. Again, a deterministic equidistant placement of servers produce the least variance for request distance among all other placements.
}
\np{
\subsubsection{Comparison of two resource and single resource policies} 
We compare the performance of MTR under various two resource (TR) and single resource (SR) settings as shown in Figure \ref{fig:sensitivity}(f). For a two resource setting, denote $[\mu_1, \mu_2]$ as the server densities associated with resource types $1$ and $2$ respectively as described in Section \ref{sec:exttwo}. Denote $c$ as the server capacity associated with each resource type. We define  a Two Resource Homogeneous (TRH) system to be a two resource setting with $\mu_1 = \mu_2 = \mu$. We define  a Two Resource Non-homogeneous (TRN) system to be a two resource setting with $\mu_1 \ne \mu_2$.  For simulation purpose, we chose $\mu_1 = \mu+\epsilon$ and  $\mu_2 = \mu - \epsilon$ such that the effective server density remains $\mu.$ We also choose $c = 1.$ A Single Resource Unit-service (SRU) system is a single resource system with server density $\mu$ and $c = 1.$ A Single Resource Bulk-service (SRB) system is also a single resource system with server density $\mu/2$ and $c = 2.$ Note that the request density and effective server densities ($c\mu$) are same in all settings. From Figure \ref{fig:sensitivity}(f), it is clear that TRH performs better than TRN across all server distributions. This advocates for maintaining similar densities for each resource type in a two resource system. As expected, a deterministic equidistant placement of servers produce the least expected request distance for each system among all other choice of placements. SRB in deterministic server placement scenario performs the best among all other settings. However, it does not perform well with other server distributions. Also, note that, TRH has a higher expected request distance as compared to SRU across all server distributions. Thus Equation \eqref{eq:two-res-comp1} in Section \ref{sec:exttwo} holds true even under non-markovian setting.}
%\subsubsection{Two resource vs. single resource}
\subsection{Comparison of different allocation policies}
\begin{figure}
\centering
\begin{minipage}{0.25\textwidth}
\includegraphics[width=0.9\textwidth]{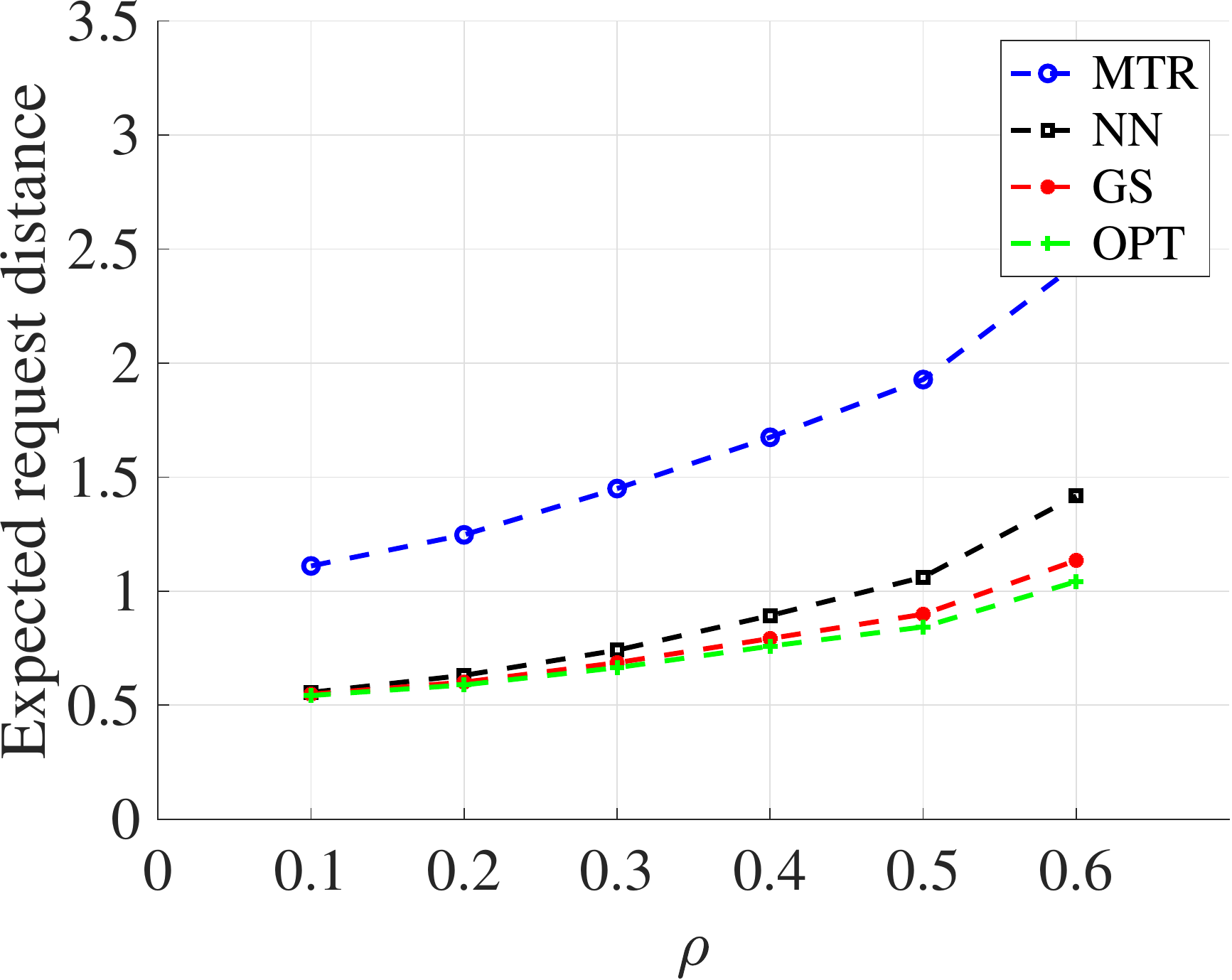}
\subcaption{}
\end{minipage}\hfill
\begin{minipage}{0.25\textwidth}
\includegraphics[width=0.9\textwidth]{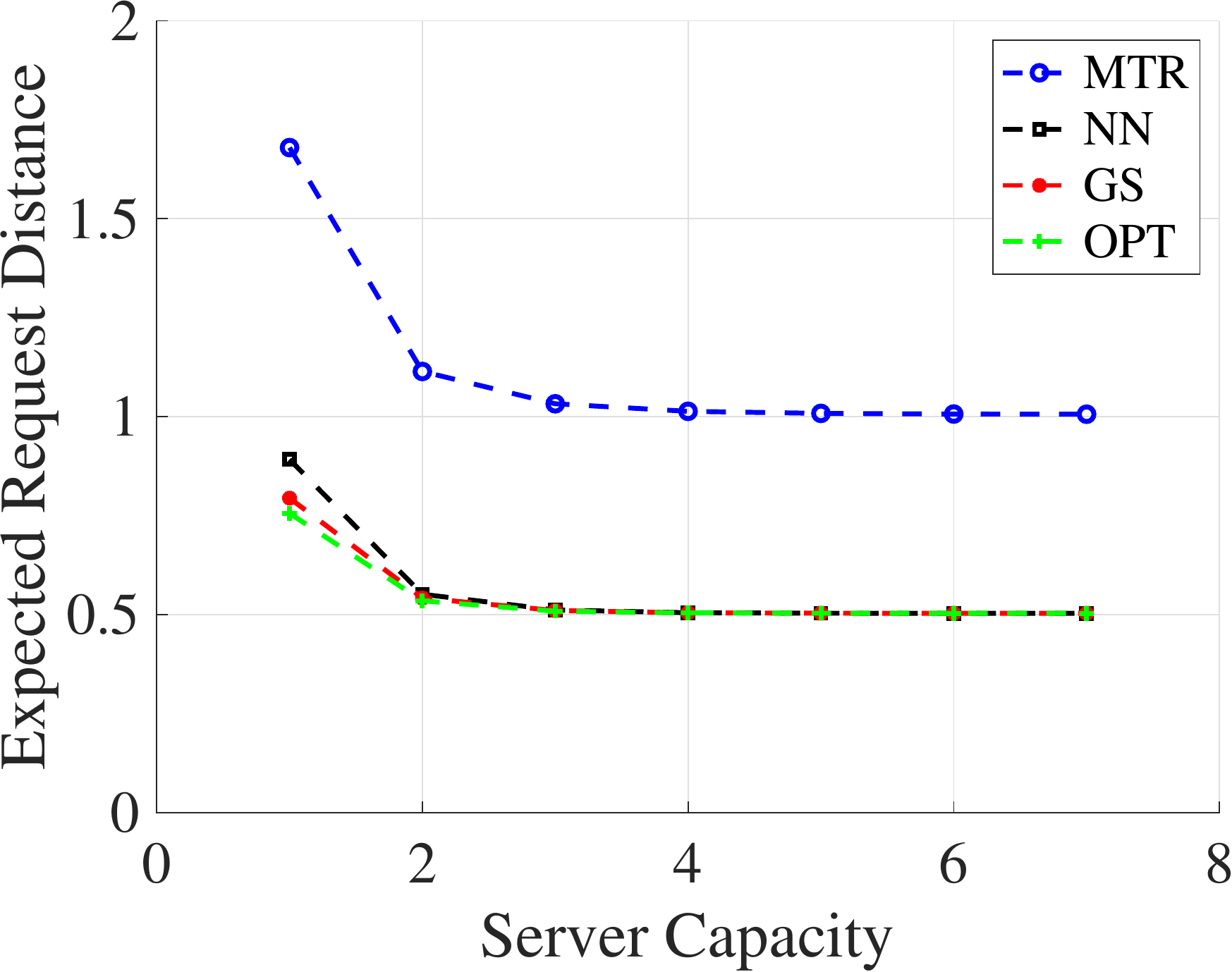}
\subcaption{}
\end{minipage}\hfill
\begin{minipage}{0.25\textwidth}
\includegraphics[width=0.9\textwidth]{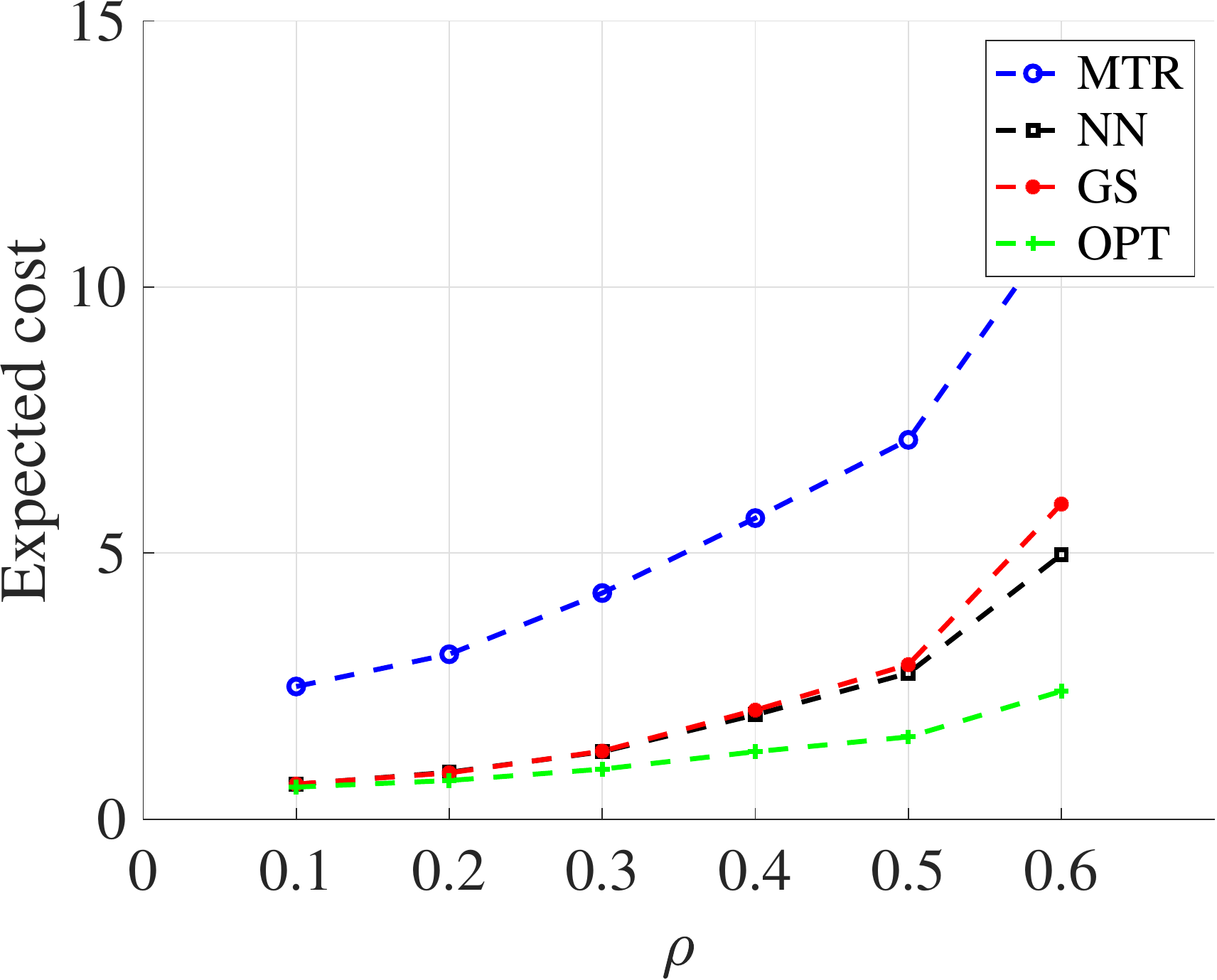}
\subcaption{}
\end{minipage}\hfill
\begin{minipage}{0.25\textwidth}
\includegraphics[width=0.9\textwidth]{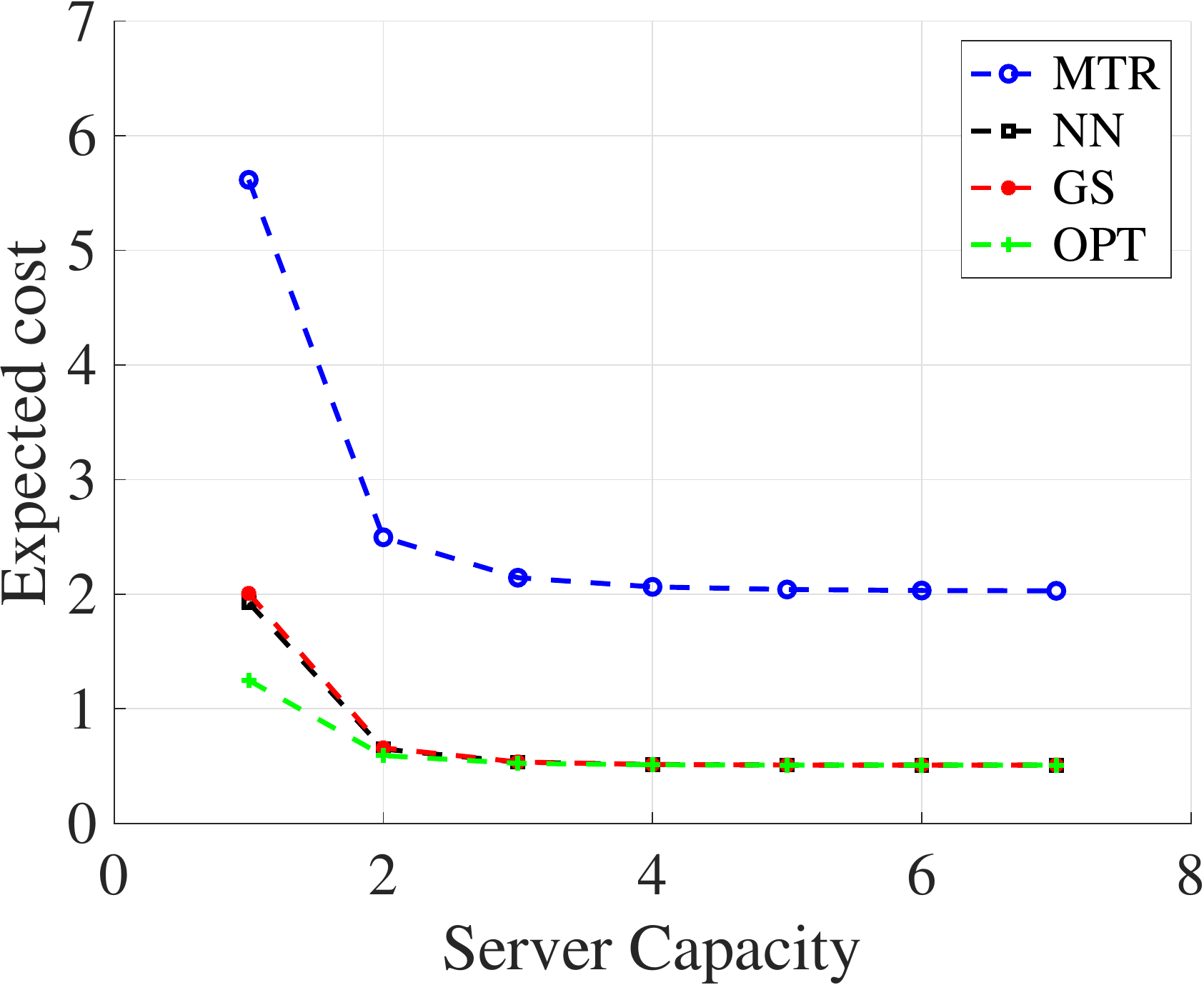}
\subcaption{}
\end{minipage}
\caption{Comparison of different allocation policies: (a) $\pmb{\rho}$ vs $\pmb{\mathbb{E}[D]}$ with $\pmb{c=1}$, (b) $\pmb{c}$ vs. $\pmb{\mathbb{E}[D]}$ with $\pmb{\rho = 0.4}$, \np{(c) $\pmb{\rho}$ vs $\pmb{\overline{T}}$ with $\pmb{\beta = 2, t_0 = 1, c=1}$ and (d) $\pmb{c}$ vs. $\pmb{\overline{T}}$ with $\pmb{\beta = 2, t_0 = 1, \rho=0.4}$.}}
\label{fig:heuristics}
\vspace{-0.16in}
\end{figure}

We consider the case in which both users and servers are distributed according to Poisson processes. From Figure \ref{fig:heuristics} (a), we observe that due to its directional nature MTR has a larger expected request distance compared to other policies while GS provides near optimal performance. At low loads i.e. when $\rho \ll 1$, the Nearest Neighbor policy policy performs similar to the optimal policy. But as $\rho \to 1,$ the NN policy perform worse.

In Figure \ref{fig:heuristics} (b), we compare the performance of allocation policies across different server capacities. The expected request distance decreases with increase in server capacities across all policies. NN,     GS and the optimal policy converge to the same value as $c$ gets higher. 

\np{We now consider the expected communication cost as the performance metric. We use a cost model described in Section \ref{sub:cost_model} with the parameter $\beta  =2$ and $t_0 = 1.$ From Figure \ref{fig:heuristics} (c), we observe that while at low loads i.e. when $\rho \ll 1$, GS and NN perform similar to the optimal policy, as $\rho$ increases both GS and NN perform worse. Note that, NN has a higher expected request distance than GS at high load as shown in Figure \ref{fig:heuristics} (a). However, the performance is reversed with $\beta = 2$, i.e. NN has a lower expected cost than GS at high load  as shown in Figure \ref{fig:heuristics} (c). This depicts the effect of $\beta$ on the performance of various allocation policies. In Figure \ref{fig:heuristics} (d), we observe that NN, GS and the optimal policy converge to the same value as $c$ gets higher.}

We observe similar trends in the case of deterministic inter-server distance distributions. However, under equal densities, all the policies produce smaller  expected request distance as compared to their Poisson counterpart.  This advocates for placing equidistant servers in a bidirectional system with Poisson distributed requesters to minimize expected request distance.

\np{
\section{Allocating Resources in 2D}\label{sec:2dto1d}
We now consider the case where requesters and servers are located on the two-dimensional plane, $\mathbb{R}^2$. The problem of minimizing the expected request distance can be solved by first constructing a complete $R\times S$ bipartite graph with edge weights $w_{r,s} = \|r,s\|_2, r\in R, s\in S$; followed by executing the Hungarian matching algorithm whose time complexity is $O(n^3)$, where $n=|R|+|S|$\footnote{While the specific case where the weights are Euclidean distances can be solved by Agrawal's algorithm in $O(n^{2+\epsilon})$ time, for a more general weight function the more expensive Hugarian algorithm is needed.}. In this section, we present a heuristic algorithm, which leverages the optimal dynamic programming scheme for one-dimensional inputs to solve the two-dimensional problem, has $O(n^2)$ time complexity, and empirically yields request distances within a constant factor of the optimal solution.

The key insight is to \emph{embed} the points denoting $R,S \subset \mathbb{R^2}$ into new locations in $\mathbb{R}$ such that the distances between a requester $r\in R$ and its $K$ nearest neighbors (servers) $s \in Neighbors(r)$ are approximately preserved. We observe that while distance-preserving or even low-distortion embeddings into a very low dimensional space like $\mathbb{R}$ typically do not exist, embeddings that preserve distances to $K$ nearest neighbors of the other node type (for not too large $K$) may be plausible. This is useful because preserving the nearest servers from $\mathbb{R^2}$ to $\mathbb{R}$ provides a reasonable opportunity for the Dynamic Programming algorithm outlined in Section \ref{sec:opt} to find good matchings.

We achieve the aforementioned embedding by adapting a non-linear dimensionality reduction method such as Locally Linear Embedding (LLE)~\cite{Roweis00}, which consists of the steps outlined below.

\begin{figure}
\centering
\frame{\includegraphics[width=0.5\textwidth]{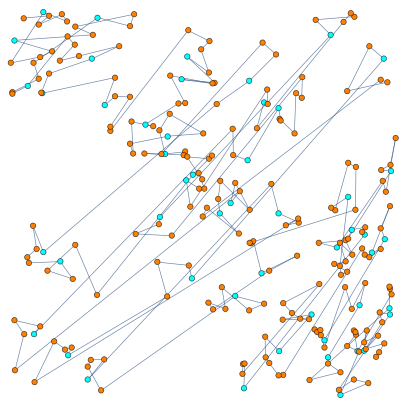}}
\caption{\label{fig:zigzag}Approximate nearest-K-distance preserving $\mathbb{R}^2 \to \mathbb{R}$ embedding ($K=25\%$)}
\end{figure}

\paragraph{Estimation of nearest neighbor weights.}
For each requester $r_i$, select $K$ nearest servers $s_{i1}, s_{i2}, \ldots, s_{iK}$\footnote{Note that in general, the nearest servers need not be the ones with the smallest Euclidean distance from $r_i$; they could be the ones with low \emph{costs} to $r_i$. However in this section, we equate the costs with the Euclidean distance.}. Estimate a set of weights $w_{i1}, w_{i2}, \ldots, w_{iK}$ such that the point $r_i$ can be reconstructed from $s_{i1}, s_{i2}, \ldots, s_{iK}: r_i = \sum_{j=1}^K w_{ij} s_{ij}$. Similarly for each server $s_i$, select $K'$ nearest requesters $r_{i1}, r_{i2}, \ldots, r_{iK'}$ and estimated weights such that point $s_i$ can be reconstructed from the nearest neighbor requester locations: $s_i = \sum_{j=1}^K w_{ij} r_{ij}$.

This can be achieved by minimizing the reconstruction error for each node $i \in R\times S$. Suppose $K$ is fixed for both requesters and servers. $W$ is an $n\times n$ matrix of weights where $n=|R|+|S|$ and the $i$-th row $W_i$, which corresponds to node $i$, has $K$ non-zero elements. The structure of $W$ is as follows:
\[
W = 
\begin{pmatrix}
0 & W_{RS} \\
W_{SR} & 0  
\end{pmatrix},
\]

\noindent where $W_{RS}$ and $W_{SR}$ are rectangular matrices with dimensions $|R|\times |S|$ and $|S|\times |R|$, respectively. If the two-dimensional coordinates of node $i$ are represented by vector $\mathbf{x_i}$, the reconstruction error can be defined by:
\begin{equation}
\epsilon(W_i) = \|\mathbf{x_i} - \sum_{j=1}^K W_{ij} \mathbf{x_j} \|^2.
\end{equation}

It was shown in \cite{Roweis00} that $\epsilon(W_i)$ is minimized when $W_i = (G_i + \lambda I)^{-1} \mathbf{1}$, where $G_i(j,k) = (\mathbf{x_j} - \mathbf{x_i}).(\mathbf{x_k} - \mathbf{x_i})$, $\mathbf{1}$ is the vector of all ones, and $\lambda$ is chosen such that the elements of $W_i$ add up to 1.

\paragraph{Computing optimal embedding in $\mathbb{R}$.}
LLE suggests that the relationships between the $n$ points in the higher dimensional space ($\mathbf{R}^2$ in our case) captured by the matrix $W$ should be approximately preserved in the lower dimensional space ($\mathbf{R}$ in our case). Then the optimal embedding $\mathbf{y} = \{y_1, y_2, \ldots, y_n \}, y_i \in \mathbb{R}$ can be found by solving the following quadratic optimization problem:
\begin{equation}
\begin{array}{l l}
\min & \sum_{i=1}^n (y_i - \sum_{j} W_{ij} y_j)^2 = \mathbf{y}^T (I-W)^T (I-W) \mathbf{y},\\
\textrm{subject to:} & \mathbf{y}^T \mathbf{y} = 1
\end{array}
\end{equation}

\noindent $(I-W)^T (I-W)$ is a positive semi-definite sparse matrix (since $K << n$) and the optimal solution to this ``eigenvalue" problem is given by the eigenvector corresponding to the smallest non-zero eigenvalue of $(I-W)^T (I-W)$~\cite{Roweis00}. Since we do not need to compute all the eigenvectors, the second smallest eigenvalue of a matrix can be computed efficiently without performing a matrix diagonalization using the Arnoldi algorithm in running time $O(n^2)$.

%%%%%%%%%%
%%%%%%%%%%

\paragraph{Using the $\mathbb{R}$-embedding for matching.}
After generating the embedding $\mathbf{y}$, we applied our Dynamic Programming Algorithm to compute the best resource allocation scheme. However, naive application of the algorithm led to high expected request distances. The reason behind this is illustrated in Figure \ref{fig:zigzag}, which visualizes an embedding computed for a given set of requesters and servers from $\mathbb{R}^2$ to $\mathbb{R}$; the zigzag lines denote the linear order imposed by the embedding. It is easy to see that it is quite possible that a pair of (requester (blue), server (orange)) nodes which are far away in $\mathbb{R}^2$ may be pretty close to each other in $\mathbb{R}$. Since our LLE-based scheme only tries to preserve close-by neighbors and does not explicitly attempt to repel nodes that are farther away in $\mathbb{R}^2$, such pairs of nodes could end up being embedded close to each other in $\mathbb{R}$. To circumvent this problem, we propose a heuristic scheme to adjust the embedding $\mathbf{y}$ such that whenever for a pair of nodes $\{i,j\}$ we have $\|\mathbf{x_i}-\mathbf{x_j}\| > \Delta$ but $\|y_k - y_{k+1}\| < \epsilon$, where $\mathbf{x_i}$ is mapped to $y_k$ and $\mathbf{x_j}$ is mapped to $y_{k+1}$, and $\Delta, \epsilon, \delta$ are configurable constants, we increase the distance between $y_k$ and $y_{k+1}$ by adding a large cumulative constant $c_{k+1} = c_k + \delta$ to $y_{k+1}$. This adjustment of $\mathbf{y}$ sequentially \emph{spreads out} the points in $\mathbb{R}$ toward the right and the Dynamic Programing Algorithm is then able to find good requester-server matchings.

%%%%%%%%%%
\begin{figure}
\centering
\frame{\includegraphics[width=0.4\textwidth]{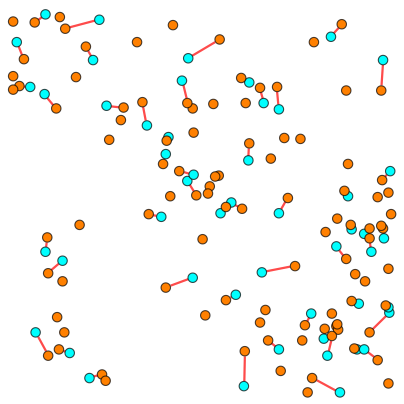}}
\frame{\includegraphics[width=0.4\textwidth]{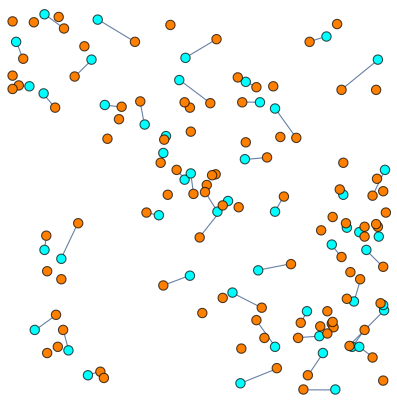}}
\caption{\label{fig:2dto1d}Matching for a clustered distribution: $|R|=50$ requesters are spread uniformly at random in a square $[0,1]\times[0,1]$ and each of the $|S|=100$ servers is located uniformly at random in a box of size $0.1$ around a randomly selected resource. (a) Optimum weighted bipartite matching in 2D ($\mathbb{E}[D]=0.0435$); (b) Approximate matching after 1D embedding ($\mathbb{E}[D]=0.06$) ($K=25\%)$}
\end{figure}

\begin{table*}[]
\center
\begin{tabular}{r  r  l  l}\hline
$|R|$ & $|S|$ & $\mathbb{E}[D]$ &  $\mathbb{E}[D]$ \\
& & Optimum & $\mathbb{R}^2 \to \mathbb{R}$ \\
\hline
$200$ & $400$ & $0.023$ & $0.05$\\
$200$ & $500$ & $0.0205$ & $0.041$\\
$200$ & $600$ & $0.018$ & $0.037$\\
$200$ & $700$ & $0.0166$ & $0.0336$\\
$200$ & $800$ & $0.0158$ & $0.0326$\\
$200$ & $900$ & $0.0151$ & $0.02989$\\
$200$ & $1000$ & $0.014$ & $0.02985$\\
\hline
\end{tabular}
\caption{\label{tab:2dto1d}Matching in larger networks: requesters are spread uniformly at random in a square $[0,1]\times[0,1]$ and each server is located uniformly at random in a box of size $0.1$ around a randomly selected resource.}
\end{table*}
%%%%%%%%%%

Figure \ref{fig:2dto1d} shows a comparison between an optimum matching and an approximate matching constructed by the embedding methods proposed in this section. Table \ref{tab:2dto1d} shows results for the case when $|S|$ is varied for a fixed $|R|$. We can observe that the $\mathbb{R}^2 \to \mathbb{R}$ embedding approach yields a solution which is empirically close to $2 \times OPT$. Given that this procedure has a lower time complexity $O(n^2)$\footnote{Both the embedding process and the Dynamic Programming algorithm have time complexity $O(n^2)$.} than the usual $O(n^3)$ for Hungarian algorithm, it could be practically useful for large resource allocation problems. 
%%%%%%%%%%
}
\section{Conclusion}\label{sec:con}
%%%!TEX PS-program = pdflatexmk
%%!TEX root = 1d_sigmetrics19.tex

We introduced a queuing theoretic model for analyzing the behavior of unidirectional policies to allocate tasks to servers on the real line. We showed the equivalence of UGS and MTR  w.r.t the expected request distance and presented results associated with the case when either requesters or servers were Poisson distributed. In this context, we analyzed a new queueing theoretic model: ESABQ, not previously studied in queueing literature. We also proposed a dynamic programming based algorithm to obtain an optimal allocation policy in a bi-directional system. We performed sensitivity analysis for unidirectional system and compared the  performance of various greedy allocation strategies along with the unidirectional policies to that of optimal policy. \np{We proposed a heuristic based approximate solution to the optimal assignment problem for the two-dimensional scenario.} Going further, we aim to extend our analysis for unidirectional policies to a two-dimensional geographic region.

\section{Acknowledgment} 
This research was sponsored by the U.S. Army Research Laboratory and the U.K. Defence Science and Technology Laboratory under Agreement Number W911NF-16-3-0001 and by the NSF under grant NSF CNS-1617437. The views and conclusions contained in this document are those of the authors and should not be interpreted as representing the official policies, either expressed or implied, of the U.S. Army Research Laboratory, the U.S. Government, the U.K. Defence Science and Technology Laboratory. This document does not contain technology or technical data controlled under either the U.S. International Traffic in Arms Regulations or the U.S. Export Administration Regulations.

\bibliographystyle{cas-model2-names}
\bibliography{refs} 

\section{Appendix}\label{appendix}
%%%!TEX PS-program = pdflatexmk
%%!TEX root = 1d_sigmetrics19.tex

\subsection{Derivation of $F_Z$ for various inter-server distance distrbutions}\label{app-Ge-distribs}
\subsubsection{$\pmb{F_X(x)\sim \text{Exponential}(\mu)}$}
In this case, both $X$ and $Y$ are exponentially distributed. Thus the difference distribution is given by
\begin{align}
D_{XY}(x) = 1 - \frac{\lambda}{\lambda+\mu}e^{-\mu x}, \text{when} \;x\ge0\label{eq:diff_exp}
\end{align}

\noindent Combining \eqref{eq:first_busy} and \eqref{eq:diff_exp}, we get  
\begin{align}
F_Z(x) &= \frac{1 - \frac{\lambda}{\lambda+\mu}e^{-\mu x} - 1 + \frac{\lambda}{\lambda+\mu}}{\frac{\lambda}{\lambda+\mu}} = 1 - e^{-\mu x}.
\end{align}

\noindent Thus we obtain $F_X(x) = F_Z(x)\sim \text{Exponential}(\mu).$ 
%Putting $\alpha_X = \alpha_Z = 1/\mu,\; \sigma_b^2 = \sigma_e^2 = 1/\mu^2$ and $\lambda/\mu = \rho$ in Equation \eqref{eq:qlength} and \eqref{eq:sjtime}, we get
%\begin{align}
%\overline{Q} = \frac{\rho}{1-\rho}, \quad \overline{d} = \frac{1}{\mu-\lambda}
%\end{align}

\subsubsection{$\pmb{F_X(x)\sim \text{Uniform}(0,b)}$} \label{sub:uniform}
The c.d.f. for uniform distribution is 
\begin{equation}\label{eq:utility}
    F_X(x)=
\begin{cases}
     \frac{x}{b},& 0 \le x \le b;\\
     1,& x > b,
\end{cases}
\end{equation}
where $b$ is the uniform parameter. Thus we have 
\begin{align}
D_{XY}(x) = \int_0^{\infty} F_X(x+y)\lambda e^{-\lambda y} dy = \left[\int_0^{b-x} \frac{x+y}{b} \lambda e^{-\lambda y} dy\right] + \left[\int_{b-x}^{\infty} 1 \;\lambda e^{-\lambda y} dy\right] =\frac{\lambda x - e^{-\lambda(b-x)}+e^{-\lambda b}}{b\lambda+e^{-\lambda b}-1}\label{eq:unf1}
\end{align}

\noindent Taking $k_\lambda = 1/(b\lambda+e^{-\lambda b}-1)$ and using Equation \eqref{eq:first_busy} we have 
\begin{align}
F_Z(x) = k_\lambda\left[\lambda x + e^{-\lambda b}(1-e^{\lambda x})\right] \quad \hbox{and} \quad f_Z(x) = \lambda k_\lambda\left[1-e^{-\lambda b}e^{\lambda x})\right].
\end{align}

\noindent Taking $\alpha_Z = \int_{0}^{b}xf_Z(x)dx$ and $\sigma_Z^2 = [\int_{0}^{b}x^2f_Z(x)dx] - \alpha_Z^2$ we have 
\begin{align}
\alpha_Z = \frac{b^2\lambda}{2}k_\lambda - \frac{1}{\lambda},\quad \sigma^2_Z = \frac{b^3\lambda}{3}k_\lambda - \frac{k_\lambda}{\lambda}\left[b(b\lambda-2)+\frac{2}{\lambda}(1-e^{-\lambda b})\right]- \alpha_Z^2, \quad\alpha_X = b/2, \quad \sigma^2_X = b^2/12.
\end{align}
%We can substitute the values of $\alpha_Z, \alpha_X, \sigma^2_Z, \sigma^2_X$ in Equation \eqref{eq:sjtime} and obtain $\overline{d}.$ 

\subsubsection{$\pmb{F_X(x)\sim \text{Deterministic}(d_0)}$} \label{sub:deterministic}
Another interesting scenario is when servers are equally spaced at a distance $d_0$ from each other i.e. when $F_X(x)\sim \text{Deterministic}(d_0).$ The c.d.f. for deterministic distribution is 
\begin{equation}\label{eq:utility}
    F_X(x)=
\begin{cases}
     0,& 0 \le x < d_0;\\
     1,& x \ge  d_0,
\end{cases}
\end{equation}
where $d_0$ is the deterministic  parameter. A similar analysis as that of uniform distribution yields

{\footnotesize
\begin{align}
F_Z(x) &= c_\lambda\left[e^{-\lambda (d_0-x)}-e^{\lambda d_0}\right]; \;f_Z(x) = \lambda c_\lambda\left[e^{-\lambda (d_0-x)}\right],
\end{align}
}
\noindent where $c_{\lambda} = 1/(1-e^{-\lambda d_0}).$ Thus we have 
\begin{align}
\alpha_Z = c_\lambda\frac{d_0\lambda+e^{-\lambda d_0}-1}{\lambda},\quad\sigma^2_Z = \frac{c_\lambda}{\lambda}\left[d_0(d_0\lambda-2)+\frac{2}{\lambda}(1-e^{-\lambda d_0})\right]- \alpha_Z^2,\quad\alpha_X = d_0, \quad \sigma^2_X = 0.
\end{align}

\subsection{ESABQ under PRGS}
\subsubsection{Chapman-Kolmorogov equations}\label{sub:ck}
Let us write the Chapman-Kolmorogov equations for the Markov chain $\{(L(t), R(t), I(t)),\, t\geq 0\}$ defined in Section \ref{sub:ql_prgs}.

For $n\geq 2$ and $x>0$ we get
{\footnotesize
\begin{align}
\frac{\partial}{\partial t}p_t(n,x;1) &=\frac{\partial}{\partial x}p_t(n,x;1)-\lambda p_t(n,x;1)- \frac{\partial}{\partial x} p_t(n,0;1)+\lambda p_t(n-1,x;1)\nonumber\\
\frac{\partial}{\partial t} p_t(n,x;2) &=\frac{\partial}{\partial x} p_t(n,x;2)-\lambda p_t(n,x;2) - \frac{\partial}{\partial x} p_t(n,0;2) +\lambda p_t(n-1,x;2)+ F_X(x) \frac{\partial}{\partial x} p_t(n+c,0;1) + F_X(x) \frac{\partial}{\partial x} p_t(n+c,0;2)\nonumber.
\end{align}
}
Letting $t\to\infty$ yields
\begin{align}
0&=\frac{\partial}{\partial x} p(n,x;1)-\lambda p(n,x;1)- \frac{\partial}{\partial x} p(n,0;1)+\lambda p(n-1,x;1) \label{eq:1} \\
0&=\frac{\partial}{\partial x} p(n,x;2)-\lambda p(n,x;2) - \frac{\partial}{\partial x}p(n,0;2)+ \lambda p(n-1,x;2) + F_X(x) \frac{\partial}{\partial x} p(n+c,0;1)+ F_X(x) \frac{\partial}{\partial x} p(n+c,0;2).\label{eq:2}
\end{align}
For $n=1$, $x>0$
\begin{align}
\frac{\partial}{\partial t} p_t(1,x;1)&=\frac{\partial}{\partial x} p_t(1,x;1)-\lambda p_t(1,x;1)-\frac{\partial}{\partial x}p_t(1,0;1)+\lambda p_t(0)F_Z(x)\nonumber\\
\frac{\partial}{\partial t} p_t(1,x;2)&=\frac{\partial}{\partial x} p_t(1,x;2)-\lambda p_t(1,x;2)-\frac{\partial}{\partial x}p_t(1,0;2)+F_X(x) \frac{\partial}{\partial x} p(1+c,0;1)+ F_X(x)\frac{\partial}{\partial x} p_t(1+c,0;2).\nonumber
\end{align}
Letting $t\to\infty$ yields
\begin{align}
0&=\frac{\partial}{\partial x} p(1,x;1)-\lambda p(1,x;1)-\frac{\partial}{\partial x}p(1,0;1)+\lambda p(0)F_Z(x) \label{eq:3}\\
0&=\frac{\partial}{\partial x} p(1,x;2)-\lambda p(1,x;2)-\frac{\partial}{\partial x}p(1,0;2) + F_X(x) \left(\frac{\partial}{\partial x} p(1+c,0;1)+ \frac{\partial}{\partial x} p(1+c,0;2)\right), x>0.
\label{eq:4}
\end{align}
We can collect the results in (\ref{eq:1})-(\ref{eq:4}) as follows: for $n\geq 1$, $x>0$,
\begin{align}
0&=\frac{\partial}{\partial x} p(n,x;1)-\lambda p(n,x;1)- \frac{\partial}{\partial x} p(n,0;1)+ \lambda p(n-1,x;1){\bf 1}(n\geq 2)+\lambda p(0)F_Z(x){\bf 1}(n=1) \label{eq:20}\\
0&=\frac{\partial}{\partial x} p(n,x;2)-\lambda p(n,x;2) - \frac{\partial}{\partial x}p(n,0;2) + \lambda p(n-1,x;2){\bf 1}(n\geq 2)+ F_X(x)\left( \frac{\partial}{\partial x} p(n+c,0;1)+ \frac{\partial}{\partial x} p(n+c,0;2)\right). \label{eq:20-bis}
\end{align}

Define $g(n,x)=p(n,x;1)+p(n,x;2)$ for $n\geq 1$, $x>0$. Summing (\ref{eq:20}) and (\ref{eq:20-bis}) gives
\begin{align}
0=&\frac{\partial}{\partial x} g(n,x)-\lambda g(n,x)- \frac{\partial}{\partial x} g(n,0)+
\lambda g(n-1,x){\bf 1}(n\geq 2)+\lambda p(0)F_Z(x){\bf 1}(n=1)+ F_X(x) \frac{\partial}{\partial x} g(n+c,0), \nonumber\\& \forall n\geq 1, x>0.
\label{eq:211}
\end{align}
\subsubsection{Multiplicity of roots of $z^c - F^*_X(\lambda(1-z))$}\label{sub:rem1}
Assume that $F_X(x)=1-e^{-\mu x}$ (regular batch service times are exponentially distributed). Then,
\[
z^c - F^*_X(\lambda(1-z))=\frac{-\rho z^{c+1} +(1+\rho)z^c-1}{1+\rho(1-z)}.
\]
$z^c - F^*_X(\lambda(1-z))=0$ for $|z| \leq 1$ iff $Q(z):=-\rho z^{c+1} +(1+\rho)z^c-1=0$. 
The derivative of $Q(z)$ is $Q'(z)=z^{c-1}((1+\rho)c -\rho(c+1)z)$. It vanishes at $z=0$ and at $z=\frac{(1+\rho)c}{\rho(c+1)}>1$ under the stability condition $\rho<c$. Since $z=0$ is not a zero of $Q(z)$, we conclude that all zeros of $z^c - F^*_X(\lambda(1-z))$  in $\{|z|\leq 1\}$ have multiplicity one.

More generally, it is shown in \cite{bailey54} that all zeros of $z^c-F^*_X(\lambda(1-z))$ in $\{|z|\leq 1\}$ have multiplicity one if $F_X$ is a $\chi^2$-distribution with an even number $2p$ of degrees of freedom, i.e. $dF_X(x)= \frac{a^p}{\Gamma(p)}x^{p-1}e^{-ax} dx$ so that $1/\mu= p/a$.

\subsubsection{Roots of $A(z)$}\label{app-mg1-rouche}
Define $A(z)=F^*_X(\theta(z))$. If $A(z)$ has a radius of convergence larger than one (i.e. $A(z)$ is analytic for $|z|\leq \nu$ with $\nu>1$) and $A'(1)<c\in \{1,2,\ldots\}$  a direct application of Rouch\'e's theorem shows that $z^c - A(z)$ has $c$ zeros in the unit disk $\{|z|\leq 1\}$(see e.g. \cite{Adan05}). If the radius of convergence of $A(z)$ is one, $A(z)$ is differentiable at $z=1$, $A'(1)<c$,  and $z^c-A(z)$ has period $p$, then $z^c -A(z)$ has exactly $p\leq s$ zeros on the unit circle and $s-p$ zeros inside the unit disk $\{|z|<1\}$ \cite[Theorem 3.2]{Adan05}. Assume that the stability condition $\frac{d}{dz}A(z)|_{z=1}=\rho<c$ holds. $A(z)$ has a radius of convergence larger than one when $F_X$ is the exponential/Erlang/Gamma/ etc probability distributions.

\subsubsection{Special Cases}\label{app:verify}
One easily checks that (\ref{value:Nz}) gives the classical Pollaczek-Khinchin formula for the M/G/1 queue when $c=1$ and $F_Z=F_X$.

Let now $c=1$ in (\ref{value:Nz}) with $F_Z$ and $F_X$ arbitrary. Then,
\[
N(z)=\frac{a_1}{\lambda} \left(\frac{F^*_X(\lambda(1-z))-zF^*_Z(\lambda(1-z))}{F^*_X(\lambda(1-z))-z}\right)
\]
gives the $z$-transform of the stationary number of customers  in a M/G/1 queue with an exceptional first customer in a busy period.  The constant $a_1/\lambda$ is obtained from the identity $N(1)=1$ by application of L'Hopital's rule, which  gives\footnote{Note that we retrieve this result by letting $c=1$ in  (\ref{eq:c}).} $a_1/\lambda=(1-\rho)/(1-\rho+\rho_Z)$. This gives
\[
N(z)=\frac{1-\rho}{1-\rho+\rho_Z} \left(\frac{F^*_X(\lambda(1-z))-zF^*_Z(\lambda(1-z))}{F^*_X(\lambda(1-z))-z}\right).
\]
The above is a known result  \cite{Welch64}. 

If $F^*_Z=F^*_X:=F^*$, then 
\[
N(z)
=\frac{ \sum_{k=1}^c a_k\left[(z^c-z^k)z^c+((1-z^c)z-(1-z^k)) F^*(\theta(z))\right]}{\theta(z) (z^c-F^*(\theta(z))}.
\]

%When all zeros of $z^c-F^*_X(\theta(z))$ have multiplicity one, the constants $a_1, \ldots, a_c$ satisfy the system of $c$ linear equations
%(with $\rho:=\rho_Z=\rho$)
%\begin{eqnarray*}
%\sum_{k=1}^c a_k(\xi_ i^c- \xi_i^{k-1})&=&0, \,\, i=1,\ldots, c-1\\
%\sum_{k=1}^c a_k(c-\rho(c-k))&=& \lambda (c-\rho).
%\end{eqnarray*}

\subsection{Results for Section \ref{sec:het}}\label{app-het}
\begin{figure}[]
\centering
\includegraphics[width=0.6\columnwidth]{}
\caption{\label{fig:nocross}Uncrossing an assignment either reduces request distance or keeps it unchanged.}
\end{figure}
\subsubsection{Derivation of $v_1(z)$ and $v_2(z)$}\label{app-het-rouche}
$v_1(z)$ in \eqref{eq:lhs_var_c} can further be simplified to
{\footnotesize
\begin{align}
v_1(z) &= \sum\limits_{l=0}^\infty z^l \sum\limits_{m=0}^{l}\pi_m\sum\limits_{i=0}^{c}k_{i+l-m}p_i = \sum\limits_{m=0}^\infty \pi_m\sum\limits_{l\ge m} z^l\sum\limits_{i=0}^{c}k_{i+l-m}p_i= \sum\limits_{m=0}^\infty \pi_mz^m\sum\limits_{l\ge m} z^{l-m}\sum\limits_{i=0}^{c}k_{i+l-m}p_i = \sum\limits_{m=0}^\infty \pi_mz^m\sum\limits_{j=0}^{\infty} z^{j}\sum\limits_{i=0}^{c}k_{i+j}p_i \nonumber\\
&= N(z)\sum\limits_{i=0}^{c}p_iz^{-i}\sum\limits_{j=0}^{\infty} z^{i+j}k_{i+j}= N(z)\sum\limits_{i=0}^{c}p_iz^{-i}\bigg[K(z)-\sum\limits_{j=0}^{i}k_jz^j+k_iz^i\bigg]= N(z)\bigg\{\sum\limits_{i=0}^{c}p_iz^{-i}\bigg[K(z)-\sum\limits_{j=0}^{i}k_jz^j\bigg]+\sum\limits_{i=0}^{c}k_iz^i\bigg\}.
\end{align} 
}

Similarly $v_2(z)$ in \eqref{eq:rhs_var_c} can further be simplified to

{\footnotesize
\begin{align}
&v_2(z) = \sum\limits_{l=0}^\infty z^l \sum\limits_{m=l+1}^{c+l}\pi_m\sum\limits_{i=m-l}^{c}k_{i+l-m}p_i = \bigg[\sum\limits_{l=0}^\infty z^l \sum\limits_{m=l}^{c+l}\pi_m\sum\limits_{i=m-l}^{c}k_{i+l-m}p_i\bigg] - N(z)\sum\limits_{i=0}^{c}k_iz^i = \bigg[\sum\limits_{m=0}^{c}z^{-m}\sum\limits_{i=m}^{c}k_{i-m}p_i\sum\limits_{l=0}^\infty z^{m+l} \pi_{m+l}\bigg] - N(z)\sum\limits_{i=0}^{c}k_iz^i\nonumber\\
&= \bigg[\sum\limits_{m=0}^{c}z^{-m}\sum\limits_{i=m}^{c}k_{i-m}p_i\bigg\{N(z)-\sum\limits_{j=0}^{m-1}\pi_jz^j\bigg\}\bigg] - N(z)\sum\limits_{i=0}^{c}k_iz^i.
\end{align} 
}

\subsubsection{Roots of $A(z)$}\label{app-het-rouche}
Denote $A(z) = K(z)\sum_{i=0}^{c}p_{c-i}z^{i}.$ Clearly, $A(z)$ is also a probability generating function ({\it pgf}) for the non-negative random variable $V+\tilde{\mathcal C}$ where $\tilde{\mathcal C}$ is a random variable on $\{0,\ldots,c-1\}$ with distribution $\texttt{Pr}(\tilde{\mathcal C}=j) = p_{c-j}, \forall j\in\{0,1,\ldots,c-1\}.$ Also we have
\begin{align}
A^{\prime}(1) &= K^{\prime}(1)+\sum_{i=0}^{c}p_{c-i}i = \rho+\sum_{i=1}^{c}p_{i}(c-i) = \rho+\sum_{i=1}^{c}p_{i}c-\sum_{i=1}^{c}ip_i = \rho+c-\overline{\mathcal C}\nonumber
\end{align}
From our stability condition we know that $\rho < \overline{\mathcal C}.$ Thus $A^{\prime}(1) < c.$ Since $A(z)$ is a pgf and $A^{\prime}(1) < c$, by applying the arguments from \cite[Theorem 3.2]{Adan05} we conclude that the denominator of equation \eqref{eq:het_l} has $c-1$ zeros inside and one on the unit circle, $|z|= 1.$

\subsection{Proof of Lemma \ref{lem:nocross}}\label{app-lemma}
%%%%%%%%
%\begin{lemma}
%In an optimal solution, $\eta^*,$ to the problem of matching users at $r_1\leq r_2\leq \ldots\leq r_{|R|}$ to servers at $s_1\leq s_2\leq \ldots\leq s_{|S|}$, where $|S| \geq |R|$, there do not exist indices $i,j$ such that $\eta^*(i)>\eta^*(i')$ when $i' > i$.
\begin{proof}
It can be observed that if such a 4-tuple $(i,j,i',j')$ exists, the cost can be reduced by assigning $i$ to $j'$ and $i'$ to $j$, hence we arrive at a contradiction. To show this, consider the six possible cases of relative ordering between $r_i, r_{i'}, s_j, s_{j'}$ which obey $r_i < r_{i'}$ and $s_j > s_{j'}$. We give a pictorial proof in Figure \ref{fig:nocross}\footnote{For ease of exposition, the requesters and servers are shown to be located along two separate horizontal lines, although they are located on the same real-line.}. It is easy to see that in each of the cases, the request distance of the \emph{uncrossed} assignment is either smaller or remains unchanged.
\end{proof}

\end{document}